\numberwithin{equation}{section} 
\newtheorem{theorem}{Theorem}[section]
\newtheorem{proposition}[theorem]{Proposition}
\newtheorem{lemma}[theorem]{Lemma}
\newtheorem{corollary}[theorem]{Corollary}
\newtheorem{remark}[theorem]{Remark}
\newtheorem{definition}[theorem]{Definition}
\newtheorem{assumption}[theorem]{Assumption}
\newcommand{\set}[1]{\left\{ #1 \right\}}
\def\C{{\mathbb C}}
\def\CC{{\mathbb C}}
\def\bbI{{\mathbb I}}
\def\KK{{\mathbb K}}
\def\N{{\mathbb N}}
\def\R{{\mathbb R}}
\def\RR{{\mathbb R}}
\def\SS{{\mathbb S}}
\def\TT{{\mathbb T}}
\def\Z{{\mathbb Z}}
\def\re{{\mathrm{e}}} 
\def\ri{{\rm i}} 
\def\cA{{\mathcal A}}
\def\cB{{\mathcal B}}
\def\cC{{\mathcal C}}
\def\cG{{\mathcal G}}
\def\cH{{\mathcal H}}
\def\cM{{\mathcal M}}
\def\cS{{\mathcal S}}
\def\cW{{\mathcal W}}
\def\rA{{\rm A}}
\def\rAIII{{\rm AIII}}
\def\rAI{{\rm AI}}
\def\rBDI{{\rm BDI}}
\def\rD{{\rm D}}
\def\rDIII{{\rm DIII}}
\def\rAII{{\rm AII}}
\def\rCII{{\rm CII}}
\def\rC{{\rm C}}
\def\rCI{{\rm CI}}
\def\rX{{\rm X}}
\def\bra{\langle}
\def\ket{\rangle}
\def\Tr{{\rm Tr}}
\def\Ker{{\rm Ker }}
\def\Ran{{\rm Ran }}
\def\Span{{\rm Span }}
\def\dim{{\rm dim }}
\def\U{{\rm U}}
\def\O{{\rm O}}
\def\SU{{\rm SU}}
\def\SO{{\rm SO}}
\def\Sp{{\rm Sp}}
\def\Index{{\rm Index}}
\def\det{{\rm det}}
\def\Pf{{\rm Pf}}
\def\eps{\varepsilon}
\author{David Gontier}
\address[David Gontier]{CEREMADE, University of Paris-Dauphine, PSL University, 75016 Paris, France \& ENS/PSL University, Département de Mathématiques et Applications, F-75005, Paris, France}
\email{gontier@ceremade.dauphine.fr}
\author{Domenico Monaco}
\address[Domenico Monaco]{Dipartimento di Matematica, Sapienza Universit\`a di Roma, Piazzale Aldo Moro 5, 00185 Roma, Italy}
\email{monaco@mat.uniroma1.it}
\author{Solal Perrin-Roussel}
\address[Solal Perrin-Roussel]{ENS Paris-Saclay, 91190 Gif-sur-Yvette, France}
\email{solal.perrin-roussel@ens-paris-saclay.fr}
\title[Symmetric Fermi projections and Kitaev's table]{Symmetric Fermi projections and Kitaev's table: topological phases of matter in low dimensions}
\date{\today}
\begin{document}
    
    \maketitle
    
    \begin{abstract}
        We review Kitaev's celebrated ``periodic table'' for topological phases of condensed matter, which identifies ground states (Fermi projections) of gapped periodic quantum systems up to continuous deformations. We study families of projections which depend on a periodic crystal momentum and respect the symmetries that characterize the various classes of topological insulators. Our aim is to classify such families in a systematic, explicit, and constructive way: we identify numerical indices for all symmetry classes and provide algorithms to deform families of projections whose indices agree. Aiming at simplicity, we illustrate the method for $0$- and $1$-dimensional systems, and recover the (weak and strong) topological invariants proposed by Kitaev and others.
    \end{abstract}

    
    \section{Introduction}
    
    The goal of the present article is to re-derive the classification of topological phases of quantum matter proposed by Kitaev in his ``periodic table''~\cite{kitaev2009periodic} by means of basic tools from the topology (in particular homotopy theory) of classical groups, and standard factorization results in linear algebra. Kitaev's table classifies ground states of free-fermion systems according to their symmetries and the dimension of the configuration space. We reformulate the classification scheme in terms of homotopy theory, and proceed to investigate the latter issue in dimension $d \in \{0,1\}$. We restrict ourselves to low dimensions in order to illustrate our approach, and to provide constructive proofs for all the classes, using explicit factorization of the matrices that appear. Our intent is thus similar to previous works by Zirnbauer and collaborators \cite{Zirnbauer05,KennedyZirnbauer,KennedyGuggenheim}, which however formulate the notion of free-fermion ground state in a different way, amenable to the investigation of many-body systems. 
    
    The Kitaev classification can be obtained in various ways, using different mathematical tools. Let us mention for instance derivations coming from index theory~\cite{Gro_mann_2015}, K-theory~\cite{Thiang_2015, ProdanSchulzBaldes} and KK-theory~\cite{Bourne_2016}. In addition to this (non-exhaustive) list, one should add the numerous works focusing on one particular case of the table. Our goal here is to provide a short and synthetic derivation of this table, using simple linear algebra.
    
    
    \subsection{Setting}
    
   Let $\cH$ be a complex Hilbert space of finite dimension $\dim \, \cH \, = N$. For $0 \le n \le N$, $n$-dimensional subspaces of $\cH$ are in one-to-one correspondence with elements of the \emph{Grassmannian}
    \[
    \cG_n(\cH) := \left\{ P \in \cB(\cH) : \: P^2 = P = P^*, \: \Tr(P) = n \right\}
    \] 
    which is comprised of rank-$n$ orthogonal projections in the algebra $\cB(\cH)$ of linear operators on $\cH$. In this paper we are interested in orthogonal-projection-valued continuous functions $P : \TT^d \to \cG_n(\cH)$ which satisfy certain symmetry conditions (to be listed below), and in classifying homotopy classes of such maps. Here $\TT^d = \R^d / \Z^d$ is a $d$-dimensional torus, which we often identify with $[-\tfrac12,\tfrac12]^d$ with periodic boundary conditions. We write $k \ni \TT^d \mapsto P(k) \in \cG_n(\cH)$ for such maps, or  $\{ P(k) \}_{k \in \TT^d}$.
    
    It is well known \cite{Kuchment} that such families of projections arise from the Bloch-Floquet representation of periodic quantum systems on a lattice, in the one-body approximation. In this case, $\cH$ is the Hilbert space accounting for local degrees of freedom in the unit cell associated to the lattice of translations, $\TT^d$ plays the role of the Brillouin torus in (quasi-)momentum space, $k$ is the Bloch (quasi-)momentum, and $P(k)$ is the spectral subspace onto occupied energy levels of some $H(k)$, the Bloch fibers of a periodic lattice Hamiltonian $H$. Two Hamiltonians $H_0$ and $H_1$ are commonly referred to as being in the same topological insulating class if they share the same discrete symmetries (see below) and if they can be continuously deformed one into the other while preserving the symmetries and without closing the spectral gap. This implies that their associated spectral projections $P_0$ and $P_1$ below the spectral gap are homotopic. We thus investigate this homotopy classification directly in terms of projections in momentum space. 
    
    The discrete symmetries that one may want to impose on a family of projections come from those of the underlying quantum-mechanical system. We set the following definitions. Recall that a map $T : \cH \to \cH$ is anti-unitary if it is anti-linear ($T(\lambda x) = \overline{\lambda} T(x)$) and
    \[
        \forall x,y \in \cH, \quad \langle T x, T y \rangle_\cH = \langle y, x \rangle_{\cH} \qquad ( = \overline{ \langle x, y \rangle_\cH }).
    \]
    
    \begin{definition}[Time-reversal symmetry] \label{def:TRS}
        Let $T : \cH \to \cH$ be an anti-unitary operator such that $T^2 =  \eps_T \bbI_{\cH}$ with $\eps_T \in \{-1, 1\}$. We say that a continuous map $P : \TT^d \to \cG_n(\cH)$ satisfies {\bf time-reversal symmetry}, or in short {\bf $T$-symmetry}, if
        \[
        \boxed{ T^{-1} P(k) T = P(-k), \qquad \text{($T$-symmetry)}.}
        \]
        If $\eps_T = 1$, this $T$-symmetry is said to {\bf even}, and if $\eps_T = -1$ it is {\bf odd}. 
    \end{definition}
    
    \begin{definition}[Charge-conjugation/particle-hole symmetry] \label{def:CCS}
        Let $C : \cH \to \cH$ be an anti-unitary operator such that $C^2 =  \eps_C \bbI_{\cH}$ with $\eps_C \in \{-1, 1\}$.
        We say that a continuous map $P : \TT^d \to \cG_n(\cH)$ satisfies {\bf charge-conjugation symmetry} (also called {\bf particle-hole symmetry}), or in short {\bf $C$-symmetry}, if
        \[
        \boxed{ C^{-1} P(k) C = \bbI_{\cH} - P(-k), \quad  \text{($C$-symmetry)}.}
        \]
        If $\eps_C = 1$, this $C$-symmetry is said to {\bf even}, and if $\eps_C = -1$ it is {\bf odd}. 
    \end{definition}
    
    \begin{definition}[Chiral symmetry] \label{def:chiral}
        Let $S : \cH \to \cH$ be a unitary operator such that $S^2 =  \bbI_{\cH}$.
        We say that $P : \TT^d \to \cG_n(\cH)$ satisfies {\bf chiral} or {\bf sublattice symmetry}, or in short {\bf $S$-symmetry}, if
        \[
        \boxed{S^{-1} P(k) S = \bbI_{\cH} - P(k), \quad \text{($S$-symmetry)}.}
        \]
    \end{definition}

    The simultaneous presence of two symmetries implies the presence of the third. In fact, the following assumption is often postulated \cite{Ryu_2010}:
    
    \begin{assumption} \label{S=TC}
        Whenever $T-$ and $C-$ symmetries are both present, we assume that their product $S:=TC$ is an $S$-symmetry, that is, $S$ is unitary and $S^2 = \bbI_\cH$.
    \end{assumption}
    We are not aware of a model in which this assumption is not satisfied, {\em i.e.} in which the $S$ symmetry is unrelated to the $T-$ and $C-$ ones.
    
    \begin{remark} \label{rmk:TC=sigmaCT}
        This assumption is tantamount to require that the operators $T$ and $C$ commute or anti-commute among each other, depending on their even/odd nature. Indeed, the product of two anti-unitary operators is unitary, and the requirement that $S := TC$ satisfies $S^2 = \bbI_\cH$ reads
        \[ 
            TC TC = \bbI_\cH \quad \Longleftrightarrow \quad TC = C^{-1}T^{-1} = \eps_T \eps_C CT. 
        \]
        The same sign determines whether $S$ commutes or anti-commutes with $T$ and $C$. Indeed, we have
        \[ 
        SC = T C^2 = \eps_C T, \quad C S = CTC = T^{-1} C^{-1} C = \eps_T T   \quad \text{so} \quad SC = \eps_T \eps_C CS, 
        \]
        and similarly $ST = \eps_T \eps_C TS$.
    \end{remark}

    
    Taking into account all possible types of symmetries leads to 10 symmetry classes for maps $P \colon \TT^d \to \cG_n(\cH)$, the famous {\em tenfold way of topological insulators}~\cite{Ryu_2010}. The names of these classes are given in Table~\ref{table:us}, and are taken from the original works of E.~Cartan~\cite{cartan1,cartan2} for the classification of symmetric spaces, which were originally mutuated in \cite{AZ,Zirnbauer05} in the context of random-matrix-valued $\sigma$-models. For a dimension $d \in \N \cup \{ 0 \}$ and a rank $n \in \N$, and for a Cartan label $\rX$ of one of these 10 symmetry classes, we denote by $\rX(d,n,N)$ the set of continuous maps $P \colon \TT^d \to \cG_n(\cH)$, with $\dim(\cH) = N$, and respecting the symmetry requirements of class $\rX$.

    Given two continuous maps $P_0, P_1 \in \rX(d, n ,N)$, we ask the following questions:
    \begin{itemize}
        \item Can we find explicit $\Index \equiv \Index_d^\rX$ maps, which are numerical functions (integer- or integer-mod-2-valued) so that $\Index(P_0) = \Index(P_1)$ iff $P_0$ and $P_1$ are path-connected in $\rX(d, n,N)$?
        \item If so, how to compute this Index?
        \item In the case where ${\rm Index}(P_0) = {\rm Index}(P_1)$, how to construct explicitly a path $P_s$, $s \in [0,1]$ connecting $P_0$ and $P_1$ in $\rX(d, n, N)$?
    \end{itemize}
    In this paper, we answer these questions for all the 10 symmetry classes, and for $d \in \{0, 1\}$. We analyze the classes one by one, often choosing a basis for $\cH$ in which the different symmetry operators $T$, $C$ and $S$ have a specific normal form. In doing so, we recover Cartan's symmetric spaces as $\rX(d=0,n,N)$ -- see the boxed equations in the body of the paper. The topological indices that we find%
    \footnote{We make no claim on the group-homomorphism nature of the Index maps we provide.} %
    are summarized in Table~\ref{table:us}. Our findings agree with the previously mentioned ``periodic tables'' from the physics literature \cite{kitaev2009periodic,Ryu_2010} if one also takes into account the weak $\Z_2$ invariants (see Remark~\ref{rmk:weak}). We note that the $d = 0$ column is not part of the original table. It is related (but not equal) to the $d = 8$ column by Bott periodicity~\cite{Bott_1956}. For our purpose, it is useful to have it explicitly in order to derive the $d = 1$ column.
    
    \begin{table}[ht]
        \centering
        $\begin{array}{| c|ccc|cc|cc |}
            \hline \hline
            \multicolumn{4}{|c|}{\text{Symmetry}} & \multicolumn{2}{c|}{\text{Constraints}} & \multicolumn{2}{c|}{\text{Indices}} \\
            \hline
            \text{Cartan label} & T & C & S & n & N & d=0 & d=1 \\ \hline
            \hyperref[ssec:A]{\rA}    & 0 & 0 & 0 &  &  & 0   & 0    \\
            \hyperref[ssec:AIII]{\rAIII} & 0 & 0 & 1 &  & N=2n & 0    & \Z   \\ \hline
            \hyperref[ssec:AI]{\rAI}   & 1 & 0 & 0 &  &         & 0   & 0    \\
            \hyperref[ssec:BDI]{\rBDI}  & 1 & 1 & 1 &  & N=2n    & \Z_2 & {\Z_2\times\Z}    \\
            \hyperref[ssec:D]{\rD}   & 0 & 1 & 0 &  & N=2n    & \Z_2 & {\Z_2\times\Z_2} \\
            \hyperref[ssec:DIII]{\rDIII} & -1 & 1 & 1 & n=2m \in 2\N & N=2n=4m & 0    & \Z_2 \\
            \hyperref[ssec:AII]{\rAII}  & -1 & 0 & 0 & n=2m \in 2\N & N=2M \in 2\N        & 0  & 0    \\
            \hyperref[ssec:CII]{\rCII}  & -1 & -1 & 1 & n=2m \in 2\N & N=2n=4m & 0    & \Z   \\
            \hyperref[ssec:C]{\rC}    & 0 & -1 & 0 &  & N=2n    & 0    & 0    \\
            \hyperref[ssec:CI]{\rCI}   & 1 & -1 & 1 &  & N=2n    & 0    & 0    \\
            \hline\hline
        \end{array}$
        \medskip
        \caption{A summary of our main results on the topological ``Indices'' of the various symmetry classes of Fermi projections. In the ``Symmetry'' column, we list the sign characterizing the symmetry as even or odd; an entry ``$0$'' means that the symmetry is absent. Some ``Constraints'' may be needed for the symmetry class $\rX(d,n,N)$ to be non-empty.
        }
        \label{table:us}
    \end{table}
    \renewcommand{\arraystretch}{1.0}
    
    \subsection{Notation}
    
    For $\KK \in \{ \RR, \CC\}$, we denote by $\cM_N(\KK)$ the set of $N \times N$ {\em $\KK$-valued} matrices. We denote by $K \equiv K_N : \C^N \to \C^N$ the usual complex conjugation operator. For a complex matrix $A \in \cM_N(\C)$, we set $\overline{A} := K A K$ and $A^T := \overline{A^*}$, where $A^*$ is the adjoint matrix of $A$ for the standard scalar product on $\C^N$.
    
    We then denote by $\cS_N(\KK)$ the set of hermitian matrices ($A = A^*$), and by $\cA_N(\KK)$ the one of skew-hermitian matrices ($A = - A^*$). When $\KK = \CC$, we sometimes drop the notation $\CC$. Also, we denote by $\cS_N^\RR(\CC)$ and $\cA_N^{\RR}(\CC)$ the set of symmetric ($A^T = A$) and antisymmetric matrices ($A^T = -A$).  We denote by $\U(N)$ the subset of unitary matrices,  by $\SU(N)$ the set of unitaries with determinant $1$, by $\O(N)$ the subset of orthogonal matrices, and by $\SO(N)$ the subset of orthogonal matrices with determinant $1$.
    
    We denote by $\bbI\equiv \bbI_N$ the identity matrix of $\C^N$. When $N = 2M$ is even, we also introduce the symplectic matrix
    \[
    J \equiv J_{2M} := \begin{pmatrix} 0 & \bbI_M \\ - \bbI_M & 0 \end{pmatrix}.
    \]
    The symplectic group $\Sp(2M; \KK)$ is defined by
    \begin{equation} \label{eq:Sp}
        \Sp(2M;\KK):= \set{A \in \cM_{2M}(\KK) : A^T J_{2M} A = J_{2M}}.
    \end{equation}
    The {\em compact} symplectic group $\Sp(M)$ is 
    \[
    \Sp(M) := \Sp(2M;\C) \cap \U(2M) = \set{U \in \U(2M) : U^T J_{2M} U = J_{2M}}.
    \]
    
    \subsection{Structure of the paper}
    We study the classes one by one. We begin with the {\em complex classes} A and AIII in Section~\ref{sec:complexClasses}, where no anti-unitary operator is present. We then study non-chiral {\em real} classes (without $S$-symmetry) in Section~\ref{sec:nonchiral}, and chiral classes in Section~\ref{sec:chiral}. In Appendix~\ref{sec:LA}, we review some factorizations of matrices, which allow us to prove our results.


    \section{Complex classes: $\rA$ and $\rAIII$}
    \label{sec:complexClasses}
    
    The symmetry classes $\rA$ and $\rAIII$ are often dubbed as \emph{complex}, since they do not involve any antiunitary symmetry operator, and thus any ``real structure'' induced by the complex conjugation. By contrast, the other 8 symmetry classes are called \emph{real}. Complex classes where studied, for example, in~\cite{ProdanSchulzBaldes,denittis2018chiral}.
    
    \subsection{Class $\rA$}
    \label{ssec:A}
    
    In class $\rA$, no discrete symmetry is imposed. We have in this case
    
    \begin{theorem}[Class $\rA$] \label{thm:A}
        The sets $\rA(0, n, N)$ and $\rA(1, n, N)$ are path-connected.
    \end{theorem}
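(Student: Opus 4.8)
The plan is to reduce both statements to the connectedness of $\U(N)$, treating $d=1$ by an explicit parallel-transport construction so as to keep the argument constructive. Throughout I will use that a rank-$n$ orthogonal projection is determined by its range, and that any two of them $P,P'$ are conjugate by a unitary: completing an orthonormal basis of $\Ran P$ and one of $\Ran P'$ to orthonormal bases of $\cH$ and mapping one to the other produces $V\in\U(N)$ with $VPV^{-1}=P'$. For $d=0$ one has $\rA(0,n,N)=\cG_n(\cH)$ (a single point if $n\in\{0,N\}$). Given $P_0,P_1\in\cG_n(\cH)$, I would pick such a $V$ with $VP_0V^{-1}=P_1$, choose any continuous path $t\mapsto V_t$ in $\U(N)$ from $\bbI$ to $V$ — possible since $\U(N)$ is connected, e.g.\ $V_t=\exp\!\big(t\log V\big)$ for a fixed branch of the logarithm — and set $P_t:=V_tP_0V_t^{-1}$, a path in $\cG_n(\cH)$ from $P_0$ to $P_1$.

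For $d=1$ (again $n\in\{0,N\}$ is trivial, so assume $1\le n\le N-1$) I would show that every $P\in\rA(1,n,N)$ is path-connected, inside $\rA(1,n,N)$, to the constant map $k\mapsto P(0)$; since the $d=0$ result joins any two constant maps, this gives path-connectedness of $\rA(1,n,N)$. Identify $\TT^1\simeq[0,1]/\{0\sim1\}$; after a preliminary smoothing of $P$ (replacing it by a uniformly close, hence homotopic, smooth family) solve the Kato parallel-transport equation $\partial_k W(k)=[\partial_k P(k),P(k)]\,W(k)$, $W(0)=\bbI$. Since $[\partial_k P,P]$ is skew-Hermitian one has $W(k)\in\U(N)$, and the projection identity $\partial_k P=[[\partial_k P,P],P]$ yields $\partial_k(W^{-1}PW)=0$, hence $W(k)\,P(0)\,W(k)^{-1}=P(k)$ for all $k$. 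The monodromy $M:=W(1)$ commutes with $P(1)=P(0)$, so it lies in the subgroup $\U(n)\times\U(N-n)$ attached to the block decomposition of $P(0)$. Choosing a continuous path $\widetilde M(k)$ in that connected subgroup from $\bbI$ to $M$ and putting $U(k):=W(k)\,\widetilde M(k)^{-1}$ gives an honest loop in $\U(N)$ with $U(0)=U(1)=\bbI$, and $U(k)\,P(0)\,U(k)^{-1}=P(k)$ still holds because $\widetilde M(k)$ commutes with $P(0)$.

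It remains to contract $U$ to the constant $\bbI$ through based loops, and this is the one place where something must be checked rather than computed: the obstruction is the winding number $w\in\Z$ of $k\mapsto\det U(k)$. Using $n\ge1$, I would multiply $U(k)$ on the right by the loop $V(k)$ which, in the block decomposition of $P(0)$, acts as $\mathrm{diag}\big(e^{-2\pi\ri w k},\bbI_{n-1}\big)$ on $\Ran P(0)$ and as $\bbI_{N-n}$ on $\Ran(\bbI-P(0))$: this $V(k)$ commutes with $P(0)$, so $UV$ still conjugates $P(0)$ to $P(\cdot)$, while $\det(UV)$ now has zero winding, so $UV$ is null-homotopic in $\U(N)$ (recall $\pi_1(\U(N))\cong\Z$ via the winding of the determinant and $\pi_1(\SU(N))=0$). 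Fixing a homotopy $(s,k)\mapsto U_s(k)$ of based loops with $U_0=UV$ and $U_1\equiv\bbI$, the maps $P_s(k):=U_s(k)\,P(0)\,U_s(k)^{-1}$ form a continuous path in $\rA(1,n,N)$ from $P$ to the constant map $k\mapsto P(0)$, which finishes the proof. The conceptual content is simply $\pi_1(\cG_n(\cH))=0$; the only mildly delicate steps in making it explicit are the periodization of $W$ and the determinant-winding correction, and this ``lift to a unitary, periodize, kill the residual $\pi_1$-class'' scheme is the template that the harder symmetry classes will refine.
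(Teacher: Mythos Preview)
Your $d=0$ argument is essentially the paper's: both conjugate $P_0$ to $P_1$ by a unitary and interpolate in $\U(N)$.

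For $d=1$ you take a genuinely different route. The paper connects two given families $P_0,P_1$ \emph{directly}: using the $d=0$ case it joins $P_0(\pm\tfrac12)$ to $P_1(\pm\tfrac12)$ by paths of projections, obtains a projection-valued map on the boundary of the square $\Omega=[-\tfrac12,\tfrac12]\times[0,1]$, and then invokes an extension lemma (\cite[Lemma~3.2]{Gontier2019numerical}) to fill the square. Your argument instead shows every $P$ is homotopic to a constant, by lifting $P$ to a $\U(N)$-valued map via Kato parallel transport, periodizing with the monodromy (which lands in the connected stabilizer $\U(n)\times\U(N-n)$ of $P(0)$), killing the residual $\pi_1(\U(N))$-class with a loop in that stabilizer, and finally contracting in $\U(N)$. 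Both are correct and constructive. The paper's version is shorter because the extension lemma absorbs the analytic work, and this ``extend from the boundary of a (half-)square'' scheme is exactly what the paper reuses in the other classes (Lemmas~\ref{lem:NonCh1d} and~\ref{lem:Ch1d}); so your closing remark that your template is ``the'' one refined later does not match the paper's actual strategy. Your approach, on the other hand, is closer in spirit to Bloch-frame/Wannier constructions, needs the extra smoothing step, and makes the role of the determinant winding visible already in class~$\rA$ rather than first in class~$\rAIII$.
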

    \begin{proof}
        Since no symmetry is imposed and $\TT^0 = \{0\}$ consists of a single point, we have $\rA(0,n,N) = \cG_n(\cH)$. It is known~\cite[Ch.~8, Thm.~2.2]{husemoller} that the complex Grassmannian is connected, hence so is $\rA(0,n,N)$. This property follows from the fact that the map $\U(N) \to \cG_n(\cH)$ which to any $N \times N$ unitary matrix associates the linear span of its first $n$ columns (say in the canonical basis for $\cH \simeq \C^N$), viewed as orthonormal vectors in $\cH$, induces a bijection
        \begin{equation} \label{eq:U/UxU=G}
            \boxed{ \rA(0,n,N) \simeq \cG_n(\cH) \simeq \U(N) / \U(n) \times \U(N-n).}
        \end{equation}
        Since $U(N)$ is connected, so is $\rA(0, n, N)$. 
        
        \medskip
        
        To realize this explicitly, we fix a basis of $\cH \simeq \C^N$. Let $P_0, P_1 \in \rA(0, n, N)$. For $j \in \{ 0, 1\}$, we choose a unitary $U_j \in \U(N)$ such that its $n$ first column vectors span the range of $P_j$. We then choose a self-adjoint matrix $A_j \in \cS_N$ so that $U_j = \re^{ \ri A_j}$. We now set, for $s \in (0, 1)$, 
        \[
        U_s := \re^{ \ri A_s}, \quad A_s := (1 - s) A_0 + s A_1.
        \]
        The map $s \mapsto U_s$ is continuous, takes values in $\U(n)$, and connects $U_0$ and $U_1$. The projection $P_s$ on the first $n$ column vectors of $U_s$ then connects $P_0$ and $P_1$, as wanted.
        
        \medskip
        
        We now prove our statement concerning $\rA(1, n, N)$. Let $P_0, P_1 \colon \TT^1 \to \cG_n(\cH)$ be two periodic families of projections. Recall that we identify $\TT^1 \simeq [-1/2, 1/2]$. Consider the two projections $P_0(-\tfrac12) = P_0(\tfrac12)$ and $P_1(-\tfrac12) = P_1(\tfrac12)$, and connect them by some continuous path $P_s(-\frac12) = P_s(\frac12)$ as previously. The families $P_0(k)$ and $P_1(k)$, together with the maps $P_s(-\tfrac12)$ and $P_s(\tfrac12)$, define a continuous family of projectors on the boundary $\partial \Omega$ of the square
        \begin{equation} \label{eq:Omega}
            \Omega := [-\tfrac12, \tfrac12] \times [0,1] \ni (k,s).
        \end{equation}
        It is a standard result (see for instance~\cite[Lemma 3.2]{Gontier2019numerical} for a constructive proof) that such families can be extended continuously to the whole set $\Omega$. This gives an homotopy $P_s(k) = P(k,s)$ between $P_0$ and $P_1$.
    \end{proof}
    
    \subsection{Class $\rAIII$}
    \label{ssec:AIII}
    
    In class $\rAIII$, only the $S$-symmetry is present. It is convenient to choose a basis in which $S$ is diagonal. This is possible thanks to the following Lemma, which we will use several times in classes where the $S$-symmetry is present.
    \begin{lemma} \label{lem:formS}
        Assume $\rAIII(d=0, n, N)$ is non-empty. Then $N = 2n$, and there is a basis of $\cH$ in which $S$ has the block-matrix form
        \begin{equation} \label{eq:formS}
            S = \begin{pmatrix} \bbI_n & 0 \\ 0 & - \bbI_n \end{pmatrix}.
        \end{equation}
        In this basis, a projection $P$ satisfies $S^{-1} P S = \bbI_\cH - P$ iff it has the matrix form
        \begin{equation} \label{eq:special_form_P}
            P = \frac12  \begin{pmatrix}
                \bbI_n & Q \\ Q^* & \bbI_n
            \end{pmatrix} \quad \text{with} \quad Q \in \U(n).
        \end{equation}
    \end{lemma}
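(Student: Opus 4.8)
The plan is to first diagonalize $S$, then to read off from the $S$-symmetry relation and from $P^2 = P = P^*$ the block structure of an arbitrary symmetric projection, and finally to recognize its off-diagonal block as a unitary; note that the hypothesis that $\rAIII(0,n,N)$ is non-empty is used precisely to produce at least one such $P$, which is what forces $N = 2n$. Since $S$ is unitary and $S^2 = \bbI_\cH$, it is self-adjoint with spectrum in $\{+1,-1\}$, so $\cH = \cH_+ \oplus \cH_-$ orthogonally, where $\cH_\pm = \ker(S \mp \bbI_\cH)$, $\dim \cH_\pm = n_\pm$ and $n_+ + n_- = N$. In any orthonormal basis adapted to this splitting, $S = \begin{pmatrix}\bbI_{n_+} & 0 \\ 0 & -\bbI_{n_-}\end{pmatrix}$. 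Writing an $S$-symmetric projection in the corresponding block form $P = \begin{pmatrix} A & B \\ B^* & D\end{pmatrix}$ with $A = A^*$, $D = D^*$, the identity $S^{-1} P S = S P S = \bbI_\cH - P$ changes the sign of the off-diagonal blocks and hence forces $A = \tfrac12 \bbI_{n_+}$, $D = \tfrac12 \bbI_{n_-}$, while imposing nothing on $B$; so at this stage $P = \tfrac12\begin{pmatrix}\bbI_{n_+} & 2B \\ 2B^* & \bbI_{n_-}\end{pmatrix}$.

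Next I would impose $P^2 = P$. The off-diagonal blocks of $P^2$ equal $B$ automatically, while the two diagonal blocks give $\tfrac14 \bbI_{n_+} + B B^* = \tfrac12 \bbI_{n_+}$ and $\tfrac14 \bbI_{n_-} + B^* B = \tfrac12 \bbI_{n_-}$; writing $Q := 2B$ (an $n_+ \times n_-$ matrix), this says exactly $Q Q^* = \bbI_{n_+}$ and $Q^* Q = \bbI_{n_-}$. This is the one step that is not pure bookkeeping: from $Q Q^* = \bbI_{n_+}$ we get $n_+ = \mathrm{rank}(Q Q^*) \le \mathrm{rank}(Q) \le \min(n_+, n_-)$, hence $n_+ \le n_-$, and symmetrically $n_- \le n_+$, so $n_+ = n_- =: n$. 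Therefore $N = 2n$, the matrix $S$ takes the form~\eqref{eq:formS}, $Q \in \U(n)$, and $P = \tfrac12\begin{pmatrix}\bbI_n & Q \\ Q^* & \bbI_n\end{pmatrix}$, which is~\eqref{eq:special_form_P}. (The equality $N = 2n$ also follows more quickly from the fact that $S$ conjugates $P$ into $\bbI_\cH - P$, so that $n = \Tr(P) = N - \Tr(P)$.)

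For the converse, given any $Q \in \U(n)$ one checks directly that $P := \tfrac12\begin{pmatrix}\bbI_n & Q \\ Q^* & \bbI_n\end{pmatrix}$ is self-adjoint, that $P^2 = P$ using $Q Q^* = Q^* Q = \bbI_n$, that $\Tr(P) = \tfrac12(n + n) = n$, and that $S^{-1} P S = \bbI_\cH - P$ by the sign flip of the off-diagonal blocks; hence $P \in \rAIII(0,n,N)$. I expect the rank comparison fixing $n_+ = n_-$ to be the only point that needs a moment's thought — everything else is routine $2\times 2$ block-matrix algebra.
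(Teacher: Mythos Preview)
Your proof is correct. The block-matrix analysis of the second part matches the paper's exactly; the difference lies in how you reach the normal form of $S$ and the equality $N = 2n$.

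The paper first observes that $S$ conjugates $P_0$ into $\bbI_\cH - P_0$, so $\Tr(P_0) = N - \Tr(P_0)$ and $N = 2n$; it then \emph{constructs} the diagonalizing basis explicitly from an orthonormal basis $(\psi_1,\dots,\psi_n)$ of $\Ran P_0$ by setting $\phi_i = \tfrac{1}{\sqrt{2}}(\psi_i + S\psi_i)$ and $\phi_{n+i} = \tfrac{1}{\sqrt{2}}(\psi_i - S\psi_i)$. You instead invoke the spectral theorem for $S$ up front (with a priori unequal eigenspace dimensions $n_\pm$) and let the idempotency condition $QQ^* = \bbI_{n_+}$, $Q^*Q = \bbI_{n_-}$ force $n_+ = n_-$ via the rank argument. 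Your route is slightly more uniform, folding the dimension constraint into the same block computation that produces $Q \in \U(n)$; the paper's route is more in the constructive spirit of the article, producing the basis by hand from a given $P_0$ rather than appealing to spectral theory. Both are equally short.
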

    
    \begin{proof}
        Let $P_0 \in \rAIII(0, n, N)$. Since $S^{-1} P_0 S =  \bbI_{\cH} - P_0$, $P_0$ is unitarily equivalent to $\bbI_\cH - P_0$, hence $\cH = {\rm Ran} \ P_0 \oplus  {\rm Ran} \ (\bbI_{\cH} - P_0)$ is of dimension $N= 2n$. \\
        Let $(\psi_1, \psi_2, \cdots, \psi_n)$ be an orthonormal basis for ${\rm Ran } \, P_0$. We set
        \[
        \forall i \in \{ 1, \cdots, n\}, \quad \phi_i := \frac{1}{\sqrt{2}}(\psi_i + S \psi_i), \quad \phi_{n+i} = \frac{1}{\sqrt{2}}(\psi_i - S \psi_i).
        \] 
        The family $(\phi_1, \cdots, \phi_{2n})$ is an orthonormal basis of $\cH$, and in this basis, $S$ has the matrix form~\eqref{eq:formS}.
        
        \medskip
        
        For the second point, let $P \in \rAIII(0,n,2n)$, and decompose $P$ in blocks:
        \[ P = \frac12 \begin{pmatrix}
            P_{11} & P_{12} \\
            P_{12}^* & P_{22}
        \end{pmatrix}.
        \]
        The equation $S^{-1} P S= \bbI_\cH - P$ implies that $P_{11} = P_{22} = \bbI_n$. Then, the equation $P^2 = P$ shows that $P_{12} =: Q$ is unitary, and~\eqref{eq:special_form_P} follows. 
    \end{proof}
    
    The previous Lemma establishes a bijection $P \longleftrightarrow Q$, that is
    \[ 
    \boxed{ \rAIII(0,n,2n) \simeq \U(n).} 
    \]
    For $P \in \rAIII(d, n, 2n)$, we denote by $Q : \TT^d \to \U(n)$ the corresponding periodic family of unitaries. 
    
    For a curve $\cC$ homeomorphic to $\SS^1$, and for $Q : \cC \to \U(n)$, we denote by ${\rm Winding}(\cC, Q)$ the usual winding number of the determinant of $Q$ along $\cC$.
    
    \begin{theorem}[Class $\rAIII$]
        The set $\rAIII(d, n, N)$ is non-empty iff $N=2n$. 
        \begin{itemize}
            \item The set $\rAIII(0,n,2n)$ is path-connected.
            \item Define the index map $\Index_1^{\rAIII} : \rAIII(1, n, 2n) \to \Z$ by
            \[
            \forall P \in \rAIII(1, n, 2n), \quad \Index_1^{\rAIII} (P) := {\rm Winding}(\TT^1, Q).
            \]
            Then $P_0$ is homotopic to $P_1$ in $\rAIII(1,n,2n)$ iff $\Index_1^{\rAIII}(P_0) = \Index_1^{\rAIII}(P_1)$.
        \end{itemize}
    \end{theorem}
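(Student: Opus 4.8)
The plan is to use Lemma~\ref{lem:formS} to transport the whole problem to the study of maps valued in $\U(n)$. First I would settle non-emptiness: if $\rAIII(d,n,N)\neq\emptyset$, pick any $P$ in it and any $k_0\in\TT^d$; then $P(k_0)\in\rAIII(0,n,N)$, so Lemma~\ref{lem:formS} forces $N=2n$. Conversely, if $N=2n$, choosing a basis in which $S$ has the form~\eqref{eq:formS}, the constant family $P(k)\equiv\tfrac12\left(\begin{smallmatrix}\bbI_n & \bbI_n\\ \bbI_n & \bbI_n\end{smallmatrix}\right)$ lies in $\rAIII(d,n,2n)$. Next I would observe that the correspondence $P\leftrightarrow Q$ of Lemma~\ref{lem:formS} is a homeomorphism: with the fixed basis just chosen, the maps $Q\mapsto\tfrac12\left(\begin{smallmatrix}\bbI_n & Q\\ Q^* & \bbI_n\end{smallmatrix}\right)$ and $P\mapsto 2P_{12}$ are mutually inverse and continuous. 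Hence it promotes to a homeomorphism between $\rAIII(d,n,2n)$ and the space $C(\TT^d,\U(n))$ with the uniform topology, so $P_0\simeq P_1$ in $\rAIII(d,n,2n)$ if and only if the associated $Q_0\simeq Q_1$ in $C(\TT^d,\U(n))$. This reduces the theorem to two classical facts about the unitary group.

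For $d=0$ the statement is that $\U(n)$ is path-connected: since the exponential $\cS_n\ni A\mapsto\re^{\ri A}\in\U(n)$ is onto, $s\mapsto\re^{\ri((1-s)A_0+sA_1)}$ joins any two unitaries $Q_0=\re^{\ri A_0}$ and $Q_1=\re^{\ri A_1}$, and carrying this back through Lemma~\ref{lem:formS} gives an explicit path in $\rAIII(0,n,2n)$.

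For $d=1$ I must show that $Q_0,Q_1:\TT^1\to\U(n)$ are homotopic if and only if ${\rm Winding}(\TT^1,Q_0)={\rm Winding}(\TT^1,Q_1)$. Necessity is immediate, the winding number of $k\mapsto\det Q(k)\in\U(1)$ being a homotopy invariant. For sufficiency, let $w$ be the common winding number and set $\Lambda_w(k):={\rm diag}(\re^{2\pi\ri wk},1,\dots,1)$, which is a periodic loop of winding $w$; it suffices to homotope each $Q_j$ to $\Lambda_w$. Put $R(k):=\Lambda_w(k)^{-1}Q_j(k)$, a periodic loop with $\det R$ of winding $0$, hence $\det R(k)=\re^{\ri\alpha(k)}$ for some continuous periodic $\alpha:\TT^1\to\R$. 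The homotopy $(t,k)\mapsto\re^{-\ri t\alpha(k)/n}R(k)$ deforms $R$ within $\U(n)$ to a loop $\widetilde R:\TT^1\to\SU(n)$. Since $\SU(n)$ is simply connected (trivially so for $n=1$), $\widetilde R$ is homotopic in $\SU(n)$, hence in $\U(n)$, to the constant $\bbI_n$; multiplying all these homotopies on the left by $\Lambda_w$ converts the homotopy from $R$ to $\bbI_n$ into one from $Q_j=\Lambda_w R$ to $\Lambda_w$. Concatenating for $j=0,1$ and transporting back through Lemma~\ref{lem:formS} yields an explicit homotopy between $P_0$ and $P_1$.

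The only non-elementary ingredient is the simple connectedness of $\SU(n)$, which I would either cite or, to keep the argument constructive in the spirit of the rest of the paper, replace by an explicit contraction of loops in $\SU(n)$ (e.g.\ a column-by-column Gram--Schmidt reduction down to $\SU(1)=\{1\}$). The main obstacle is therefore producing this explicit contraction; everything else is routine linear algebra and the bookkeeping of periodicity.
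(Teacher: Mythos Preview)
Your proof is correct but follows a different route from the paper's. For $d=1$, the paper proceeds by an \emph{extension} argument: connect $Q_0(\pm\tfrac12)$ to $Q_1(\pm\tfrac12)$ by some path $Q_s(\pm\tfrac12)$ in $\U(n)$, obtaining a continuous $\U(n)$-valued map on the boundary of the square $\Omega=[-\tfrac12,\tfrac12]\times[0,1]$; it then invokes the fact (cited from~\cite{Gontier2019numerical}) that such a map extends to all of $\Omega$ iff ${\rm Winding}(\partial\Omega,Q)=0$, and computes this boundary winding as the difference $\Index_1^{\rAIII}(P_1)-\Index_1^{\rAIII}(P_0)$. You instead use a \emph{normal form} argument: homotope each $Q_j$ to the standard loop $\Lambda_w$ by first killing the determinant phase and then contracting the resulting $\SU(n)$-loop, relying on $\pi_1(\SU(n))=0$. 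Both approaches black-box one nontrivial topological input (the extension lemma for the paper, simple connectedness of $\SU(n)$ for you); the paper's choice has the advantage of reusing the same extension machinery uniformly across all ten symmetry classes, while yours is closer to the classical computation of $\pi_1(\U(n))$ via the fibration $\SU(n)\hookrightarrow\U(n)\xrightarrow{\det}\U(1)$ and is arguably more direct if one is willing to cite that fact. Your remark about replacing the citation by an explicit column-reduction contraction is reasonable, though carrying it out carefully (continuously in the loop parameter) is more work than you suggest.
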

    
    \begin{proof}
        We already proved that $N = 2n$. Since $\U(n)$ is connected, so is $\rAIII(0, n, 2n)$. A constructive path can be constructed as in the previous section using exponential maps.
        
        We now focus on $\rAIII(d = 1, n, 2n)$. Analogously, the question of whether two maps in $\rAIII(1,n,2n)$ are continuously connected by a path can be translated in whether two unitary-valued maps $Q_0, Q_1 \colon \TT^1 \to \U(n)$ are homotopic to each other. As in the previous proof, consider the unitaries $Q_0(-\tfrac12) = Q_0(\tfrac12) \in \U(n)$ and $Q_1(-\tfrac12) = Q_1(\tfrac12) \in \U(n)$. Connect them by some $Q_s(-\frac12) = Q_s(\frac12)$ in $\U(n)$. This defines a $\U(n)$-valued map on $\partial \Omega$, where the square $\Omega$ is defined in~\eqref{eq:Omega}.
        
        It is well known that one can extend such a family of unitaries to the whole $\Omega$ iff ${\rm Winding}(\partial \Omega,Q) = 0$ (see~\cite[Section IV.B]{Gontier2019numerical} for a proof, together with a constructive proof of the extension in the case where the winding vanishes). In our case, due to the orientation of the boundary of $\Omega$ and of the periodicity of $Q_0(k), Q_1(k)$, we have
        \[ {\rm Winding}(\partial \Omega, Q) =  {\rm Winding}(\TT^1,\ Q_1) -  {\rm Winding}(\TT^1, Q_0), \]
        which is independent of the previously constructed path $Q_s(\frac12)$. The conclusion follows.
    \end{proof}

    
    \section{Real non-chiral classes: $\rAI$, $\rAII$, $\rC$ and $\rD$}
    \label{sec:nonchiral}
    
    Next we consider those symmetry classes which are characterized by the presence of a \emph{single} anti-unitary symmetry: a $T$-symmetry (which even in class $\rAI$ and odd in class $\rAII$) or a $C$-symmetry (whih is even in class $\rD$ and odd in class $\rC$). In particular, these classes involve anti-unitarily intertwining $P(k)$ and $P(-k)$. For these symmetry classes, the analysis of their path-connected components in dimension $d=1$ is reduced to that of dimension $d=0$, thanks to the following Lemma.
    
    \begin{lemma}[Real non-chiral classes in $d=1$] \label{lem:NonCh1d}
        Let $\rX \in \set{\rAI, \rAII, \rC, \rD}$. Then $P_0$ and $P_1$ are in the same connected component of $\rX(1,n,N)$ iff
        \begin{itemize}
            \item $P_0(0)$ and $P_1(0)$ are in the same connected component in $\rX(0,n,N)$, and
            \item $P_0(\tfrac12)$ and $P_1(\tfrac12)$ are in the same connected component in $\rX(0,n,N)$.
        \end{itemize}
    \end{lemma}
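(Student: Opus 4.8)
The plan is to identify $\rX(1,n,N)$ with a constrained path space in the Grassmannian $\cG_n(\cH)$ and then to exploit the fact that $\cG_n(\cH)$ is simply connected. On $\TT^1\simeq[-\tfrac12,\tfrac12]$ the involution $k\mapsto -k$ has exactly the two fixed points $k=0$ and $k=\tfrac12=-\tfrac12$, and at those momenta the symmetry relation of class $\rX$, which ties $P(k)$ to $P(-k)$, reduces to the $d=0$ relation defining $\rX(0,n,N)$; in particular the restrictions $P(0)$ and $P(\tfrac12)$ of any $P\in\rX(1,n,N)$ belong to $\rX(0,n,N)$. (If $\rX(0,n,N)=\emptyset$ then $\rX(1,n,N)=\emptyset$ as well and there is nothing to prove, so assume it is non-empty.) The ``only if'' direction is then immediate: restricting a symmetry-preserving homotopy $\{P_s\}_{s\in[0,1]}$ in $\rX(1,n,N)$ to $k=0$ and to $k=\tfrac12$ produces continuous paths in $\rX(0,n,N)$ joining $P_0(0)$ to $P_1(0)$ and $P_0(\tfrac12)$ to $P_1(\tfrac12)$.

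For the ``if'' direction I would first reduce to a statement about paths. In each of the four classes a single anti-unitary $\Theta\in\{T,C\}$ is present, and its symmetry relation expresses $P(k)$ for $k\in[-\tfrac12,0]$ as a fixed continuous function of $P(-k)$: conjugation by $\Theta$ for $\rAI,\rAII$, and conjugation by $\Theta$ composed with $P\mapsto\bbI_\cH-P$ for $\rC,\rD$ (recall that for the latter two classes $N=2n$, so this preserves the rank). A short verification then shows that restriction to $[0,\tfrac12]$ is a homeomorphism between $\rX(1,n,N)$ and
\[
\cP:=\Big\{\gamma\in C\big([0,\tfrac12],\cG_n(\cH)\big)\ :\ \gamma(0)\in\rX(0,n,N)\ \text{ and }\ \gamma(\tfrac12)\in\rX(0,n,N)\Big\},
\]
the inverse being extension by the symmetry relation; continuity of this extension across $k=0$ and across $k=\pm\tfrac12$ is exactly the requirement that the endpoints of $\gamma$ lie in $\rX(0,n,N)$, and the extended family is then symmetric by construction. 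It therefore suffices to prove that $\gamma_0,\gamma_1\in\cP$ lie in the same connected component of $\cP$ if and only if $\gamma_0(0)\sim\gamma_1(0)$ and $\gamma_0(\tfrac12)\sim\gamma_1(\tfrac12)$ in $\rX(0,n,N)$.

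Assume these endpoint conditions, and pick a path $\alpha$ in $\rX(0,n,N)$ from $\gamma_0(0)$ to $\gamma_1(0)$ and a path $\beta$ in $\rX(0,n,N)$ from $\gamma_0(\tfrac12)$ to $\gamma_1(\tfrac12)$. On the boundary of the rectangle $[0,\tfrac12]\times[0,1]$, with coordinates $(k,s)$, put the $\cG_n(\cH)$-valued map equal to $\gamma_0$ on the bottom edge $\{s=0\}$, to $\gamma_1$ on the top edge $\{s=1\}$, to $\alpha$ on the left edge $\{k=0\}$, and to $\beta$ on the right edge $\{k=\tfrac12\}$; these four data agree at the four corners, so this is a continuous map on the boundary circle. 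Since $\cG_n(\cH)$ is simply connected (the complex Grassmannian is a CW complex with cells only in even dimension, cf.~\cite{husemoller}), this map extends continuously to the whole rectangle — this is precisely the boundary-extension of projector-valued families used in the proof of Theorem~\ref{thm:A}, for which \cite[Lemma~3.2]{Gontier2019numerical} gives a constructive version. For each $s$ the slice $\Gamma(\cdot,s)$ of the extension lies in $\cP$ (its endpoints are $\alpha(s),\beta(s)\in\rX(0,n,N)$), and $s\mapsto\Gamma(\cdot,s)$ is a path in $\cP$ from $\gamma_0$ to $\gamma_1$. Transporting this path back through the homeomorphism yields a symmetry-preserving homotopy from $P_0$ to $P_1$ in $\rX(1,n,N)$.

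The only genuinely topological ingredient — and the step I expect to be the crux — is the vanishing of $\pi_1(\cG_n(\cH))$, which is exactly what makes the boundary extension in the last paragraph unconditional; everything else is elementary bookkeeping about the symmetry relations and filling in of rectangles. Conceptually, this is what forces the $d=1$ classification of the non-chiral real classes to factor as the product of two copies of the $d=0$ classification, one at each momentum fixed by $k\mapsto-k$. (Abstractly: the evaluation map $\cP\to\rX(0,n,N)\times\rX(0,n,N)$, $\gamma\mapsto(\gamma(0),\gamma(\tfrac12))$, is a fibration whose fibers are path spaces of the simply connected space $\cG_n(\cH)$, hence path-connected, so it induces a bijection on $\pi_0$.)
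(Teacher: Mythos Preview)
Your proof is correct and follows essentially the same route as the paper's: restrict to the half-torus $[0,\tfrac12]$, build boundary data on the rectangle $[0,\tfrac12]\times[0,1]$ from $P_0,P_1$ and the two endpoint interpolations, extend to the interior using the simple-connectedness of $\cG_n(\cH)$ (the paper invokes the same extension result, citing \cite[Lemma~3.2]{Gontier2019numerical}), and then propagate to $[-\tfrac12,0]$ by the symmetry relation. The only difference is presentational: you first package the restriction/extension step as an explicit homeomorphism $\rX(1,n,N)\simeq\cP$ and close with the fibration interpretation, whereas the paper works directly with the projections on $\Omega_0$; neither adds or removes any real content.
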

    
    \begin{proof}
        
        We give the argument for the class $\rX = \rD$, but the proof is similar for the other classes. First, we note that if $P_s(k)$ connects $P_0$ and $P_1$ in $\rD(1, n, N)$, then for $k_0 \in \{ 0, \tfrac12\}$ one must have $C^{-1} P_s(k_0) C = \bbI_\cH - P_s(k_0)$, so $P_s(k_0)$ connects $P_0(k_0)$ and $P_1(k_0)$ in $\rD(d = 0, n, N)$.
        
        Let us prove the converse. Assume that $P_0$ and $P_1$ are two projection-valued maps in $\rD(1, n, N)$ so that there exist paths $P_s(k_0)$ connecting $P_0(k_0)$ and $P_1(k_0)$ in $\rD(0,n,N)$, for the high symmetry points $k_0 \in \{ 0, \tfrac12\}$. Denote by $\Omega_0$ the half-square
        \begin{equation} \label{eq:Omega0}
            \Omega_0 := [0, \tfrac12] \times [0, 1] \quad \ni (k,s),
        \end{equation}
        (compare with~\eqref{eq:Omega}). The families 
        \[
            \set{P_0(k)}_{k \in [0,1/2]}, \quad \set{P_1(k)}_{k\in[0,1/2]}, \quad \set{P_s(0)}_{s\in[0,1]}
            \quad \text{and} \quad \set{P_s(\tfrac12)}_{s\in[0,1]},
        \]
     together define a continuous family of projectors on the boundary $\partial \Omega_0$. As was already mentioned in Section~\ref{ssec:A}, this family can be extended continuously on the whole set $\Omega_0$. 
        
        This gives a continuous family $\set{P_s(k)}_{k\in[0,1/2],\,s\in[0,1]}$ which connects continuously the restrictions of $P_0$ and $P_1$ to the half-torus $k \in [0, \tfrac12]$. We can then extend the family of projections to $k \in [-\tfrac12, 0]$ by setting
        \[ 
            \forall k \in [- \tfrac12, 0], \ \forall s \in [0, 1], \quad P_s(k) := C \big[ \bbI_\cH - P_s (-k) \big] C^{-1}.
         \]
        By construction, for all $s \in [0, 1]$, the map $P_s$ is in $\rD(1,n,N)$. In addition, since at $k_0 \in \{0 ,\tfrac12\}$ we have $P_s(k_0) \in \rD(0, n, N)$, the above extension is indeed continuous as a function of $k$ on the whole torus $\TT^1$. This concludes the proof. 
    \end{proof}
    
    \subsection{Class $\rAI$} \label{ssec:AI}
    
    In class $\rAI$, the relevant symmetry is an anti-unitary operator $T$ with $T^2 = \bbI_\cH$. This case was studied for instance in~\cite{panati2007triviality,denittis2014real,FiorenzaMonacoPanatiAI}.
    
    \begin{lemma} \label{lem:formTeven}
        If $T$ is an anti-unitary operator on $\cH$ such that $T^2 = \bbI_\cH$, then there is a basis of $\cH$ in which $T$ has the matrix form $T = K_N$.
    \end{lemma}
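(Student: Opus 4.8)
The plan is to produce an orthonormal $\C$-basis of $\cH$ consisting entirely of vectors fixed by $T$; once this is done, anti-linearity forces $T$ to act as complex conjugation in that basis, i.e. $T = K_N$.

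\emph{First step: the real form.} I would introduce the set $\cH^T := \set{ v \in \cH : Tv = v }$ of $T$-fixed vectors, which is a real-linear (not complex-linear) subspace of $\cH$ since $T$ is anti-linear. The key algebraic observation, using $T^2 = \bbI_\cH$ and anti-linearity, is that for any $w \in \cH$ both $a := w + Tw$ and $b := \ri(w - Tw)$ lie in $\cH^T$ (a one-line check: $Ta = Tw + T^2 w = a$ and $Tb = \overline{\ri}\,(Tw - w) = b$), and moreover $w = \tfrac12 a - \tfrac{\ri}{2} b$. Hence $\cH = \cH^T + \ri\,\cH^T$. I would also record that the Hermitian inner product of $\cH$ restricts to a \emph{real-valued} inner product on $\cH^T$: for $v, v' \in \cH^T$ one has $\langle v, v'\rangle_\cH = \langle Tv, Tv'\rangle_\cH = \overline{\langle v, v'\rangle_\cH}$, so $\langle v, v'\rangle_\cH \in \R$. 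Thus $(\cH^T, \langle\cdot,\cdot\rangle_\cH)$ is a genuine Euclidean space.

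\emph{Second step: build the basis.} Pick an $\R$-orthonormal basis $(e_1, \dots, e_m)$ of $\cH^T$; by construction these vectors are orthonormal in $\cH$. I then check they form a complex basis of $\cH$. For $\C$-linear independence: if $\sum_j \lambda_j e_j = 0$ with $\lambda_j \in \C$, applying $T$ gives $\sum_j \overline{\lambda_j} e_j = 0$, and adding/subtracting shows $\sum_j \mathrm{Re}(\lambda_j)\, e_j = 0$ and $\sum_j \mathrm{Im}(\lambda_j)\, e_j = 0$ over $\R$, so all $\lambda_j = 0$ by $\R$-independence. For $\C$-spanning: any $w$ equals $v_1 + \ri v_2$ with $v_1, v_2 \in \cH^T$ by the first step, and each $v_i$ is an $\R$-combination of the $e_j$. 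Therefore $m = N$ and $(e_1,\dots,e_N)$ is an orthonormal basis of $\cH$ over $\C$. In this basis $T\big(\sum_j \lambda_j e_j\big) = \sum_j \overline{\lambda_j}\, e_j$, which is exactly $K_N$; this finishes the proof.

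There is no real obstacle here: the only point demanding a moment's care is precisely that an $\R$-orthonormal family in $\cH^T$ stays $\C$-linearly independent and $\C$-spanning in $\cH$ — this is the statement that $\cH^T$ is a ``real form'' of $\cH$ — and it follows from the two elementary identities in the first step. (If one prefers an inductive formulation: for $w \neq 0$ at least one of $w + Tw$ and $\ri(w - Tw)$ is nonzero and fixed by $T$, so normalize it to get $e_1 \in \cH^T$; then $\{e_1\}^\perp$ is $T$-invariant because $\langle Tu, e_1\rangle_\cH = \langle Tu, Te_1\rangle_\cH = \overline{\langle u, e_1\rangle_\cH}$, and one recurses on $\cH$ of dimension $N-1$.)
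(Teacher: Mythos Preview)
Your proof is correct and follows essentially the same approach as the paper: both construct an orthonormal basis of $T$-fixed vectors, after which $T$ is manifestly $K_N$. The paper presents only the inductive construction (your parenthetical remark at the end, using $\ri(\psi - T\psi)$ when $T\psi \neq \psi$), whereas your main argument packages the same idea more structurally via the real form $\cH^T$; the content is the same.
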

    
    \begin{proof}
        We construct the basis by induction. Let $\psi_1 \in \cH$ be a normalized vector. if $T \psi_1 = \psi_1$, we set $\phi_1 = \psi_1$, otherwise we set 
        \[
        \phi_1 := \ri \dfrac{ \psi_1 - T \psi_1 }{\| \psi_1 - T \psi_1 \|}.
        \]
        In both cases, we have $T \phi_1 = \phi_1$ and $\| \phi_1 \| = 1$, which gives our first vector of the basis. Now take $\psi_2$ orthogonal to $\phi_1$. We define $\phi_2$ as before. If $\phi_2 = \psi_2$, then $\phi_2$ is automatically orthogonal to $\phi_1$. This also holds in the second case, since 
        \[
        \langle \psi_2 - T \psi_2, \phi_1 \rangle 
        = - \langle T \psi_2, \phi_1 \rangle 
        = - \langle T \phi_1, T^2 \psi_2 \rangle 
        = - \langle  \phi_1, \psi_2 \rangle = 0,
        \]
        where we used twice that $\langle \psi_2, \phi_1 \rangle = 0$. We go on, and construct the vectors $\phi_k$ inductively for $1 \le k \le N$. This gives an orthonormal basis in which $T = K$.
    \end{proof}

    \begin{theorem}[Class $\rAI$] \label{thm:AI}
        The sets $\rAI(0, n, N)$ and $\rAI(1, n, N)$ are path-connected.
    \end{theorem}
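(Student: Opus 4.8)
The plan is to mirror the argument used for class $\rA$, with unitaries replaced by orthogonal matrices. First I would apply Lemma~\ref{lem:formTeven} to fix a basis of $\cH \simeq \CC^N$ in which $T = K_N$. In this basis the $d = 0$ symmetry condition $T^{-1} P T = P$ becomes $\overline{P} = P$, so that $\rAI(0,n,N)$ is precisely the set of rank-$n$ \emph{real} symmetric orthogonal projections, i.e. the real Grassmannian
\[
\rAI(0,n,N) \simeq \cG_n(\RR^N) \simeq \O(N) / \bigl( \O(n) \times \O(N-n) \bigr),
\]
which is path-connected (for $0 < n < N$ already $\SO(N)$ acts transitively on it, the cases $n \in \{0,N\}$ being trivial; compare~\cite[Ch.~8]{husemoller}).

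To make this constructive, given $P_0, P_1 \in \rAI(0,n,N)$ I would choose, for $j \in \{0,1\}$, a real orthogonal matrix $O_j$ whose first $n$ columns form an orthonormal basis of $\Ran P_j$; if $\det O_j = -1$ I replace its last column by its opposite, which does not alter the span of the first $n$ columns, so that we may take $O_j \in \SO(N)$. Since $\SO(N)$ is compact and connected, $O_j = \re^{A_j}$ for some real skew-symmetric $A_j \in \cA_N(\RR)$; then $A_s := (1-s)A_0 + s A_1$ and $O_s := \re^{A_s} \in \SO(N)$ are continuous in $s$, and the projection $P_s$ onto the span of the first $n$ columns of $O_s$ is real, symmetric and of rank $n$, hence lies in $\rAI(0,n,N)$ and connects $P_0$ to $P_1$.

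For $d = 1$ no additional work is needed: class $\rAI$ is a real non-chiral class, so Lemma~\ref{lem:NonCh1d} reduces connectedness in $\rAI(1,n,N)$ to the connectedness of $P_0(0)$ with $P_1(0)$ and of $P_0(\tfrac12)$ with $P_1(\tfrac12)$ inside $\rAI(0,n,N)$, both of which hold by the previous step. Hence $\rAI(1,n,N)$ is path-connected, which is exactly why Lemma~\ref{lem:NonCh1d} was established first.

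I do not expect any serious obstacle here. The only point requiring a little attention is the determinant bookkeeping in the $d = 0$ construction: because $\O(N)$ is disconnected one cannot write every orthogonal matrix as an exponential, so one must first push $O_j$ into $\SO(N)$ by a harmless column sign flip before appealing to surjectivity of $\exp \colon \cA_N(\RR) \to \SO(N)$. Everything else is routine linear algebra.
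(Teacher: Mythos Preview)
Your proposal is correct and follows essentially the same approach as the paper: reduce to the real Grassmannian via Lemma~\ref{lem:formTeven}, push the orthogonal frames into $\SO(N)$ by a sign flip, interpolate via the exponential of real skew-symmetric matrices, and invoke Lemma~\ref{lem:NonCh1d} for $d=1$. The only cosmetic difference is that the paper flips the first column rather than the last, which is immaterial.
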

    \begin{proof}
        In a basis in which $T = K_N$, we have the identification
        \[ \rAI(0, n, N) = \left\{ P \in \cG_n(\C^N) : \overline{P} = P \right\}. \]
        In other words, $\rAI(0, n, N)$ consists of \emph{real} subspaces of $\cH$, {\em i.e.} those that are fixed by the complex conjugation $T=K$. One can therefore span such subspaces (as well as their orthogonal complement) by orthonormal \emph{real} vectors. This realizes a bijection similar to~\eqref{eq:U/UxU=G}, but where unitary matrices are replaced by orthogonal ones: more precisely
        \[ 
        \boxed{ \rAI(0, n, N) \simeq \O(N) / \O(n) \times \O(N-n).}
        \]
        
        We adapt the argument in the proof of Theorem~\ref{thm:A} to show that the latter space is path-connected. Let $P_0, P_1 \in \rAI(0, n, N)$. We choose two {\em real} bases of $\cH$, which we identify with columns of orthogonal matrices $U_0,U_1 \in \O(N)$, so that the first $n$ vectors of $U_j$ span the range of $P_j$, for $j \in \set{0,1}$. In addition, by flipping the first vector, we may assume $U_0, U_1 \in \SO(N)$. Then there is $A_0, A_1 \in \cA_N(\R)$ so that $U_j = \re^{ A_j}$ for $j \in \{ 0, 1 \}$. We then set $U_s := \re^{A_s}$ with $A_s = (1 - s) A_0 + sA_1$. The projection $P_s$ on the first $n$ column vectors of $U_s$ then interpolates between $P_0$ and $P_1$, as required. In view of Lemma~\ref{lem:NonCh1d}, the path-connectedness of $\rAI(0,n,N)$ implies the one of $\rAI(1,n,N)$.
    \end{proof}
    
    \subsection{Class $\rAII$}
    \label{ssec:AII}
    
    In class $\rAII$ we have $T^2 = -\bbI_\cH$. This case was studied for instance in~\cite{graf2013bulk, denittis2015quaternionic, FiorenzaMonacoPanatiAII, cornean2017wannier, monaco2017gauge}.
    
    \begin{lemma} \label{lem:formTodd}
        There is an anti-unitary map $T : \cH \to \cH$ with $T^2 = - \bbI_\cH$ iff ${\rm dim } \, \cH = N = 2M$ is even. In this case, there is a basis of $\cH$ in which $T$ has the matrix form
        \begin{equation} \label{eq:oddT}
            T = \begin{pmatrix}
                0 & K_M \\ - K_M & 0
            \end{pmatrix} = J_{2M} \, K_{2M}.
        \end{equation}
    \end{lemma}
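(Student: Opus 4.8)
The plan is to obtain the two easy pieces and the normal form separately: the "if" direction by an explicit example, and the "only if" direction together with the normal form simultaneously, via an inductive construction of an adapted orthonormal basis in the spirit of the proof of Lemma~\ref{lem:formTeven}. For the "if" direction, when $N = 2M$ the operator $T_0 := J_{2M} K_{2M}$ is anti-unitary (it is the composition of the unitary $J_{2M}$ with complex conjugation), and since $J_{2M}$ is real with $J_{2M}^2 = -\bbI_{2M}$, one gets $T_0^2 = J_{2M}\overline{J_{2M}}\,K_{2M}^2 = J_{2M}^2 = -\bbI_{2M}$; hence such a $T$ exists whenever $\dim \cH$ is even.

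For the converse and the normal form I would start from an anti-unitary $T$ on $\cH$ with $T^2 = -\bbI_\cH$ and record two elementary facts. (i) For every $\psi \in \cH$ one has $\psi \perp T\psi$: applying the anti-unitarity identity to the pair $(\psi, T\psi)$ gives $\langle T\psi, T^2\psi\rangle = \langle T\psi, \psi\rangle$, and substituting $T^2\psi = -\psi$ forces $\langle T\psi, \psi\rangle = 0$. (ii) If $V \subseteq \cH$ is a $T$-invariant subspace, then so is $V^\perp$: since $T^{-1} = -T$, the operator $T$ restricts to a bijection of the finite-dimensional space $V$, so every $v \in V$ can be written $v = Tu$ with $u \in V$, whence for $w \in V^\perp$ we get $\langle Tw, v\rangle = \langle Tw, Tu\rangle = \langle u, w\rangle = 0$. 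With these in hand I would build the basis inductively: assuming unit vectors $\psi_1, \dots, \psi_k$ have been found so that $V_k := \Span\{\psi_1, T\psi_1, \dots, \psi_k, T\psi_k\}$ is $2k$-dimensional and $T$-invariant, if $V_k \neq \cH$ then $V_k^\perp$ is a nonzero $T$-invariant subspace; picking a unit vector $\psi_{k+1} \in V_k^\perp$, the pair $\psi_{k+1}, T\psi_{k+1}$ lies in $V_k^\perp$, is orthonormal by (i) and by the norm-preservation of $T$, and thus enlarges $V_k$ to a $(2k+2)$-dimensional $T$-invariant subspace. This iteration forces $\dim\cH = N = 2M$ to be even, and after $M$ steps $V_M = \cH$. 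Setting $\phi_j := \psi_j$ and $\phi_{M+j} := -T\psi_j$ for $1 \le j \le M$ gives an orthonormal basis of $\cH$ in which $T\phi_j = -\phi_{M+j}$ and $T\phi_{M+j} = -T^2\psi_j = \psi_j = \phi_j$, i.e.\ $T$ has exactly the block form~\eqref{eq:oddT}.

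The argument is entirely elementary; the only point that demands a little attention is keeping the anti-linearity straight in facts (i) and (ii)—in particular exploiting $T^{-1} = -T$ to see that $T$ preserves orthogonal complements of invariant subspaces, which is what makes the induction close. I do not anticipate any genuine obstacle beyond this bookkeeping.
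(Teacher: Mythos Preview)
Your proof is correct and follows essentially the same inductive strategy as the paper: both establish $\psi \perp T\psi$ and then build the basis by pairing each new vector with its $T$-image. The only cosmetic differences are that you package the orthogonality checks into the abstract fact that $V^\perp$ is $T$-invariant (the paper verifies $\langle \psi_4, \psi_1\rangle = \langle \psi_4, \psi_2\rangle = 0$ by hand), and you include the easy ``if'' direction explicitly.
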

    
    \begin{proof}
        First, we note that $T \psi$ is always orthogonal to $\psi$. Indeed, we have
        \begin{equation} \label{eq:psiTpsi}
            \langle \psi, T \psi \rangle =  \langle T^2 \psi, T \psi \rangle = - \langle \psi, T \psi \rangle, \quad \text{hence} \quad
            \langle \psi, T \psi \rangle = 0.
        \end{equation}
        We follow the strategy employed {\em e.g.}~in~\cite{graf2013bulk} and~\cite[Chapter 4.1]{cornean2017wannier}, and construct the basis by induction. Let $\psi_1 \in \cH$ be any normalized vector, and set $\psi_2 := T \psi_1$. The family $\{\psi_1, \psi_2\}$ is orthonormal by~\eqref{eq:psiTpsi}. If $\cH \neq \Span \{ \psi_1, \psi_2\}$, then there is $\psi_3 \in \cH$ orthonormal to this family. We then set $\psi_4 = T \psi_3$, and claim that $\psi_4$ is orthonormal to the family $\{ \psi_1, \psi_2, \psi_3\}$. First, by~\eqref{eq:psiTpsi}, we have $\bra \psi_3, \psi_4 \ket = 0$. In addition, we have 
        \[
        \bra \psi_4, \psi_1 \ket = \bra T \psi_3, \psi_1 \ket = \bra T \psi_1, T^2 \psi_3 \ket = - \bra \psi_2, \psi_3 \ket = 0,
        \]
        and, similarly,
        \[
        \bra \psi_4, \psi_2 \ket = \bra T \psi_3, T \psi_1 \ket = \bra T^2 \psi_1, T^2 \psi_3 \ket = \bra \psi_1, \psi_3 \ket = 0.
        \]
        We proceed by induction. We first obtain that the dimension of $\cH$ is even, $N = 2M$, and we construct an explicit basis $\{\psi_1, \cdots, \psi_{2M}\}$ for $\cH$. In the orthonormal basis $\{\psi_1, \psi_3, \psi_5, \cdots, \psi_{2M-1}, \psi_2, \psi_4, \cdots \psi_{2M}\}$, the operator $T$ has the matrix form~\eqref{eq:oddT}.
    \end{proof}

    \begin{theorem}[Class $\rAII$] \label{thm:AII}
        The sets $\rAII(0, n, N)$ and $\rAII(1, n, N)$ are non-empty iff $n = 2m \in 2\N$ and $N = 2M \in 2\N$. Both are path-connected.
    \end{theorem}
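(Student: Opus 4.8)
The plan is to follow the pattern of class $\rAI$: reduce the statement for $d=1$ to the one for $d=0$ by means of Lemma~\ref{lem:NonCh1d}, and treat the $d=0$ case by realizing $\rAII(0,n,N)$ as a homogeneous space of the \emph{connected} group $\Sp(M)$ -- the quaternionic analogue of the identifications $\rAI(0,n,N)\simeq\O(N)/(\O(n)\times\O(N-n))$ and $\rA(0,n,N)\simeq\U(N)/(\U(n)\times\U(N-n))$ used before.

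\emph{Non-emptiness and the constraints.} By Lemma~\ref{lem:formTodd}, an anti-unitary $T$ with $T^2=-\bbI_\cH$ exists only if $N=2M$ is even; we fix this and work in a basis with $T=J_{2M}K_{2M}$. If $\rAII(0,n,N)\neq\emptyset$, pick $P$ in it: then $\Ran P$ is $T$-invariant and $T$ restricts there to an anti-unitary operator of square $-\bbI$, so Lemma~\ref{lem:formTodd} forces $n=\dim\Ran P$ to be even, $n=2m$ (and then $N-n=2(M-m)$ is even as well). If instead $\rAII(1,n,N)\neq\emptyset$, the same applies to $P(0)$, which satisfies $T^{-1}P(0)T=P(0)$. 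Conversely, if $n=2m$ and $N=2M$, the constant family $P(k)\equiv\Pi_0$ with $\Pi_0:=\mathrm{diag}(\bbI_m,0_{M-m},\bbI_m,0_{M-m})$ lies in $\rAII(0,n,N)$ and in $\rAII(1,n,N)$: indeed $\Pi_0$ is real and commutes with $J_{2M}$, so $T^{-1}\Pi_0T=-J_{2M}\overline{\Pi_0}J_{2M}=-J_{2M}\Pi_0J_{2M}=\Pi_0$, using $J_{2M}^2=-\bbI$.

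\emph{Path-connectedness for $d=0$ and the explicit path.} In the normal-form basis, $T^{-1}PT=P$ is equivalent to $\overline{P}=-J_{2M}\,P\,J_{2M}$, which says that $\Ran P$ and $\Ran(\bbI_\cH-P)$ are stable under the quaternionic structure $\psi\mapsto T\psi=J_{2M}\overline{\psi}$. Running the inductive basis construction of Lemma~\ref{lem:formTodd} separately on $\Ran P$ and on $\Ran(\bbI_\cH-P)$ yields an orthonormal basis of $\cH$ consisting of $T$-pairs $(\phi_i,T\phi_i)$ with the first $m$ pairs spanning $\Ran P$; a unitary intertwines two such adapted bases precisely when it commutes with $T$, i.e. when it belongs to $\Sp(M)$. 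Hence $\Sp(M)$ acts transitively by conjugation on $\rAII(0,n,N)$, the stabilizer of $\Pi_0$ is $\Sp(m)\times\Sp(M-m)$, and we obtain
\[
\boxed{\ \rAII(0,n,N)\ \simeq\ \Sp(M)\,/\,\big(\Sp(m)\times\Sp(M-m)\big)\,,\qquad N=2M,\ n=2m\,,}
\]
Cartan's symmetric space of type $\rCII$. As $\Sp(M)$ is connected, so is this quotient. Explicitly, given $P_0,P_1\in\rAII(0,n,N)$, choose $U\in\Sp(M)$ with $UP_0U^*=P_1$, write $U=\exp(X)$ with $X$ in the Lie algebra of $\Sp(M)$ (the exponential of the compact connected group $\Sp(M)$ is onto), and set $P_s:=\exp(sX)\,P_0\,\exp(-sX)$; since $\exp(sX)\in\Sp(M)$ each $P_s$ satisfies the $T$-symmetry, so $s\mapsto P_s$ is a path in $\rAII(0,n,N)$ from $P_0$ to $P_1$.

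\emph{From $d=0$ to $d=1$, and the main obstacle.} Non-emptiness of $\rAII(1,n,N)$ under the stated constraints was established above; and by Lemma~\ref{lem:NonCh1d}, $P_0,P_1\in\rAII(1,n,N)$ lie in the same component iff $P_0(0)\sim P_1(0)$ and $P_0(\tfrac12)\sim P_1(\tfrac12)$ in $\rAII(0,n,N)$, which always holds by the previous paragraph; hence $\rAII(1,n,N)$ is path-connected. The only non-routine point is the identification with the symplectic quotient: one must verify that every $P\in\rAII(0,n,N)$ admits a $T$-adapted orthonormal basis compatible with the normal form $T=J_{2M}K_{2M}$, that such bases are permuted simply transitively by $\Sp(M)$, and that the stabilizer of the base point $\Pi_0$ is exactly $\Sp(m)\times\Sp(M-m)$; once this bookkeeping is in place, both the connectedness and the explicit interpolation follow from the connectedness of $\Sp(M)$ and the surjectivity of its exponential map.
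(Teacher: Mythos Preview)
Your proof is correct and follows essentially the same approach as the paper: both identify $\rAII(0,2m,2M)$ with the quaternionic Grassmannian $\Sp(M)/(\Sp(m)\times\Sp(M-m))$, deduce path-connectedness in $d=0$ from the connectedness of $\Sp(M)$ via the exponential map, and then invoke Lemma~\ref{lem:NonCh1d} for $d=1$. Your treatment of non-emptiness is more explicit than the paper's, and your interpolation conjugates by $\exp(sX)$ for a single $X$ rather than interpolating between two Lie-algebra elements as the paper does following the template of Theorems~\ref{thm:A} and~\ref{thm:AI}, but these are cosmetic differences.
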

    
    \begin{proof}
        The proof follows the same lines as that of Theorems~\ref{thm:A} and~\ref{thm:AI}. The condition $T^{-1} P T = P$ for $P \in \rAII(0, n, N)$ means that the range of the projection $P$ is stable under the action of $T$. This time, the operator $T$ endows the Hilbert space $\cH$ with a \emph{quaternionic} structure, namely the matrices $\set{\ri \bbI_\cH, T, \ri T}$ satisfy the same algebraic relations as the basic quaternions $\set{\mathbf{i},\mathbf{j},\mathbf{k}}$: they square to $-\bbI_\cH$, they pairwise anticommute and the product of two successive ones cyclically gives the third. This allows to realize the class $\rAII(0,2m,2M)$ as
        \[ 
        \boxed{ \rAII(0, 2m, 2M) \simeq \Sp(M) / \Sp(m) \times \Sp(M-m). }
        \]
        Matrices in $\Sp(M)$ are exponentials of Hamiltonian matrices, that is, matrices $A$ such that $J_{2M} A$ is symmetric \cite[Prop.~3.5 and Coroll.~11.10]{hall2015lie}. Such matrices form a (Lie) algebra, and therefore the same argument as in the proof of Theorem~\ref{thm:AI} applies, yielding path-connectedness of $\rAII(0,2m,2M)$. This in turn implies, in combination with Lemma~\ref{lem:NonCh1d}, that $\rAII(1,2m,2M)$ is path-connected as well.
    \end{proof}

    \subsection{Class $\rD$}
    \label{ssec:D}
    
    We now come to classes where the $C$-symmetry is present.  We first focus on the even case, $C^2 = + \bbI_\cH$, characterizing class $\rD$. One of the most famous models in this class is the 1-dimensional Kitaev chain~\cite{KitaevChain}. We choose to work in the basis of $\cH$ in which $C$ has the form%
    \footnote{This is \emph{different} from the ``energy basis'', of common use in the physics literature, in which $C$ is block-off-diagonal, mapping ``particles'' to ``holes'' and vice-versa. We find this other basis more convenient for our purpose.}  $C = K_N$ (see Lemma~\ref{lem:formTeven}). 
    
    \begin{lemma}
        The set $\rD(0, n, N)$ is non-empty iff $N = 2n$. In this case, and in a basis where $C = K_N$, a projection $P$ is in $\rD(0, n,2n)$ iff it has the matrix form
        \[
        P = \frac12 (\bbI_N + \ri A ), \quad \text{with} \quad A \in \O(2n) \cap \cA_{2n}(\R).
        \]
    \end{lemma}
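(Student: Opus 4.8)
The plan is to work throughout in the fixed basis where $C = K_N$, so that conjugation by $C$ is just entrywise complex conjugation and the $C$-symmetry condition $C^{-1} P C = \bbI_N - P$ reads simply
\[
\overline{P} = \bbI_N - P .
\]
From this I would first extract the constraint $N = 2n$. For any $P \in \rD(0,n,N)$ the matrix $\overline{P}$ is again an orthogonal projection (since $\overline{P}^2 = \overline{P^2} = \overline{P}$ and $\overline{P}^{\,*} = \overline{P^*} = \overline{P}$), and its trace equals $\overline{\Tr P} = n$; hence $\bbI_N - P = \overline P$ has rank $n$, while by definition it has rank $N - n$. Therefore $N = 2n$, which also forces $N$ to be even.

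Next, for the matrix form on $\rD(0,n,2n)$, I would substitute $B := 2P - \bbI_N$, so that $P = \tfrac12(\bbI_N + B)$, and decode the three defining relations of $\cG_n(\cH)$ together with the $C$-symmetry in terms of $B$. Self-adjointness $P = P^*$ is equivalent to $B = B^*$; the relation $\overline P = \bbI_N - P$ becomes $\overline B = -B$, i.e. $B$ is purely imaginary, so $B = \ri A$ with $A := -\ri B$ real; imposing in addition $B^* = B$ and using $A$ real gives $-\ri A^T = \ri A$, that is $A^T = -A$, so $A \in \cA_{2n}(\R)$. Finally, a direct expansion of $P^2 = \tfrac14(\bbI_N + \ri A)^2$ shows that $P^2 = P$ is equivalent to $A^2 = -\bbI_{2n}$, and for a real antisymmetric $A$ this says precisely $A^T A = -A^2 = \bbI_{2n}$, i.e. $A \in \O(2n)$. (The trace condition $\Tr P = n$ is then automatic, since $\Tr A = 0$ for any antisymmetric matrix.) Running the computation backwards shows conversely that every $A \in \O(2n) \cap \cA_{2n}(\R)$ yields $P = \tfrac12(\bbI_{2n} + \ri A) \in \rD(0,n,2n)$.

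To finish the ``iff $N = 2n$'' claim it remains only to note that when $N = 2n$ the set is nonempty: the choice $A = J_{2n}$ is real, satisfies $J_{2n}^T = -J_{2n}$ and $J_{2n}^2 = -\bbI_{2n}$, hence lies in $\O(2n) \cap \cA_{2n}(\R)$, and produces a valid element $P = \tfrac12(\bbI_{2n} + \ri J_{2n})$. There is no genuine obstacle here: the argument is a short chain of equivalences in elementary linear algebra, and the only point deserving care is the bookkeeping with the antilinearity of $C$, namely the identities $\overline{AB} = \overline A\,\overline B$ and $\overline{A^*} = A^T$, which is what turns the Hermitian condition $B^* = B$ into the symmetry condition $A^T = -A$ for the real part $A$.
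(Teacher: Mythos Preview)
Your proof is correct and follows essentially the same route as the paper: write $P = \tfrac12(\bbI_N + \ri A)$ and translate the three conditions $P=P^*$, $P^2=P$, $\overline P = \bbI_N - P$ into $A \in \O(N) \cap \cA_N(\R)$. The only cosmetic difference is how $N=2n$ is obtained: you argue directly from ranks/traces of $P$ and $\overline P$, while the paper first characterizes $A$ and then reads off $N$ even from $\det(A) = (-1)^N \det(A)$ and $n = \tfrac12\Tr(\bbI_N)$ from $\Tr A = 0$; your explicit non-emptiness check via $A = J_{2n}$ is a nice addition the paper leaves implicit.
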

    
    \begin{proof}
        A computation shows that
        \[
        \begin{cases}
            P^* = P \\
            P^2 = P \\
            C^{-1}  P C = \bbI - P
        \end{cases}
        \Longleftrightarrow \quad
        \begin{cases}
            A^* = -A \\
            A^2 = - \bbI_N \\
            \overline{A}  = A
        \end{cases}
        \Longleftrightarrow \quad
        \begin{cases}
            A^* A = \bbI_N \\
            A= \overline{A} = - A^T.
        \end{cases}
        \]
        This proves that $P \in \rD(0, n, N)$ iff $A \in \O(N) \cap \cA_{N}(\R)$. In particular, we have $\det(A) = (-1)^N \det(-A) = (-1)^N \det(A^T) = (-1)^N \det(A)$, so $N = 2m$ is even. Finally, since the diagonal of $A$ is null, we have $n = \Tr(P) = \frac12 \Tr(\bbI_N) = m$.
    \end{proof}
    In Corollary~\ref{cor:O(d)capA(d)} below, we prove that a matrix $A$ is in $\O(2n) \cap \cA_{2n}(\R)$ iff it is of the form
     \[ A = W^T J_{2n} W, \quad \text{with} \quad W \in \O(2n). \]
     In addition, we have $W_0^T J_{2n} W_0 = W_1^T J_{2n} W_1$ with $W_0, W_1 \in \O(2n)$ iff $W_0 W_1^* \in \Sp(n) \cap \O(2n)$. Finally, in Proposition~\ref{prop:SpO=U}, we show that $\Sp(n) \cap \O(2n) \simeq \U(n)$. Altogether, this shows that
    \[
    \boxed{ \rD(0, n, 2n) \simeq  \O(2n) \cap \cA_{2n}(\R) \simeq \O(2n)/\U(n) .}
    \]
    
    To identify the connected components of this class, recall that for an anti-symmetric matrix $A \in \cA_{2n}^\R(\C)$, we can define its \emph{Pfaffian}
    \begin{equation} \label{eq:def:Pfaffian}
        \Pf(A)   := \dfrac{1}{2^n n!} \sum_{\sigma} {\rm sgn}(\sigma) \prod_{i=1}^n a_{\sigma(2i-1), \sigma(2i)},
    \end{equation}
    where the above sum runs over all permutations over $2n$ labels and ${\rm sgn}(\sigma)$ is the sign of the permutation $\sigma$. The Pfaffian satisfies
    \[ \Pf(A)^2 = \det(A). \]
    On the other hand, if $A \in \O(2n)$, then $\det(A) \in \{ \pm 1\}$, so if $A \in \O(2n) \cap \cA_{2n}(\R)$, we must have $\det(A) = 1$ and $\Pf(A) \in \{ \pm 1\}$.

    \begin{theorem}[Class $\rD$]
        The set $\rD(d, n, N)$ is non-empty iff $N=2n$. 
        \begin{itemize}
            \item The set $\rD(0, n, 2n)$ has two connected components. Define the index map $\Index_0^{\rD} \colon \rD(0,n,2n) \to \Z_2 \simeq \{ \pm 1\}$ by
            \[
            \forall P \in \rD(0, n, 2n), \quad \Index_0^\rD (P) := \Pf(A).
            \]
            Then $P_0$ is homotopic to $P_1$ in  $\rD(0,n,2n)$ iff $ \Index_0^{\rD}(P_0) = \Index_0^{\rD}(P_1)$.
            \item The set  $\rD(1, n, 2n)$ has four connected components. Define the index map $\Index_1^{\rD} \colon \rD(1,n,2n) \to \Z_2 \times \Z_2$ by
            \[
            \forall P \in \rD(1, n, 2n), \quad \Index_1^\rD (P) := \left( \Pf(A(0)), \Pf(A(\tfrac12))\right).
            \]
            Then $P_0$ is homotopic to $P_1$ in  $\rD(1,n,2n)$ iff $ \Index_1^{\rD}(P_0) = \Index_1^{\rD}(P_1)$.
        \end{itemize}
    \end{theorem}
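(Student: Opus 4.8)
The plan is to transfer the whole question to the description $\rD(0,n,2n)\simeq\O(2n)\cap\cA_{2n}(\R)$ supplied by the preceding Lemma (through $P=\tfrac12(\bbI+\ri A)$) together with the factorization $A=W^{T}J_{2n}W$, $W\in\O(2n)$, from Corollary~\ref{cor:O(d)capA(d)}, and then to invoke Lemma~\ref{lem:NonCh1d} to pass from $d=0$ to $d=1$.

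\textbf{Non-emptiness.} The Lemma quoted just above already gives $\rD(0,n,N)\neq\emptyset\iff N=2n$. For $d=1$: if $\rD(1,n,N)\neq\emptyset$, restricting a family to $k=0$ — where the $C$-symmetry relation $C^{-1}P(0)C=\bbI_\cH-P(0)$ reduces to the $d=0$ one — shows $\rD(0,n,N)\neq\emptyset$, hence $N=2n$; conversely, for $N=2n$ any constant family $P(k)\equiv P_{0}$ with $P_{0}\in\rD(0,n,2n)$ lies in $\rD(1,n,2n)$.

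\textbf{The case $d=0$.} The Pfaffian is a polynomial in the matrix entries, hence continuous, and on $\O(2n)\cap\cA_{2n}(\R)$ it takes values in $\{\pm1\}$ because $\det A=1$ there and $\Pf(A)^{2}=\det A$. Thus $\Index_{0}^{\rD}=\Pf(A)$ is locally constant on $\rD(0,n,2n)$, so $P_{0}\not\simeq P_{1}$ whenever $\Pf(A_{0})\neq\Pf(A_{1})$. For the converse, write $A_{j}=W_{j}^{T}J_{2n}W_{j}$; from the identity $\Pf(W^{T}J_{2n}W)=\det(W)\,\Pf(J_{2n})$ and $\Pf(J_{2n})\in\{\pm1\}$ we see that $\Pf(A_{0})=\Pf(A_{1})$ is equivalent to $\det(W_{0})=\det(W_{1})$, i.e.\ to $W_{0}$ and $W_{1}$ lying in the same connected component of $\O(2n)$. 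Each component of $\O(2n)$ is path-connected — on $\SO(2n)$ one interpolates via $\re^{A_{s}}$ with $A_{s}$ an affine path of skew-symmetric logarithms, exactly as in the proof of Theorem~\ref{thm:AI}, and the other component is a fixed translate of $\SO(2n)$ — so we pick a path $s\mapsto W_{s}$ in $\O(2n)$ from $W_{0}$ to $W_{1}$ and set $A_{s}:=W_{s}^{T}J_{2n}W_{s}$, $P_{s}:=\tfrac12(\bbI+\ri A_{s})$, which is the desired homotopy inside $\rD(0,n,2n)$. Both values $\pm1$ of $\Pf$ are attained, since $W\mapsto W^{T}J_{2n}W$ is onto by Corollary~\ref{cor:O(d)capA(d)} while $\det$ is onto $\{\pm1\}$ on $\O(2n)$; hence $\rD(0,n,2n)$ has exactly two connected components, separated by $\Index_{0}^{\rD}$.

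\textbf{The case $d=1$.} By Lemma~\ref{lem:NonCh1d} with $\rX=\rD$, $P_{0}\simeq P_{1}$ in $\rD(1,n,2n)$ iff $P_{0}(k_{0})\simeq P_{1}(k_{0})$ in $\rD(0,n,2n)$ for both $k_{0}\in\{0,\tfrac12\}$, which by the previous paragraph amounts to $\Pf(A_{0}(k_{0}))=\Pf(A_{1}(k_{0}))$ for $k_{0}\in\{0,\tfrac12\}$, i.e.\ $\Index_{1}^{\rD}(P_{0})=\Index_{1}^{\rD}(P_{1})$. To see that all four values of $\Z_{2}\times\Z_{2}$ occur, fix $P_{+},P_{-}\in\rD(0,n,2n)$ with $\Pf=\pm1$ and, for a prescribed pair $(\sigma_{0},\sigma_{1/2})$, join $P_{\sigma_{0}}$ to $P_{\sigma_{1/2}}$ by an arbitrary path in $\cG_{n}(\cH)=\rA(0,n,2n)$ (path-connected by Theorem~\ref{thm:A}) over $k\in[0,\tfrac12]$, then extend by $P(k):=C[\bbI_{\cH}-P(-k)]C^{-1}$ for $k\in[-\tfrac12,0]$; the result lies in $\rD(1,n,2n)$ with index $(\sigma_{0},\sigma_{1/2})$. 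Hence $\rD(1,n,2n)$ has exactly four connected components.

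\textbf{Main obstacle.} The only genuinely delicate step is the converse direction in the $d=0$ case — matching a prescribed value of $\Pf$ with a \emph{single} connected component — which rests on the appendix factorization results (Corollary~\ref{cor:O(d)capA(d)}, the description of the fibers of $W\mapsto W^{T}J_{2n}W$, and Proposition~\ref{prop:SpO=U}) and on path-connectedness of $\SO(2n)$ via surjectivity of the exponential map. Everything else is a formal consequence of Lemma~\ref{lem:NonCh1d}, Theorem~\ref{thm:A}, and continuity of the Pfaffian.
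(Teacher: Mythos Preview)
Your proof is correct and follows essentially the same route as the paper: reduce to the factorization of $A\in\O(2n)\cap\cA_{2n}(\R)$ from Corollary~\ref{cor:O(d)capA(d)}, connect the factoring orthogonal matrices inside a path-connected component, and then invoke Lemma~\ref{lem:NonCh1d} for $d=1$. The only cosmetic difference is that the paper uses the $\SO(2n)$-version of the factorization, $A=V^{T}\Lambda_{A}V$ with $V\in\SO(2n)$ and $\Lambda_{A}$ depending on $\Pf(A)$, whereas you use the $\O(2n)$-version $A=W^{T}J_{2n}W$ together with $\Pf(A)=\det(W)\Pf(J_{2n})$; the two are equivalent and lead to the same interpolation argument.
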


    \begin{proof}
        We start with $\rD(0, n, N=2n)$. Let $P_0, P_1 \in \rD(0, n, 2n)$. It is clear that if $\Pf(A_0) \neq \Pf(A_1)$, then $P_0$ and $P_1$ are in two different connected components (recall that $\Pf(\cdot)$ is a continuous map, with values in $\{ \pm 1\}$ in our case).
        
        It remains to construct an explicit homotopy in the case where $\Pf(A_0) = \Pf(A_1)$. In Corollary~\ref{cor:O(d)capA(d)} below, we recall that a matrix $A$ is in $\O(2n) \cap \cA_{2n}(\R)$ iff there is $V \in \SO(2n)$ so that
        \[
        A = V^T D V, \quad \text{with} \quad D = (1, 1, \cdots, 1, \Pf(A)) \otimes \begin{pmatrix}
            0 & 1 \\ - 1 & 0
        \end{pmatrix}.
        \]
        So, if $A_0, A_1 \in \O(2n) \cap \cA_{2n}(\R)$ have the same Pfaffian, it is enough to connect the corresponding $V_0$ and $V_1$ in $\SO(2n)$. The proof follows since $\SO(2n)$ is path-connected (compare with the proof of Theorem~\ref{thm:AI}).
        
        The case for $\rD(d = 1, n, 2n)$ is now a consequence of Lemma~\ref{lem:NonCh1d}.
    \end{proof}
    
    \begin{remark} \label{rmk:weak}
    For 1-dimensional translation-invariant systems, one can distinguish between a \emph{weak} ({\em i.e.}, lower-dimensional, depending solely on $P(k)$ at $k = 0$) index
    \[ \Index_0^\rD(P(0)) = \Pf(A(0)) \in \Z_2 \]
    and a \emph{strong} (i.e., ``truly'' 1-dimensional) index
    \[ \widetilde{ \Index_0^\rD } (P) := \Pf(A(0)) \cdot \Pf(A(\tfrac12)) \in \Z_2. \]
    Only the latter $\Z_2$-index appears in the periodic tables for free ground states \cite{kitaev2009periodic}. Our proposed index
    \[ \Index_1^\rD (P) = \left( \Pf(A(0)), \Pf(A(\tfrac12))\right) \in \Z_2 \times \Z_2 \]
    clearly contains the same topological information of both the weak and strong indices.
    
    A similar situation will appear in class $\rBDI$ (see Section~\ref{ssec:BDI}).
    \end{remark}

    \subsection{Class $\rC$}
    \label{ssec:C}
    
    We now focus on the odd $C$-symmetry class, where $C^2 = - \bbI_\cH$. Thanks to Lemma~\ref{lem:formTodd}, $N = 2M$ is even, and we can choose a basis of $\cH$ in which $C$ has the matrix form
    \[
    C = \begin{pmatrix}
        0 & K_M \\ - K_M & 0
    \end{pmatrix} = J_{2M}\, K_{2M}.
    \]
    Recall that $\Sp(n) := \Sp(2n; \C) \cap \U(2n)$.
    \begin{lemma}
        The set $\rC(0, n, N)$ is non-empty iff $N = 2n$ (hence $n = M$). A projection $P$ is in $\rC(0, n, 2n)$ iff it has the matrix form
        \[
        P = \frac12 \left( \bbI_{2n} + \ri J_{2n}A  \right), \quad \text{with} \quad A \in \Sp(n) \cap \cS^\R_{2n}(\C).
        \]
    \end{lemma}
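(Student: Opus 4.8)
The plan is to imitate the proof of the analogous Lemma for class $\rD$ in Section~\ref{ssec:D}, working in the basis provided by Lemma~\ref{lem:formTodd} in which $C = J_{2M} K_{2M}$; note that the mere existence of $C$ already forces $N = 2M$ to be even. I would first settle the dimension count. If $P \in \rC(0,n,N)$, the $C$-symmetry relation $C^{-1} P C = \bbI_\cH - P$ realizes $\bbI_\cH - P$ as the image of $P$ under conjugation by the (anti-unitary, hence rank-preserving) operator $C$, so ${\rm rank}(\bbI_\cH - P) = {\rm rank}(P)$; together with ${\rm rank}(P) + {\rm rank}(\bbI_\cH - P) = N$ and ${\rm rank}(P) = n$ this gives $N = 2n$. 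Conversely, once the matrix form below is available, the choice $A = \bbI_{2n} \in \Sp(n) \cap \cS^\R_{2n}(\C)$ produces the element $P = \tfrac12(\bbI_{2n} + \ri J_{2n})$ of $\rC(0,n,2n)$, so that this set is non-empty.

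For the characterization, write $P = \tfrac12(\bbI_{2n} + \ri J_{2n} A)$; this loses no generality since $A := \ri J_{2n}(2P - \bbI_{2n})$ is uniquely determined by $P$ (using $J_{2n}^{-1} = -J_{2n}$). The remaining content is a direct computation in the spirit of the class-$\rD$ lemma. Using $J^* = J^T = -J$, $\overline{J} = J$, and the identity $C^{-1} M C = -J \overline{M} J$ for conjugation by the anti-unitary $C = J K$ with $C^2 = -\bbI$, one obtains
\[
\begin{cases} P^* = P \\ P^2 = P \\ C^{-1} P C = \bbI_\cH - P \end{cases}
\quad \Longleftrightarrow \quad
\begin{cases} A^* J_{2n} = J_{2n} A \\ J_{2n} A J_{2n} A = - \bbI_{2n} \\ \overline{A}\, J_{2n} = J_{2n} A \end{cases}.
\]
Comparing the first and third equations yields $A^* = \overline{A}$, equivalently $A^T = A$, so $A \in \cS^\R_{2n}(\C)$; under this relation the first and third equations coincide. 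Multiplying the second equation by $J_{2n}^{-1} = -J_{2n}$ gives $A J_{2n} A = J_{2n}$, i.e.\ $A^T J_{2n} A = J_{2n}$, so $A \in \Sp(2n;\C)$. Finally, $A J_{2n} A = J_{2n}$ gives $A^{-1} J_{2n} = J_{2n} A$, which combined with $\overline{A}\, J_{2n} = J_{2n} A$ forces $\overline{A} = A^{-1}$; hence $A^* A = \overline{A} A = \bbI_{2n}$, so $A \in \U(2n)$ and therefore $A \in \Sp(2n;\C) \cap \U(2n) = \Sp(n)$. The converse implication is the same chain of equalities read in reverse, supplemented by the rank argument of the previous paragraph to ensure $\Tr(P) = n$.

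The argument is essentially mechanical; the only points requiring care are the bookkeeping of adjoints, transposes and complex conjugates in the presence of $J_{2n}$, and the observation that the trace constraint $\Tr(P) = n$ is not an extra condition on $A$ but is automatically enforced by $C$-symmetry once $N = 2n$ — together with the fact that $\overline{A} A = \bbI_{2n}$ and $A \overline{A} = \bbI_{2n}$ are equivalent for square matrices, which is what upgrades the symplectic relation to unitarity.
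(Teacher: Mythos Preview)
Your proof is correct and follows essentially the same approach as the paper: both establish the three equivalent conditions $A^* J = JA$, $JAJA=-\bbI$, $\overline{A}J=JA$ and then combine them pairwise to extract $A^T=A$, $A^T J A = J$, and $A^*A=\bbI$. Your write-up is slightly more detailed (explicit rank argument for $N=2n$, explicit inverse formula for $A$, and the remark on the trace constraint), but the logical content is the same.
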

    \begin{proof}
        With this change of variable, we obtain that
        \[
        \begin{cases}
            P = P^* \\
            P^2 = P \\
            C^{-1} P C = \bbI_{2n} - P
        \end{cases}
        \Longleftrightarrow \quad
        \begin{cases}
            A^* J_{2n} = J_{2n} A\\
            J_{2n} A J_{2n} A = - \bbI_{2n} \\
            \overline{A} J_{2n} = J_{2n} A.
        \end{cases}
        \]
        With the two first equations, we obtain $A A^* = \bbI_{2n}$, so $A \in \U(2n)$. With the first and third equations, we get $A^T = A$, so $A \in \cS_{2n}^\R(\C)$, and with the two last equations, $A^T J_{2n} A = J_{2n}$, so $A \in \Sp(2n; \C)$. The result follows.
    \end{proof}
    In Corollary~\ref{cor:AutonneTagaki} below, we prove that a matrix $A$ is in $\Sp(n) \cap \cS_{2n}^\R(\C)$ iff it is of the form
    \[ 
    A = V^T V, \quad \text{for some} \quad V \in \Sp(n).
    \]
    In addition, $A = V_0^T V_0 = V_1^T V_1$ with $V_0, V_1 \in \Sp(n)$ iff $V_1 V_0^* \in \Sp(n) \cap \O(2n) \simeq \U(n)$ (see the already mentioned Proposition~\ref{prop:SpO=U} for the last bijection). This proves that
    \[
    \boxed{ \rC(0, n, N) \simeq \Sp(n) \cap \cS^\R_{2n}(\C) \simeq \Sp(n) / \U(n). }
    \]
    
    \begin{theorem}[Class $\rC$]
        The sets $\rC(0, n, N)$ and $\rC(1, n, N)$ are non-empty iff $N = 2n$. Both are path-connected.
    \end{theorem}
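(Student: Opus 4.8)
The plan is to exploit the two structural facts already prepared above: the homeomorphism $\rC(0,n,N)\simeq\Sp(n)/\U(n)$, valid precisely when $N=2n$, together with Lemma~\ref{lem:NonCh1d}, which for the non-chiral class $\rC$ reduces the $d=1$ statement to the $d=0$ statement evaluated at the high-symmetry points $k=0$ and $k=\tfrac12$. So the whole theorem comes down to two things: recording non-emptiness, and proving that $\rC(0,n,2n)$ is path-connected.

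\emph{Non-emptiness.} The preceding Lemma already shows that $\rC(0,n,N)\neq\emptyset$ forces $N=2n$; the same holds for $\rC(1,n,N)$, since any fibre $P(k_0)$ of a map in $\rC(1,n,N)$ must lie in $\rC(0,n,N)$. Conversely, for $N=2n$ the matrix $A=\bbI_{2n}$ lies in $\Sp(n)\cap\cS^\R_{2n}(\C)$, so $P:=\tfrac12(\bbI_{2n}+\ri J_{2n})$ is an explicit element of $\rC(0,n,2n)$ with $\Tr P=n$, and the constant map $k\mapsto P$ witnesses $\rC(1,n,2n)\neq\emptyset$.

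\emph{Connectedness in $d=0$.} Given $P_0,P_1\in\rC(0,n,2n)$, I would write $P_j=\tfrac12(\bbI_{2n}+\ri J_{2n}A_j)$ and invoke Corollary~\ref{cor:AutonneTagaki} to factor $A_j=V_j^TV_j$ with $V_j\in\Sp(n)$. Since $\Sp(n)$ is a compact connected Lie group, its exponential map is onto, so one can write $V_1V_0^{-1}=\re^{X}$ with $X$ in the Lie algebra of $\Sp(n)$ (the anti-Hermitian matrices $X$ with $J_{2n}X$ symmetric); then $s\mapsto V_s:=\re^{sX}V_0$ stays in $\Sp(n)$ and joins $V_0$ to $V_1$. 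Transporting this path downstairs, $A_s:=V_s^TV_s$ lies in $\Sp(n)\cap\cS^\R_{2n}(\C)$ for every $s$ by the ``if'' direction of Corollary~\ref{cor:AutonneTagaki}, and $P_s:=\tfrac12(\bbI_{2n}+\ri J_{2n}A_s)$ is a continuous path in $\rC(0,n,2n)$ from $P_0$ to $P_1$. (Non-constructively, one may simply note that $\rC(0,n,2n)\simeq\Sp(n)/\U(n)$ is connected because $\Sp(n)$ is, exactly as in the proof of Theorem~\ref{thm:AII}.) The $d=1$ case then follows immediately by applying Lemma~\ref{lem:NonCh1d} with $\rX=\rC$: the restrictions $P_j(0),P_j(\tfrac12)$ of any $P_0,P_1\in\rC(1,n,2n)$ lie in the path-connected set $\rC(0,n,2n)$, so both hypotheses of the Lemma hold automatically.

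I do not expect any genuine obstacle here. The only point that needs a little care is that neither $\Sp(n)\cap\cS^\R_{2n}(\C)$ nor $\rC(0,n,2n)$ is convex, so one cannot interpolate $A_0,A_1$ or $P_0,P_1$ linearly; the interpolation has to be performed upstairs in the connected group $\Sp(n)$ — via surjectivity of $\exp$, or via the exponential-of-Hamiltonian-matrices argument used for class $\rAII$ — and then pushed forward through the Autonne--Takagi-type factorization of Corollary~\ref{cor:AutonneTagaki}.
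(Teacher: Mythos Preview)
Your proposal is correct and follows essentially the same route as the paper: reduce to connecting $A_0,A_1$ in $\Sp(n)\cap\cS^\R_{2n}(\C)$ via the Autonne--Takagi-type factorization $A_j=V_j^TV_j$ of Corollary~\ref{cor:AutonneTagaki}, connect the $V_j$'s in the path-connected group $\Sp(n)$ (as in the proof of Theorem~\ref{thm:AII}), and then invoke Lemma~\ref{lem:NonCh1d} for $d=1$. Your treatment is slightly more explicit about non-emptiness and the exponential-map argument, but the strategy is identical.
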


    \begin{proof}
        For $\rC(d=0, n, 2N)$, it is enough to prove that $\Sp(n) \cap \cS_{2n}^\R(\C)$ is path-connected. 
        To connect $A_0$ and $A_1$ in $\Sp(n) \cap \cS_{2n}(\C)$ it suffices to connect the corresponding $V_0$ and $V_1$ in $\Sp(2n)$. This can be done as we already saw in the proof of Theorem~\ref{thm:AII}. Invoking Lemma~\ref{lem:NonCh1d} allows to conclude that $\rC(1,n,2n)$ is path-connected as well.
    \end{proof}

    
    \section{Real chiral classes: $\rBDI$, $\rDIII$, $\rCII$ and $\rCI$}
    \label{sec:chiral}
    
    We now focus on the chiral real classes; by Assumption~\ref{S=TC}, the chiral symmetry operator $S$ will come from the combination of a $T$-symmetry with a $C$-symmetry. In what follows, we will always find a basis for $\cH$ in which $S:=TC$ has the form~\eqref{eq:formS}. In particular, Lemma~\ref{lem:formS} applies, and any $P \in \rX(d,n,2n)$ for $\rX \in \set{\rBDI, \rDIII, \rCII,\rCI}$ will be of the form
    \begin{equation} \label{eq:Q}
       P(k) = \frac12 \begin{pmatrix} \bbI_n & Q(k) \\ Q(k)^* & \bbI_n \end{pmatrix} \quad \text{with} \quad Q(k) \in \U(n).
    \end{equation}
    The $T$-symmetry (or equivalently the $C$-symmetry) of $P(k)$ translates into a condition for $Q(k)$, of the form
    \begin{equation} \label{eq:FTQ}
        F_T(Q(k)) = Q(-k).
    \end{equation}
    With these remarks, we are able to formulate the analogue of Lemma~\ref{lem:NonCh1d} for real chiral classes.
    
    \begin{lemma}[Real chiral classes in $d=1$] \label{lem:Ch1d}
        Let $\rX \in \set{\rBDI, \rDIII, \rCII,\rCI}$. Then $P_0$ and $P_1$ are in the same connected component in $\rX(1,n,2n)$ iff
        \begin{itemize}
            \item $P_0(0)$ and $P_1(0)$ are in the same connected component in $\rX(0,n,2n)$, 
            \item $P_0(\tfrac12)$ and $P_1(\tfrac12)$ are in the same connected component in $\rX(0,n,2n)$, and
            \item there exists a choice of the above interpolations $P_s(0)$, $P_s(\tfrac12)$, $s \in [0,1]$, and therefore of the corresponding unitaries $Q_s(0)$, $Q_s(\tfrac12)$ as in~\eqref{eq:Q}, such that 
            \[ {\rm Winding} (\partial\Omega_0, Q)=0, \]
            where $\Omega_0$ is the half-square defined in~\eqref{eq:Omega0}, and where $Q$ is the continuous family of unitaries defined on $\partial \Omega_0$ via the families
            \[
                \left\{ Q_0(k)\right\}_{k\in [0, 1/2]}, \quad
                 \left\{ Q_1(k)\right\}_{k\in [0, 1/2]}, \quad
                  \left\{ Q_s(0)\right\}_{s\in [0, 1]}, \quad \text{and} \quad
                   \left\{ Q_s(\tfrac12)\right\}_{s\in [0, 1]}.
            \]
        \end{itemize}
    \end{lemma}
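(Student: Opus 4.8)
The plan is to run the argument of Lemma~\ref{lem:NonCh1d} one level down, at the level of the unitary family $Q$ attached to $P$ through \eqref{eq:Q}, while keeping track of the winding obstruction that appears when one extends $\U(n)$-valued (rather than $\cG_n(\cH)$-valued) maps over a rectangle. First I would record two elementary facts about the map $F_T$ of \eqref{eq:FTQ}: it is a continuous involution of $\U(n)$ — apply \eqref{eq:FTQ} at $k$ and at $-k$ to get $F_T\circ F_T=\mathrm{id}$ on the range of any admissible family, and since $\rX(0,n,2n)\neq\emptyset$ supplies a fixed point this forces $F_T\circ F_T=\mathrm{id}$ on all of $\U(n)$ (alternatively, one reads off the explicit form of $F_T$ in each of the four classes). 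Consequently, for every $P\in\rX(1,n,2n)$ the unitaries $Q(0)$ and $Q(\tfrac12)=Q(-\tfrac12)$ are $F_T$-fixed, so $P(0),P(\tfrac12)\in\rX(0,n,2n)$.

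\emph{Necessity.} Given a homotopy $P_s(k)$ from $P_0$ to $P_1$ inside $\rX(1,n,2n)$, with associated unitary family $Q(k,s)$ on the square $\Omega$ of \eqref{eq:Omega}, setting $k=k_0\in\set{0,\tfrac12}$ in the three symmetry relations shows that $s\mapsto P_s(k_0)$ stays in $\rX(0,n,2n)$; this gives the first two bullet points. For the third I would take these very paths as the required interpolations: the loop $Q$ on $\partial\Omega_0$ is then the boundary restriction of the continuous family $Q(\cdot,\cdot)|_{\Omega_0}$, which extends over the contractible rectangle $\Omega_0$, so ${\rm Winding}(\partial\Omega_0,Q)=0$.

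\emph{Sufficiency.} Conversely, assume the three bullet points and fix interpolating paths $P_s(0),P_s(\tfrac12)$ inside $\rX(0,n,2n)$ — equivalently, $F_T$-fixed unitary paths $Q_s(0),Q_s(\tfrac12)$ — for which ${\rm Winding}(\partial\Omega_0,Q)=0$. Since vanishing winding is exactly the obstruction to extending a $\U(n)$-valued map over a rectangle \cite[Section~IV.B]{Gontier2019numerical}, I obtain a continuous family $\set{Q_s(k)}_{k\in[0,1/2],\,s\in[0,1]}$ with the prescribed boundary values on $\partial\Omega_0$. I would then extend it to $k\in[-\tfrac12,0]$ by $Q_s(k):=F_T(Q_s(-k))$. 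This is continuous (at $k=0$ both prescriptions return the $F_T$-fixed unitary $Q_s(0)$, and $F_T$ is continuous), periodic in $k$ (because $Q_s(-\tfrac12)=F_T(Q_s(\tfrac12))=Q_s(\tfrac12)$), and satisfies $F_T(Q_s(k))=Q_s(-k)$ for all $k$ (for $k\ge0$ this is the defining formula read at $-k$; for $k\le0$ it follows from $F_T\circ F_T=\mathrm{id}$). Re-packaging $Q_s$ into $P_s$ via \eqref{eq:Q} — so that Lemma~\ref{lem:formS} supplies the $S$-symmetry while $F_T(Q_s(k))=Q_s(-k)$ supplies the $T$- and $C$-symmetries — produces a homotopy from $P_0$ to $P_1$ in $\rX(1,n,2n)$.

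I expect the one genuinely delicate point to be the symmetric reflection across $k=0$ in the last step: it is continuous, periodic, and symmetry-preserving only because the interpolations at the two invariant momenta were chosen inside $\rX(0,n,2n)$, keeping $Q_s(0)$ and $Q_s(\tfrac12)$ $F_T$-fixed for every $s$. This is precisely why the third bullet must be stated as the existence of an admissible choice of interpolations with vanishing winding, and cannot simply be dropped as it was in Lemma~\ref{lem:NonCh1d} (where projection-valued boundary data always extend over a rectangle, with no obstruction). Everything else is bookkeeping on top of the quoted extension statements for projection- and unitary-valued maps on a square.
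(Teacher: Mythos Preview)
Your proof is correct and follows the same approach as the paper: use the equivalence between vanishing winding and extendability of a $\U(n)$-valued map over the half-square $\Omega_0$, then reflect via $F_T$ to the other half. In fact you are more thorough than the paper, which only writes out the sufficiency direction and leaves necessity implicit; your checks of continuity at $k=0$, periodicity at $k=\pm\tfrac12$, and the symmetry relation on the reflected half are exactly the verifications the paper elides. One small remark: your first argument that $F_T\circ F_T=\mathrm{id}$ globally (``a fixed point forces it'') is not valid as stated, but your alternative---reading off the explicit form of $F_T$ in each class---is, and that is all you need.
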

    \begin{proof}
        As was already mentioned, the vanishing of the winding in the statement is equivalent to the existence of a continuous extension of the map $Q(k,s) \equiv Q_s(k)$ to $(k,s) \in \Omega_0$. For $k \in [-\tfrac12,0]$ and $s \in [0,1]$, we define 
        \[ Q_s(k) := F_T(Q_s(-k)), \]
        where $F_T$ is the functional relation in~\eqref{eq:FTQ}. Using~\eqref{eq:Q}, we can infer the existence of a family of projections $\set{P_s(k)}_{k \in \TT^1}$ which depends continuously on $s \in [0,1]$, is in $\rX(1,n,2n)$ for all $s \in [0,1]$, and restricts to $P_0$ and $P_1$ at $s=0$ and $s=1$, respectively. This family thus provides the required homotopy.
    \end{proof}
    
    \subsection{Class $\rBDI$} 
    \label{ssec:BDI}
    
    We start from class $\rBDI$, characterized by even $T$- and $C$-symmetries. 
    
    \begin{lemma} \label{lem:normalForm_BDI}
        Assume $\rBDI(0, n, N)$ is non empty. Then $N = 2n$, and there is a basis of $\cH$ in which
        \begin{equation*} 
            T = \begin{pmatrix}
                K_n & 0 \\ 0 & K_n
            \end{pmatrix}, \quad
            C =  \begin{pmatrix}
                K_n & 0 \\ 0 & -K_n
            \end{pmatrix},
            \quad \text{so that} \quad
            S = TC = \begin{pmatrix}
                \bbI_n & 0 \\ 0 & -\bbI_n
            \end{pmatrix}.
        \end{equation*}
    \end{lemma}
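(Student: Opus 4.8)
The plan is to bootstrap from the two normal‑form results already in hand: Lemma~\ref{lem:formS} to standardize the chiral operator, and Lemma~\ref{lem:formTeven} to standardize the (even) time‑reversal operator on each chiral eigenspace. First I would note that, by Assumption~\ref{S=TC}, the operator $S := TC$ is a genuine chiral symmetry (unitary with $S^2 = \bbI_\cH$), so that any $P \in \rBDI(0,n,N)$ automatically also belongs to $\rAIII(0,n,N)$. Hence $\rAIII(0,n,N)$ is non‑empty, and Lemma~\ref{lem:formS} immediately yields $N = 2n$ together with a basis of $\cH$ in which $S$ has the block form~\eqref{eq:formS}. I would then record the orthogonal eigenspace decomposition $\cH = \cH_+ \oplus \cH_-$, with $\cH_\pm := \ker(S \mp \bbI_\cH)$; since $S = S^* = S^{-1}$ these summands are orthogonal, and by the previous step each has dimension $n$.

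Next I would exploit that $T$ and $C$ are \emph{both even}: with $\eps_T \eps_C = +1$, Remark~\ref{rmk:TC=sigmaCT} gives that $S$ commutes with $T$ and with $C$. Being anti‑unitary and commuting with $S$, each of $T$ and $C$ maps $\cH_+$ into $\cH_+$ and $\cH_-$ into $\cH_-$ (if $S\psi = \pm\psi$ then $S(T\psi) = T(S\psi) = \pm T\psi$, and likewise for $C$). Writing $T_\pm, C_\pm$ for the induced anti‑unitary involutions on $\cH_\pm$, I would restrict the identity $S = TC$ to $\cH_+$, where $S$ acts as $+\bbI$, to get $T_+ C_+ = \bbI_{\cH_+}$ and hence $C_+ = T_+^{-1} = T_+$; restricting to $\cH_-$, where $S$ acts as $-\bbI$, gives $T_- C_- = -\bbI_{\cH_-}$ and hence $C_- = -T_-$. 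Thus, once the splitting is fixed, $C$ is completely determined by $T$.

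Finally I would apply Lemma~\ref{lem:formTeven} separately to the anti‑unitary involutions $T_+$ on $\cH_+$ and $T_-$ on $\cH_-$, obtaining orthonormal bases of these two $n$‑dimensional spaces in which $T_+ = K_n$ and $T_- = K_n$. Taking the union of these bases produces an orthonormal basis of $\cH$ adapted to the decomposition $\cH_+ \oplus \cH_-$; in this basis $S$ still has the form~\eqref{eq:formS}, while $T$ is block‑diagonal with blocks $T_+ = T_- = K_n$, and $C$ is block‑diagonal with blocks $C_+ = T_+ = K_n$ and $C_- = -T_- = -K_n$. This is exactly the asserted normal form.

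I do not expect a substantial obstacle, since the argument is essentially bookkeeping around the chiral splitting. The only points that genuinely require care are verifying the commutation relations $ST = TS$ and $SC = CS$ — this is precisely where the evenness of \emph{both} symmetries is used, via Remark~\ref{rmk:TC=sigmaCT} — and checking that putting $T$ into its standard form independently on $\cH_+$ and on $\cH_-$ does not disturb the block structure of $S$, which it does not because the new basis is still adapted to the eigenspace decomposition of $S$.
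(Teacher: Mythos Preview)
Your argument is correct, but it follows a genuinely different route from the paper's. The paper starts from a projection $P_0 \in \rBDI(0,n,N)$: it applies Lemma~\ref{lem:formTeven} on $\Ran P_0$ to get a $T$-real basis there, then \emph{transports} this basis to $\Ran(\bbI-P_0)$ via $C$. In the resulting basis $T$ is already block-diagonal, but $C$ and $S$ are block-\emph{off}-diagonal (since $C$ exchanges $\Ran P_0$ and its complement), and a final explicit unitary change of basis is needed to bring $S$ to the form~\eqref{eq:formS}. You instead work with the $S$-eigenspace decomposition from the outset, use $\eps_T\eps_C = +1$ to see that $T$ and $C$ preserve each eigenspace, and apply Lemma~\ref{lem:formTeven} separately on $\cH_+$ and $\cH_-$; this lands directly in the desired normal form without the extra conjugation. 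Your approach is cleaner here, but note that it relies essentially on $ST = TS$; the paper's ``transport via $C$'' scheme has the advantage of applying uniformly to the other real chiral classes $\rCI$ and $\rDIII$, where $\eps_T\eps_C = -1$ and $T$, $C$ swap $\cH_+$ and $\cH_-$, so your decomposition strategy would not go through unchanged.
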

    
    \begin{proof}
        Let $P_0 \in \rBDI(0, n, 2n)$, and let $\{\phi_1, \cdots, \phi_n\}$ be an orthonormal basis for ${\rm Ran} \, P_0$ such that $T \phi_j = \phi_j$ for all $1 \le j \le n$ (see Lemma~\ref{lem:formTeven}). We set
        \[
            \forall 1 \le j \le n, \quad \phi_{n+j} = C \phi_j.
        \]
        Since $C$ is anti-unitary, and maps ${\rm Ran} \, P_0$ into ${\rm Ran} \, (\bbI-P_0)$, the family $\{ \phi_1, \cdots, \phi_{2n}\}$ is an orthonormal basis for $\cH$. Since $T$ and $C$ commute, we have for all $1 \le j \le n$,
        \begin{equation} \label{eq:BDI_TC}
            T \phi_{n+j}  = T C \phi_j = C T \phi_j = C \phi_j = \phi_{n+j}, \quad \text{and} \quad 
            C \phi_{n+j} = C^2 \phi_j = \phi_j.
        \end{equation}
        Therefore in this basis the operators $T$ and $C$ take the form
        \[
            T = \begin{pmatrix}
                K_n & 0 \\ 0 & K_n
            \end{pmatrix}, \quad 
            C = \begin{pmatrix}
               0 & K_n \\ K_n & 0
            \end{pmatrix}
        \quad \text{and} \quad
        S = \begin{pmatrix}
            0 & \bbI_n \\ \bbI_n & 0
        \end{pmatrix}.
        \]
        We now change basis via the matrix $U := \frac{1}{\sqrt{2}} \begin{pmatrix}
            \bbI_n & \bbI_n \\ \bbI_n & - \bbI_n
        \end{pmatrix}$ to obtain the result.
    \end{proof}

    Using Lemma~\ref{lem:normalForm_BDI}, one can describe a projection $P(k)$ with its corresponding unitary $Q(k)$. The condition $T^{-1} P(k) T =  P(-k)$ reads
    \[
    \overline{Q}(-k) = Q(k).
    \]
    So a projection $P$ is in  $\rBDI(0, n , 2n)$ iff the corresponding matrix $Q \in \U(n)$ satisfies $\overline{Q} = Q$, that is $Q \in \O(n)$. This proves that
    \[
    \boxed{ \rBDI(0, n, 2n) \simeq \O(n). }
    \]
    Recall that $\O(n)$ has two connected components, namely $\det^{-1} \{ \pm 1\}$.

    \begin{theorem}[Class $\rBDI$] \label{th:BDI}
        The set $\rBDI(d, n, N)$ is non-empty iff $N=2n$. 
        \begin{itemize}
            \item Let $\Index_0^{\rBDI} \colon \rBDI(0,n,2n) \to \Z_2$ be the index map defined by
            \[
            \forall P \in \rBDI(0,n, 2n), \quad \Index_0^{\rBDI} (P) = \det(Q).
            \]
            Then $P_0$ is homotopic to $P_1$ in $\rBDI(0,n,2n)$ iff $\Index_0^{\rBDI}(P_0) = \Index_0^{\rBDI}(P_1)$.
            
            \item There is an index map $\Index_1^{\rBDI} \colon \rBDI(1,n,2n) \to \Z_2 \times \Z$ such that $P_0$ is homotopic to $P_1 $ in $\rBDI(1,n,2n)$ iff $\Index_1^{\rBDI}(P_0) = \Index_1^{\rBDI}(P_1)$.
        \end{itemize}
    \end{theorem}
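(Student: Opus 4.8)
The plan is to build everything on the identification $\rBDI(0,n,2n)\simeq\O(n)$ (with $P\leftrightarrow Q$, $Q\in\O(n)$) already established before the theorem, and then to invoke Lemma~\ref{lem:Ch1d} for the passage to $d=1$. The group $\O(n)$ has exactly two path-components, $\SO(n)=\det^{-1}(1)$ and $\det^{-1}(-1)$, the latter being the coset $R\,\SO(n)$ for any fixed reflection $R\in\O(n)\setminus\SO(n)$, hence homeomorphic to $\SO(n)$; so $\det Q$ classifies the components of $\rBDI(0,n,2n)$, which is the assertion about $\Index_0^{\rBDI}$. Necessity is clear since $P\mapsto\det Q$ is continuous and $\{\pm1\}$-valued. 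For sufficiency, given $P_0,P_1$ with $\det Q_0=\det Q_1$: if this common value is $+1$, write $Q_j=\re^{A_j}$ with $A_j\in\cA_n(\R)$ (the exponential map is onto $\SO(n)$) and interpolate by $Q_s:=\re^{(1-s)A_0+sA_1}$, exactly as in the proof of Theorem~\ref{thm:AI}; if it is $-1$, do the same with $R^{-1}Q_j\in\SO(n)$ and multiply back by $R$. Feeding $Q_s$ into~\eqref{eq:Q} yields a path in $\rBDI(0,n,2n)$.

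\textbf{Dimension $1$: the index.} Here $P\in\rBDI(1,n,2n)$ corresponds to $Q:\TT^1\to\U(n)$ satisfying $\overline{Q(-k)}=Q(k)$, which is the relation~\eqref{eq:FTQ} for this class. In particular $Q(0),Q(\tfrac12)\in\O(n)$, so $\det Q(0),\det Q(\tfrac12)\in\{\pm1\}$. Writing $\det Q(k)=\re^{\ri\theta(k)}$ with $\theta$ a continuous lift on $[-\tfrac12,\tfrac12]$, the symmetry forces $\theta(-k)=2\theta(0)-\theta(k)$; hence the phase increment of $\det Q$ along $[-\tfrac12,0]$ equals that along $[0,\tfrac12]$, so ${\rm Winding}(\TT^1,Q)$ is twice the (a priori half-integer) winding of $\det Q$ over the half-torus, and
\[
(-1)^{{\rm Winding}(\TT^1,Q)}=\det Q(0)\,\det Q(\tfrac12).
\]
Thus $\det Q(\tfrac12)$ is redundant, and I define
\[
\Index_1^{\rBDI}(P):=\big(\det Q(0),\ {\rm Winding}(\TT^1,Q)\big)\in\Z_2\times\Z.
\]
Necessity is immediate: along a homotopy $\det Q(0)$ is locally constant and the winding number is invariant.

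\textbf{Dimension $1$: sufficiency.} Suppose $\Index_1^{\rBDI}(P_0)=\Index_1^{\rBDI}(P_1)$, so $\det Q_0(0)=\det Q_1(0)$ and $w_0:={\rm Winding}(\TT^1,Q_0)=w_1=:{\rm Winding}(\TT^1,Q_1)$; the displayed identity then also gives $\det Q_0(\tfrac12)=\det Q_1(\tfrac12)$. The first two bullets of Lemma~\ref{lem:Ch1d} thus hold by the $d=0$ result, and we may choose the interpolations $P_s(0)$, $P_s(\tfrac12)$ inside $\rBDI(0,n,2n)$; since the determinant is constant along any path in $\O(n)$, the paths $s\mapsto\det Q_s(0)$ and $s\mapsto\det Q_s(\tfrac12)$ are constant and contribute no winding, so — independently of the chosen interpolations —
\[
{\rm Winding}(\partial\Omega_0,Q)=\mu_0-\mu_1=\tfrac12(w_0-w_1)=0,
\]
where $\mu_j$ is the winding of $\det Q_j$ over $[0,\tfrac12]$ and $\Omega_0$ is the half-square of~\eqref{eq:Omega0}. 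Hence the third bullet of Lemma~\ref{lem:Ch1d} is met and $P_0\sim P_1$ in $\rBDI(1,n,2n)$.

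\textbf{Main obstacle.} The only delicate point is the winding bookkeeping on $\partial\Omega_0$: keeping track of orientations and of the ``winding along an arc'' (well defined once endpoints are fixed), and in particular proving the identity $(-1)^{{\rm Winding}(\TT^1,Q)}=\det Q(0)\det Q(\tfrac12)$, which is what makes the third $\Z_2$ component redundant and forces the index to land in $\Z_2\times\Z$, in agreement with Table~\ref{table:us} and in the spirit of Remark~\ref{rmk:weak}.
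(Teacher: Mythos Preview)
Your proof is correct and follows essentially the same route as the paper's. The only cosmetic difference is in the packaging of the second component of the index: the paper defines a ``semi-winding'' $\cW^{1/2}(P):=\tfrac{1}{\pi}(\alpha(\tfrac12)-\alpha(0))$ on the half-torus, whereas you use the full winding ${\rm Winding}(\TT^1,Q)$; your symmetry computation $\theta(-k)=2\theta(0)-\theta(k)$ shows these two integers coincide, so the indices agree literally. Your additional identity $(-1)^{{\rm Winding}(\TT^1,Q)}=\det Q(0)\,\det Q(\tfrac12)$ is a nice explicit way to see that $\det Q(\tfrac12)$ is redundant, which the paper leaves implicit.
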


    \begin{proof}
        Recall that $\SO(n)$ is path-connected, see the proof of Theorem~\ref{thm:AI}. The complement $\O(n) \setminus \SO(n)$ is in bijection with $\SO(n)$, by multiplying each orthogonal matrix with determinant $-1$ by the matrix ${\rm diag}(1,1,\ldots,1,-1)$. This proves the first part.
        
        \medskip
        
        We now focus on dimension $d = 1$. Let $P(k)$ be in $\rBDI(1, n, 2n)$, and let $Q(k)$ be the corresponding unitary. Let $\alpha(k) : [0, \tfrac12] \to \R$ be a continuous map so that
        \[
            \forall k \in [0, \tfrac12], \quad \det \, Q(k) = \re^{ \ri \alpha(k)}.
        \]
        Since $Q(0)$ and $Q(\tfrac12)$ are in $\O(n)$, we have $\det \, Q(0) \in \{ \pm 1\}$ and $\det \, Q(\tfrac12) \in \{ \pm 1\}$. We define
        \[
             \cW^{1/2} (P) :=  \cW^{1/2} (Q) := \dfrac{1}{\pi} \left( \alpha(\tfrac12) - \alpha(0) \right)  \quad \in \Z.
        \]
        The number $\cW^{1/2} (Q) \in \Z$ counts the number of {\em half turns} that the determinant is winding as $k$ goes from $0$ to $\tfrac12$. We call this map the {\em semi-winding}. We finally define the index map $\Index_1^{\rBDI} \colon \rBDI(1,n,2n) \to \Z_2 \times \Z$ by
        \[
            \forall P \in \rBDI(1, n, 2n), \quad \Index_1^{\rBDI}(P) := \left( \det \, Q(0), \ \cW^{1/2}(P) \right) \quad \in \Z_2 \times \Z.
        \]
        
        Let $P_0, P_1$ be in $\rBDI(1, n, 2n)$ such that $\Index_1^{\rBDI}(P_0) = \Index_1^{\rBDI}(P_1)$, and let us construct an homotopy between $P_0$ and $P_1$. First, we have $\det \, Q_0(0) = \det \, Q_1(0)$, and, since $\cW^{1/2}(P_0) = \cW^{1/2}(P_1)$, we also have $\det \, Q_0(\tfrac12) = \det \, Q_1(\tfrac12)$.
        
       Let $Q_s(0)$ be a path in $\O(n)$ connecting $Q_0(0)$ and $Q_1(0)$, and let $Q_s(\tfrac12)$ be a path connecting $Q_0(\tfrac12)$ and $Q_1(\tfrac12)$. This defines a continuous family of unitaries on the boundary of the half-square $\Omega_0 := [0, \tfrac12] \times [0,1]$. 
       Since $Q_s(0)$ and $Q_s(\tfrac12)$ are in $\O(n)$ for all $s$, their determinants are constant, equal to $\{ \pm 1\}$, and they do not contribute to the winding of the determinant of this unitary-valued map. So the winding along the boundary equals
        \[
        {\rm Winding}(\partial \Omega_0, Q) = \cW^{1/2}(P_0) - \cW^{1/2}(P_1) = 0.
        \]
    Lemma~\ref{lem:Ch1d} allows then to conclude the proof.
    \end{proof}

    
    \subsection{Class $\rCI$}
    \label{ssec:CI}
    
    In class $\rCI$, the $T$-symmetry is even ($T^2 = \bbI_\cH$) while the $C$-symmetry is odd ($C^2 = - \bbI_\cH$).
    
    \begin{lemma}
        Assume $\rCI(0, n, N)$ is non empty. Then $N=2n$, and there is a basis of $\cH$ in which
        \begin{equation} \label{eq:form_CI}
            T = \begin{pmatrix}
                0 & K_n \\ K_n & 0
            \end{pmatrix}, \quad
            C = \begin{pmatrix}
                0 & -K_n \\ K_n & 0
            \end{pmatrix}
            \quad \text{so that} \quad
            S = TC = \begin{pmatrix}
                \bbI_n & 0 \\ 0 & - \bbI_n
            \end{pmatrix}.
        \end{equation}
    \end{lemma}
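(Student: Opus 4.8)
The strategy parallels the proof of Lemma~\ref{lem:normalForm_BDI}: from one projection $P_0 \in \rCI(0,n,N)$ I would build an orthonormal basis of $\cH$ in which the three operators are almost in the desired shape, and then perform a single explicit change of basis that diagonalizes $S$ while simultaneously normalizing $T$ and $C$. First, since $\rCI(0,n,N)$ is non-empty and, by Assumption~\ref{S=TC}, $S=TC$ is unitary with $S^2 = \bbI_\cH$, any $P_0 \in \rCI(0,n,N)$ is unitarily equivalent to $\bbI_\cH - P_0$; exactly as in the proof of Lemma~\ref{lem:formS} this forces $N = 2n$. The relation $T^{-1}P_0 T = P_0$ says that $T$ commutes with $P_0$, so $\Ran P_0$ is $T$-invariant; since $T$ restricts to an anti-unitary operator on $\Ran P_0$ squaring to $\bbI$, Lemma~\ref{lem:formTeven} provides an orthonormal basis $\{\phi_1,\dots,\phi_n\}$ of $\Ran P_0$ with $T\phi_j = \phi_j$. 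The relation $C^{-1}P_0 C = \bbI_\cH - P_0$ shows that $C$ sends $\Ran P_0$ onto $\Ran(\bbI_\cH-P_0)$, so setting $\phi_{n+j} := C\phi_j$ for $1 \le j \le n$ yields an orthonormal basis $\{\phi_1,\dots,\phi_{2n}\}$ of $\cH$.

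Next I would read off the action of the operators in this basis. From $C^2 = -\bbI_\cH$ we get $C\phi_{n+j} = -\phi_j$; from Remark~\ref{rmk:TC=sigmaCT} (with $\eps_T = 1$, $\eps_C = -1$, hence $TC = -CT$) together with $T\phi_j = \phi_j$ we get $T\phi_{n+j} = TC\phi_j = -CT\phi_j = -\phi_{n+j}$; and hence $S = TC$ acts by $S\phi_j = -\phi_{n+j}$ and $S\phi_{n+j} = -\phi_j$. Therefore, in this basis,
\[
T = \begin{pmatrix} K_n & 0 \\ 0 & -K_n \end{pmatrix}, \qquad C = \begin{pmatrix} 0 & -K_n \\ K_n & 0 \end{pmatrix}, \qquad S = TC = \begin{pmatrix} 0 & -\bbI_n \\ -\bbI_n & 0 \end{pmatrix}.
\]
It then remains to change basis via the real orthogonal matrix $V := \tfrac{1}{\sqrt{2}}\begin{pmatrix} \bbI_n & \bbI_n \\ -\bbI_n & \bbI_n \end{pmatrix}$: since $V$ is real it commutes with $K_{2n}$, so a direct block computation gives $V^* S V = {\rm diag}(\bbI_n, -\bbI_n)$ while $V^* T V$ and $V^* C V$ take exactly the forms in~\eqref{eq:form_CI}; the identity $S = TC$ of course persists for these final matrices.

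All the steps are elementary block-matrix manipulations. The only delicate point is the bookkeeping for anti-linear operators: correctly extracting the matrix of $T$ and $C$ from their action on the $\phi_i$ (there is an extra complex conjugation), and tracking that conjugation factor when conjugating by $V$ (via $K_{2n} V = \overline{V}\, K_{2n}$, which is trivial here since $V$ is real). I do not expect any genuine obstacle: the lemma merely asserts that an even $T$, an odd $C$, and their product $S = TC$ can be put into normal form simultaneously, and the explicit $V$ above achieves it.
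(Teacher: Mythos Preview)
Your proposal is correct and follows essentially the same route as the paper: the paper's proof simply says ``similar to the one of Lemma~\ref{lem:normalForm_BDI}'' and records the modified identities $T\phi_{n+j}=-\phi_{n+j}$, $C\phi_{n+j}=-\phi_j$, leaving the intermediate block forms and the final change of basis implicit. Your write-up just makes these steps explicit (with a specific real orthogonal $V$ that indeed produces the forms in~\eqref{eq:form_CI}); there is no substantive difference in strategy.
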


    \begin{proof}
        The proof is similar to the one of Lemma~\ref{lem:normalForm_BDI}. This time, since $C^2 = - \bbI$ and $TC = - CT$, we have, instead of~\eqref{eq:BDI_TC},
        \[
            T \phi_{n+j}= T C \phi_{j} = - CT \phi_{j} = - C \phi_j = - \phi_{n+j}, \quad \text{and} \quad
            C \phi_{n+j} = C^2 \phi_{j} = - \phi_j. \qedhere
        \]
    \end{proof}
    
    Using again Lemma~\ref{lem:normalForm_BDI}, we describe a projection $P(k)$ with its corresponding unitary $Q(k)$. The condition $T^{-1} P(k) T = P(-k)$ gives
    \[
    Q(-k)^T = Q(k).
    \]
    In particular, if $P \in \rCI(0, n, 2n)$, the corresponding $Q$ satisfies $Q^T = Q$. In Corollary~\ref{cor:AutonneTagaki} below, we prove that a matrix $Q$ is in $\U(n) \cap \cS_n^{\R}(\C)$ iff it is of the form
    \[
        Q = V^T V, \quad \text{for some} \quad V \in \U(n).
    \]
    In addition, we have $Q = V_0^T V_0 = V_1^T V_1$ with $V_0, V_1 \in \U(n)$ iff $V_0 V_1^* \in \O(n)$. This proves that
    \[
    \boxed{ \rCI(0, n, 2n) \simeq \U(n) \cap \cS_n^{\R}(\C) \simeq \U(n) / \O(n).}
    \]

    \begin{theorem}[Class $\rCI$]
        The set $\rCI(d, n, N)$ is non-empty iff $N=2n$. It is path-connected both for $d=0$ and for $d=1$.
    \end{theorem}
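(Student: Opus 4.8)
\section*{Proof proposal}

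The plan is to handle $d=0$ and $d=1$ separately, building on the identification $\rCI(0,n,2n)\simeq \U(n)\cap\cS_n^\R(\C)$ established just above, the Autonne--Takagi factorization $Q=V^TV$ with $V\in\U(n)$ of Corollary~\ref{cor:AutonneTagaki}, and Lemma~\ref{lem:Ch1d}. The non-emptiness statement ``iff $N=2n$'' is contained in the Lemma preceding the theorem (for the converse: when $N=2n$, the normal form~\eqref{eq:form_CI} together with the choice $Q\equiv\bbI_n$ exhibits an element of $\rCI(0,n,2n)$, and constant maps then settle $d=1$), so what remains is path-connectedness.

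For $d=0$: given $P_0,P_1\in\rCI(0,n,2n)$ with associated symmetric unitaries $Q_0,Q_1\in\U(n)\cap\cS_n^\R(\C)$, write $Q_j=V_j^TV_j$ with $V_j\in\U(n)$. Since $\U(n)$ is path-connected, choose a path $s\mapsto V_s$ from $V_0$ to $V_1$ and set $Q_s:=V_s^TV_s$. This is a continuous path in $\U(n)\cap\cS_n^\R(\C)$, and pulling it back through the bijection~\eqref{eq:Q} yields a path from $P_0$ to $P_1$ in $\rCI(0,n,2n)$.

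For $d=1$: we check the three conditions of Lemma~\ref{lem:Ch1d}. The first two --- that $P_0(0)$ and $P_1(0)$, and likewise $P_0(\tfrac12)$ and $P_1(\tfrac12)$, lie in the same component of $\rCI(0,n,2n)$ --- are automatic from the $d=0$ result. For the third, start from arbitrary interpolations $P_s(0),P_s(\tfrac12)$ (hence $Q_s(0),Q_s(\tfrac12)$) and let $w:={\rm Winding}(\partial\Omega_0,Q)\in\Z$. The point is that this winding can be corrected without leaving the symmetry class: for any symmetric unitary $Q_*=V_*^TV_*$, the loop $t\in[0,1]\mapsto V_*^T\,{\rm diag}(\re^{2\pi\ri t},1,\dots,1)\,V_*$ takes values in $\U(n)\cap\cS_n^\R(\C)$, is based at $Q_*$, and has $\det$-winding $1$. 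Concatenating the interpolation $Q_s(0)$ with $|w|$ copies of this loop based at $Q_1(0)$, traversed with the orientation opposite to that of the corresponding edge of $\partial\Omega_0$ (and reparametrizing so the endpoints stay $Q_0(0)$ and $Q_1(0)$), produces new interpolations for which ${\rm Winding}(\partial\Omega_0,Q)=0$, all four edge families remaining in the correct classes. Lemma~\ref{lem:Ch1d} then provides the desired homotopy between $P_0$ and $P_1$.

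The main obstacle is the third bullet of Lemma~\ref{lem:Ch1d}: one must be careful that the correcting loop genuinely lives in $\U(n)\cap\cS_n^\R(\C)\simeq\U(n)/\O(n)$ --- it need not lift to a loop in $\U(n)$, which is precisely why its $\det$-winding may be the odd integer $1$ and not only even integers --- and one must keep track of the orientation of $\partial\Omega_0$ so that inserting $|w|$ such loops changes the total winding by exactly $-w$. Everything else reduces to the path-connectedness of $\U(n)$ and the bijection~\eqref{eq:Q}.
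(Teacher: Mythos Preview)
Your proposal is correct and follows essentially the same route as the paper: Autonne--Takagi for $d=0$, then Lemma~\ref{lem:Ch1d} for $d=1$ together with a correction of the winding along the $k=0$ edge. The only cosmetic difference is in the winding-curing step: the paper modifies the lift $V_s(0)$ by a diagonal phase, $\widetilde V_s(0)=\mathrm{diag}(\re^{\ri W\pi s},1,\dots,1)\,V_s(0)$, so that $\widetilde Q_s(0)=\widetilde V_s(0)^T\widetilde V_s(0)$ picks up the required extra $\det$-winding, whereas you concatenate $|w|$ explicit loops $t\mapsto V_*^T\,\mathrm{diag}(\re^{2\pi\ri t},1,\dots,1)\,V_*$ in $\U(n)\cap\cS_n^\R(\C)$; these are the same maneuver.
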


    \begin{proof}
        Given two matrices $Q_0$, $Q_1$ in $\U(n) \cap \cS_n^{\R}(\C)$, we can connect them in $\U(n) \cap \cS_n^{\R}(\C)$ by connecting the corresponding $V_0$ and $V_1$ in $\U(n)$. This proves that $\rCI(0, n, 2n)$ is connected.
        
        \medskip
        
        We now focus on the case $d = 1$. Let $P_0(k)$ and $P_1(k)$ be two families in $\rCI(1,n)$, with corresponding unitaries $Q_0$ and $Q_1$. Let $V_0(0), V_1(0) \in \U(n)$ so that
        \[
            Q_0(0) = V_0(0)^T V_0(0), \quad \text{and} \quad Q_1(0) = V_1(0)^T V_1(0).
        \]
        Let $V_s(0)$ be a homotopy between $V_0(0)$ and $V_1(0)$ in $\U(n)$, and set
        \[
            Q_s(0) := V_s(0)^T V_s(0).
        \]
        Then, $Q_s(0)$ is a homotopy between $Q_0(0)$ and $Q_1(0)$ in $\rCI(0,n, 2n)$. We construct similarly an homotopy between $Q_0(\tfrac12)$ and $Q_1(\tfrac12)$ in $\rCI(0,n, 2n)$. 
        This gives a path of unitaries on the boundary of the half-square $\Omega_0$. We can extend this family inside $\Omega_0$ iff the winding of the determinant along the boundary loop vanishes.
 
        \medskip
        
        Let $W \in \Z$ be this winding. There is no reason {\em a priori} to have $W = 0$. However, if $W \neq 0$, we claim that we can cure the winding by modifying the path $V_s(0)$ connecting $V_0(0)$ and $V_1(0)$. Indeed, setting
        \[
            \widetilde{V}_s(0) = {\rm diag} (\re^{ \ri W\pi s/2}, 1, 1, \cdots, 1) V_s(0), \quad \text{and} \quad \widetilde{Q}_s(0) := \widetilde{V}_s(0)^T \widetilde{V}_s(0),
        \]
        we can check that the family $\widetilde{Q}_s(0)$ also connects $Q_0(0)$ and $Q_1(0)$ in $\rCI(0,n ,2n)$, and satisfies
        \[
        \det \ \widetilde{Q}_s(0) =  \re^{ \ri W \pi s} \, \det \ Q_s(0).
        \]
        This cures the winding, and Lemma~\ref{lem:Ch1d} allows to conclude that the class $\rCI(1, n, 2n)$ is path-connected.
    \end{proof}


    \subsection{Class $\rDIII$}
    \label{ssec:DIII}
    
    The class $\rDIII$ mirrors $\rCI$, since here the $T$-symmetry is odd ($T^2 = -\bbI_\cH$) while the $C$-symmetry is even ($C^2 = \bbI_\cH$). This class has been studied {\em e.g.} in~\cite{deNittis2021cohomology}.
    \begin{lemma} \label{lem:normalForm_DIII}
        Assume $\rDIII(0, n, N)$ is non empty. Then $n = 2m$ is even, and $N = 2n = 4m$ is a multiple of $4$. There is a basis of $\cH$ in which
        \[
        T = \begin{pmatrix}
            0 & K_n J_n \\ K_n J_n & 0
        \end{pmatrix}, 
    \quad
        C = \begin{pmatrix}
            0 & K_n J_n \\ - K_n J_n & 0
        \end{pmatrix}, \quad \text{and} \quad
    S = \begin{pmatrix}
        \bbI_{n} & 0  \\ 0 & -\bbI_{n}
    \end{pmatrix}.
        \]
    \end{lemma}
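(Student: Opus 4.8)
The plan is to adapt the strategy already used for Lemma~\ref{lem:normalForm_BDI} and its $\rCI$-analogue, constructing the basis by starting from the range of a fixed projection $P_0 \in \rDIII(0,n,N)$ and propagating through the symmetry operators. First I would take $P_0 \in \rDIII(0,n,N)$; since $T^2 = -\bbI_\cH$, Lemma~\ref{lem:formTodd} already forces $n$ to be even, say $n = 2m$, and it gives an orthonormal basis $\{\phi_1, \dots, \phi_n\}$ of $\Ran\, P_0$ of the form $\{\psi_1, T\psi_1, \psi_3, T\psi_3, \dots\}$ (after reordering), in which the restriction of $T$ to $\Ran\, P_0$ is $J_n K_n$ — equivalently, after a further reshuffle, I can arrange that $T$ acts on $\Ran P_0$ as $K_n J_n$ to match the stated normal form. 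As in the $\rBDI/\rCI$ proofs, because $C$ is anti-unitary and intertwines $\Ran\,P_0$ with $\Ran\,(\bbI - P_0)$, setting $\phi_{n+j} := C\phi_j$ for $1 \le j \le n$ produces an orthonormal basis $\{\phi_1, \dots, \phi_{2n}\}$ of $\cH$; this already yields $N = 2n = 4m$.

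Next I would compute how $T$ and $C$ act on the second block $\{\phi_{n+1}, \dots, \phi_{2n}\}$. Since $C^2 = \bbI_\cH$ we get $C\phi_{n+j} = C^2\phi_j = \phi_j$, so $C$ is block-off-diagonal with both off-diagonal blocks equal to the matrix representing $T|_{\Ran P_0}$ — wait, more carefully: $C$ sends $\phi_j \mapsto \phi_{n+j}$ (identity block) and $\phi_{n+j}\mapsto \phi_j$ (identity block), but it is anti-linear, so in matrix form $C = \begin{pmatrix} 0 & K_n \\ K_n & 0\end{pmatrix}$ relative to this basis. For $T$, using Assumption~\ref{S=TC} in the odd/even case, $T$ and $C$ anticommute (since $\eps_T\eps_C = -1$), so $T\phi_{n+j} = TC\phi_j = -CT\phi_j = -C(K_nJ_n \text{ applied to } \phi\text{'s in first block})$, which one works out to give $T$ the block-off-diagonal form $\begin{pmatrix} 0 & -K_nJ_n \\ \text{(something)} & 0\end{pmatrix}$. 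At this intermediate stage $T$ and $C$ are both block-off-diagonal and $S = TC$ is block-diagonal but equal to $\begin{pmatrix} \bbI_n & 0 \\ 0 & -\bbI_n\end{pmatrix}$ up to sign conventions; a final change of basis by a unitary of the form $U = \tfrac{1}{\sqrt2}\begin{pmatrix} \bbI_n & \bbI_n \\ \bbI_n & -\bbI_n\end{pmatrix}$, exactly as at the end of the proof of Lemma~\ref{lem:normalForm_BDI}, conjugates everything into the stated normal form with $S = \begin{pmatrix}\bbI_n & 0\\ 0 & -\bbI_n\end{pmatrix}$ and $T, C$ both of the form $\begin{pmatrix}0 & \pm K_nJ_n\\ \pm K_nJ_n & 0\end{pmatrix}$.

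The main obstacle I anticipate is bookkeeping the signs and conjugations so that the off-diagonal blocks come out as exactly $K_nJ_n$ rather than, say, $J_nK_n$ or $-K_nJ_n$: since $K_nJ_n = (K_nJ_nK_n)K_n = \overline{J_n}K_n = -J_nK_n$ and $J_n^T = -J_n = J_n^{-1}$, there is genuine room to end up with the wrong sign or transpose, and one must check that the anti-linearity of $K_n$ is carried correctly through the conjugation by $U$ (i.e. that $U^{-1}CU$ picks up a $\overline{U}$ on one side). I would resolve this by verifying directly that the claimed operators satisfy $T^2 = -\bbI$, $C^2 = +\bbI$, $TC = -CT$, and $TC = S$ with $S$ as stated — these algebraic identities pin down the normal form up to a change of basis commuting with $S$, which is harmless for the intended application — rather than tracking every sign through the inductive construction. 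Once the normal form is in hand, the subsequent translation of the $T$-symmetry of $P$ into a condition on the associated $Q \in \U(n)$ (via Lemma~\ref{lem:formS}) follows by the same direct computation as in classes $\rBDI$ and $\rCI$.
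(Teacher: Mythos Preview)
Your overall strategy matches the paper's: restrict $T$ to $\Ran P_0$ (getting $n=2m$ and a $J_nK_n$ form there via Lemma~\ref{lem:formTodd}), extend the basis by setting $\phi_{n+j}:=C\phi_j$, then conjugate to the final form. But there is a concrete error in your intermediate computation. Since $T^{-1}P_0T = P_0$ at $k=0$, $T$ commutes with $P_0$ and therefore preserves both $\Ran P_0$ and $\Ran(\bbI-P_0)$. Your own calculation $T\phi_{n+j} = -CT\phi_j$ confirms this: $T\phi_j$ lies in the first block, and applying $C$ sends it to the second block, so $T\phi_{n+j}$ lies in the \emph{second} block. Hence $T$ is block-\emph{diagonal} in this intermediate basis, not block-off-diagonal as you claim. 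With $T$ block-diagonal and $C = \begin{pmatrix}0 & K_n \\ K_n & 0\end{pmatrix}$, the chiral operator is $S = TC = \begin{pmatrix}0 & J_n \\ -J_n & 0\end{pmatrix}$, which is block-off-diagonal.

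This propagates to the final change of basis. The $\rBDI$-style unitary $U = \tfrac{1}{\sqrt2}\begin{pmatrix}\bbI_n & \bbI_n \\ \bbI_n & -\bbI_n\end{pmatrix}$ diagonalizes $\begin{pmatrix}0 & \bbI_n \\ \bbI_n & 0\end{pmatrix}$, but it does \emph{not} diagonalize $\begin{pmatrix}0 & J_n \\ -J_n & 0\end{pmatrix}$: a direct computation gives $U^*SU = \begin{pmatrix}0 & -J_n \\ J_n & 0\end{pmatrix}$, still off-diagonal. The eigenspaces of this $S$ mix the two halves of each $n$-block, so a $2\times2$ block unitary with scalar $n\times n$ blocks cannot work; one needs a genuine $4\times4$ block unitary in $m\times m$ blocks, which the paper writes down explicitly. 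Your proposed fallback --- checking that the target operators satisfy $T^2=-\bbI$, $C^2=\bbI$, $TC=-CT$, $TC=S$ --- only verifies that the target form is \emph{consistent}; it does not exhibit a basis realizing it, which is what the lemma asserts.
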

    \begin{proof}
        Let $P_0 \in \rDIII(0, n, 2n)$. Since $T$ is anti-unitary, leaves ${\rm Ran} \, P_0$ invariant, and satisfies $T^2 = - \bbI_{{\rm Ran} \, P_0}$ there, one can apply Lemma~\ref{lem:formTodd} to the restriction of $T$ on ${\rm Ran} \, P_0$. We first deduce that $n = 2m$ is even, and that there is a basis for $\Ran \, P_0$ of the form $\set{\psi_1, \ldots, \psi_{2m}}$, with $\psi_{m+j} = T \psi_j$. Once again we set $\psi_{2m+j} := C \psi_j$. This time, we have $TC = - CT$, so, in the basis $\set{\psi_1, \ldots, \psi_{4m}}$, we have
        \[ 
            T = \begin{pmatrix} K J_n & 0 \\ 0 & -K J_n \end{pmatrix}, \quad 
            C= \begin{pmatrix} 0 & K \\ K & 0 \end{pmatrix} \quad \text{hence} \quad 
            S=TC = \begin{pmatrix} 0 & J_n \\ -J_n & 0 \end{pmatrix}, \]
        A computation reveals that
       \[ 
        U^* \begin{pmatrix} 0 & J_n \\ -J_n & 0 \end{pmatrix} U = \begin{pmatrix} \bbI_n & 0 \\ 0 & - \bbI_n \end{pmatrix}, \quad \text{with} \quad U := \frac{1}{\sqrt{2}} \begin{pmatrix} \bbI_m & 0 & -\bbI_m & 0 \\ 0 & - \bbI_m & 0 & \bbI_m \\ 0 & \bbI_m & 0 & \bbI_m \\ \bbI_m & 0 & \bbI_m & 0  \end{pmatrix},
    \]
    and that $U$ is unitary. With this change of basis, we obtain the result.
    \end{proof}

    In this basis, we have that $T^{-1} P(k) T = P(-k)$ iff the corresponding $Q$ satisfies
    \[
        J_n Q^T(-k) J_n = - Q(k).
    \]
     In dimension $d=0$, the condition becomes $J_n Q^T J_n = - Q$, which can be equivalently rewritten as
    \[ 
        A^T = - A, \quad \text{with} \quad A := Q J_n. 
    \]
    The matrix $A$ is unitary and skew-symmetric, $A \in \U(n) \cap \cA_{n}^\R(\C)$. In particular, the Pfaffian of $A$ is well-defined. In Corollary~\ref{cor:O(d)capA(d)} below, we recall that a matrix $A$ is in $\U(n) \cap \cA_n^\R(\C)$ iff it is of the form
    \[
        A = V^T J_n V, \quad \text{with} \quad V \in \U(n).
    \]
    In addition, we have $A = V_0^T J_n V_0 = V_1^T J_n V_1$ with $V_0, V_1 \in \U(n)$ iff $V_0 V_1^* \in \Sp(m)$. Therefore
    \[
        \boxed{ \rDIII(0, 2m, 4m) \simeq   \U(2m) \cap \cA_{2m}^\R(\C) \simeq \U(2m) /  \Sp(m). }
    \]
    
    \begin{theorem}[Class $\rDIII$]
        The set $\rDIII(d, n, N)$ is non-empty iff $n=2m \in 2\N$ and $N=2n=4m$. 
        \begin{itemize}
            \item The set $\rDIII(0,2m,4m)$ is path-connected.
            \item There is a map $\Index_1^{\rDIII} \colon \rDIII(1,2m,4m) \to \Z_2$ such that $P_0$ is homotopic to $P_1$ in $\rDIII(1,2m,4m)$ iff  $\Index_1^{\rDIII}(P_0) = \Index_1^{\rDIII}(P_1)$.
        \end{itemize}
    \end{theorem}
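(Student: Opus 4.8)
The plan is to parametrize $P\in\rDIII(d,2m,4m)$ by the unitary family $Q(k)\in\U(n)$, $n=2m$, provided by Lemma~\ref{lem:formS}, and to work with $A(k):=Q(k)J_n\in\U(n)$. A direct computation shows that the $T$-symmetry of $P$ is equivalent to $A(-k)=-A(k)^T$; in particular $A(0)$ and $A(\tfrac12)$ are unitary and skew-symmetric, their Pfaffians are well-defined with $|\Pf(A(0))|=|\Pf(A(\tfrac12))|=1$, and $\det Q(k)=\det A(k)$ equals $\Pf(A(0))^2$, resp.\ $\Pf(A(\tfrac12))^2$, at $k\in\{0,\tfrac12\}$. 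For $d=0$ the statement is then immediate: the boxed identification $\rDIII(0,2m,4m)\simeq\U(2m)\cap\cA_{2m}^\R(\C)$ together with Corollary~\ref{cor:O(d)capA(d)} writes every such $A$ as $V^TJ_nV$ with $V\in\U(2m)$, and connecting $V_0$ to $V_1$ inside the connected group $\U(2m)$ by an exponential path (as in the proof of Theorem~\ref{thm:AI}) and pushing forward yields a path from $A_0$ to $A_1$.

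For $d=1$ I would take as index
\[
\Index_1^{\rDIII}(P):=\exp\left(\frac{\ri}{2}\bigl(\theta(\tfrac12)-\theta(0)\bigr)\right)\cdot\frac{\Pf(A(0))}{\Pf(A(\tfrac12))}\ \in\ \{\pm1\}\simeq\Z_2,
\]
where $\theta:[0,\tfrac12]\to\RR$ is any continuous lift of $k\mapsto\arg\det Q(k)$; it exists since $[0,\tfrac12]$ is simply connected, and $\theta(\tfrac12)-\theta(0)$ does not depend on the lift. Squaring the right-hand side gives $\det Q(\tfrac12)\,\det Q(0)^{-1}\,\Pf(A(0))^2\,\Pf(A(\tfrac12))^{-2}=1$, so $\Index_1^{\rDIII}(P)$ is indeed $\pm1$. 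Since $\theta(\tfrac12)-\theta(0)$, $\Pf(A(0))$ and $\Pf(A(\tfrac12))$ all depend continuously on $P$ (for the first, by joint continuity of the lift over a simply connected domain), $\Index_1^{\rDIII}$ is locally constant, and hence constant on path components; in particular $P_0$ homotopic to $P_1$ forces $\Index_1^{\rDIII}(P_0)=\Index_1^{\rDIII}(P_1)$.

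For the converse I would invoke Lemma~\ref{lem:Ch1d}. Its first two bullets hold by the path-connectedness of $\rDIII(0,2m,4m)$ just proved, so it remains to choose interpolations $Q_s(0),Q_s(\tfrac12)$ inside $\rDIII(0,2m,4m)$ realizing ${\rm Winding}(\partial\Omega_0,Q)=0$ when the two indices agree. The crucial observation is that along any path in $\rDIII(0,2m,4m)$ one has $\det Q_s=\Pf(A_s)^2$, so the variation of $\arg\det$ along such a path is twice that of $\arg\Pf(A_s)$; choosing mutually compatible continuous branches of $\arg\det Q_0$ and $\arg\det Q_1$ on $[0,\tfrac12]$ and of $\arg\Pf(A_s(0))$ and $\arg\Pf(A_s(\tfrac12))$ on $[0,1]$, and bookkeeping the total angle of $\det Q$ around the loop $\partial\Omega_0$ (with $\Omega_0$ as in~\eqref{eq:Omega0}), one finds ${\rm Winding}(\partial\Omega_0,Q)=a-b$ with integers $a,b$ satisfying $(-1)^a=\Index_1^{\rDIII}(P_0)$ and $(-1)^b=\Index_1^{\rDIII}(P_1)$ — this is precisely the identity that motivates the formula for the index. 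Thus equality of the indices makes ${\rm Winding}(\partial\Omega_0,Q)$ even; and this winding can be shifted by $\pm2$ without moving the endpoints, by lifting $A_s(0)=V_s^TJ_nV_s$ with $V_s\in\U(2m)$ and replacing $V_s$ by ${\rm diag}(\re^{\pm2\pi\ri s},1,\dots,1)V_s$, which multiplies $\Pf(A_s(0))=\pm\det V_s$ by a full turn and hence $\det Q_s(0)$ by two. Arranging ${\rm Winding}(\partial\Omega_0,Q)=0$ and applying Lemma~\ref{lem:Ch1d} then produces the desired homotopy.

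The delicate point is this angle bookkeeping on $\partial\Omega_0$: one must pin down correctly the relation between the ``semiwinding'' of $\det Q$ along $k\in[0,\tfrac12]$ and the endpoint Pfaffians, and observe that the parity of ${\rm Winding}(\partial\Omega_0,Q)$ is insensitive to the choice of interpolations, precisely because changing an interpolation alters it by the winding of $\det=\Pf^2$ around a loop in $\rDIII(0,2m,4m)$, hence by an even integer. Everything else mirrors the treatment of $\rBDI$ in Section~\ref{ssec:BDI} and of the non-chiral real classes.
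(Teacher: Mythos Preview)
Your proposal is correct and follows essentially the same route as the paper: the same parametrization by $A(k)=Q(k)J_n$, an equivalent $\Z_2$ index (your formula coincides with the paper's~\eqref{eq:def:Index_DIII_1} once one uses $\Pf(A(0))^2=\re^{\ri\theta(0)}$), the same parity argument for ${\rm Winding}(\partial\Omega_0,Q)$ via $\det Q_s=\Pf(A_s)^2$ along the $s$-edges, and the same curing of the even winding by twisting the lift $V_s\in\U(2m)$. The only cosmetic difference is that the paper establishes the parity by an explicit phase-tracking identity $\re^{\ri\pi W}=\eps_0\eps_1$, while you package it as $W=a-b$ with $(-1)^a=\eps_0$, $(-1)^b=\eps_1$.
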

    The index $\Index_1^{\rDIII}$ is defined below in~\eqref{eq:def:Index_DIII_1}. It matches the usual Teo-Kene formula in~\cite[Eqn. (4.27)]{Teo_2010}.
    \begin{proof}
        For the first part, it is enough to connect the corresponding matrices $V$'s in $\U(n)$, which is path-connected. 
        
        Let us focus on the case $d = 1$. Let $P(k) \in \rDIII(1, 2m, 4m)$ with corresponding matrices $Q(k) \in \U(2m)$ and $A(k) := J_{2m}^T Q(k) \in \U(2m) \cap \mathcal{A}^\R_{2m}(\C)$. Let $\alpha(k) : [0, 1/2] \to \R$ be a continuous phase so that
    \[
    \forall k \in [0, \tfrac12], \quad \det \ A(k) = \re^{ \ri \alpha(k)}.
    \]
    For $k_0 \in \{ 0, 1/2 \}$, $A(k_0)$ is anti-symmetric, so we can define its Pfaffian, which satisfies $\Pf( A(k_0))^2 = \det \ A(k_0) = \re^{ \ri \alpha(k_0)}$. Taking square roots shows that there are signs $\sigma_0, \sigma_{1/2} \in \{ \pm 1 \}$ so that
    \[
    \Pf \ A(0) = \sigma_0 \re^{ \ri \tfrac12 \alpha(0)}, \quad \text{and} \quad
    \Pf \ A(\tfrac12) = \sigma_{1/2} \re^{ \ri \tfrac12 \alpha(\tfrac12)}.
    \]
    We define the Index as the product of the two signs $\sigma_0 \cdot \sigma_{1/2}$. Explicitly,
    \begin{equation} \label{eq:def:Index_DIII_1}
    \Index_1^\rDIII(P) := \dfrac{\re^{ \ri \tfrac12 \alpha(0)}} {\Pf \, A (0)} \cdot  \dfrac{\re^{ \ri \tfrac12 \alpha(\frac12)}} {\Pf \, A (\frac12)} \quad \in \{ \pm 1\}.
    \end{equation}
    Note that this index is independent of the choice of the lifting $\alpha(k)$. Actually, this index is $1$ if, by following the continuous map $\re^{\ri \tfrac12 \alpha(k)}$, that is a continuous representation of $\sqrt{\det(A(k))}$, one goes from $\Pf \, A(0)$ to $\Pf \, A(\tfrac12)$, and is $-1$ if one goes from $\Pf \, A(0)$ to $-\Pf \, A(\tfrac12)$.
    
    Let us prove that if $P_0, P_1 \in \rDIII(1, 2m, 4m)$, then $\Index_1^\rDIII(P_0) = \Index_1^\rDIII(P_1)$ iff there is an homotopy between the two maps. Let $V_0(0), V_1(0) \in \U(n)$ be so that
    \[
         A_0(0) = V_0(0)^T J_n V_0(0), \quad \text{and} \quad  A_1(0) = V_1(0)^T J_n V_1(0).
    \]
    Let $V_s(0)$ be a homotopy between $V_0(0)$ and $V_1(0)$ in $\U(n)$, and set 
    \[
        A_s(0) := V_s(0)^T J_n V_s(0).
    \]
    This gives a homotopy between $A_0(0)$ and $A_1(0)$ in $\rDIII(0, n, 2n)$. We construct similarly a path $A_s(\tfrac12)$ connecting $A_0(\tfrac12)$ and $A_1(\tfrac12)$ in $\rDIII(0, n, 2n)$. 

    Define continuous phase maps $\alpha_0(k)$, $\widetilde{\alpha_s}(\tfrac12)$, $\alpha_1(k)$ and $\widetilde{\alpha_s}(0)$ so that
        \[
        \forall k \in [0, \tfrac12], \quad \det \ A_0(k) = \re^{ \ri \alpha_0(k)} \quad \text{and} \quad
        \det \ A_1(k) = \re^{ \ri \alpha_1(k)},
        \]
        while
        \[
        \forall s \in [0, 1], \quad \det \ A_s(0) = \re^{ \ri \widetilde{\alpha_s}(0)} \quad \text{and} \quad
        \det \ A_s(\tfrac12) = \re^{ \ri \widetilde{\alpha_s}(\tfrac12)},
        \]
        together with the continuity conditions
        \[
        \alpha_0(k=\tfrac12) = \widetilde{\alpha_{s = 0}}(\tfrac12), \quad
        \widetilde{\alpha_{s=1}}(\tfrac12) = \alpha_1(k = \tfrac12), \quad \text{and} \quad
        \alpha_1(k=0) = \widetilde{\alpha_{s = 1}}(0).
        \]
        With such a choice, the winding of $\det(A)$ along the loop $\partial \Omega_0$ is
        \[
        W := \dfrac{1}{2 \pi} \left[ \widetilde{\alpha_0}(0) - \alpha_0(0) \right] \in \Z.
        \]
        We claim that $W \in 2 \Z$ is even iff $\Index_1^\rDIII(P_0) = \Index_1^\rDIII(P_1)$. The idea is to follow a continuation of the phase of $\sqrt{ \det \, A}$ along the boundary. For $j \in \{ 0, 1\}$, we denote by $\eps_j := \Index_1^\rDIII(P_j)$ the index for the sake of clarity.
        
        By definition of the Index, we have 
        \[
            \dfrac{ \re^{ \ri \frac12 \alpha_0(\tfrac12)} } { \Pf \, A_0(\tfrac12)} =  \dfrac{ \re^{ \ri \frac12 \alpha_0(0)} } { \Pf \, A_0(0) } \, \eps_0, \quad \text{and, similarly, } \quad
            \dfrac{ \re^{ \ri \frac12 \alpha_1(\tfrac12)} } { \Pf \, A_1(\tfrac12)} =  \dfrac{ \re^{ \ri \frac12 \alpha_1(0)} } { \Pf \, A_1(0) } \, \eps_1
        \]
        On the segment $(k,s) = \{ \tfrac12 \} \times [0, 1]$, the map $s \mapsto \Pf \, A_s(\tfrac12)$ is continuous, and is a continuous representation of the square root of the determinant. So
        \[
            \dfrac{\re^{ \ri \tfrac12 \widetilde{\alpha_0}(\tfrac12)} }{\Pf \, A_0(\tfrac12)} =  
            \dfrac{\re^{ \ri \tfrac12 \widetilde{\alpha_1}(\tfrac12)} }{\Pf \, A_1(\tfrac12)} ,
            \quad \text{and similarly,} \quad
             \dfrac{\re^{ \ri \tfrac12 \widetilde{\alpha_0}(0)} }{\Pf \, A_0(0)} =  
            \dfrac{\re^{ \ri \tfrac12 \widetilde{\alpha_1}(0)} }{\Pf \, A_1(0)}.
        \]
        Gathering all expressions, and recalling the continuity conditions, we obtain
        \[
             \dfrac{ \re^{ \ri \tfrac12 \widetilde{\alpha_0}(0)} } { \Pf \, A_0(0) } = \eps_0 \eps_1  \dfrac{ \re^{ \ri \tfrac12 \alpha_0(0)} } { \Pf \, A_0(0) }, \quad \text{so} \quad \re^{ \ri \pi W} = \eps_0 \eps_1.
        \]
        This proves our claim.

        If the indices differ, then we have $\eps_0 \eps_1 = - 1$, hence $W$ is odd. In particular, $W \neq 0$, and one cannot find an homotopy in this case. Assume now that that two indices are equal, so that $\eps_0 \eps_1 = 1$ and $W \in 2 \Z$ is even. There is no reason {\em a priori} to have $W = 0$, but we can cure the winding. Indeed, we set 
        \[
        \widetilde{A}_s(0) := \widetilde{V}_s(0)^T J_n \widetilde{V}_s(0), \quad \text{with} \quad
        \widetilde{V}_s(0) := {\rm diag} ( \re^{ \ri \pi W s}, 1, \cdots, 1) V_s(0).
        \]
        The family $A_s(0)$ is a continuous family connecting $A_0(0)$ and $A_1(0)$ in $\rDIII(0, n, 2n)$. In addition, we have $\det \, \widetilde{A}_s(0) = \re^{ 2 \ri \pi W s } \det \, A_s(0)$, so this new interpolation cures the winding. Invoking Lemma~\ref{lem:Ch1d} concludes the proof.
    \end{proof}

    \subsection{Class $\rCII$}
    \label{ssec:CII}
    
    Finally, it remains to study the class $\rCII$, in which we have both $T^2 = -\bbI_\cH$ and $C^2 = -\bbI_\cH$. 
    \begin{lemma}
         Assume $\rCII(0, n, N)$ is non empty. Then $n = 2m$ is even, and $N = 2n = 4m$ is a multiple of $4$. There is a basis of $\cH$ in which
        \[
        T = \begin{pmatrix}
             -K_n J_n & 0 \\ 0 & - K_n J_n
        \end{pmatrix}, 
        \quad
        C = \begin{pmatrix}
            K_n J_n & 0 \\ 0 & - K_n J_n
        \end{pmatrix}, \quad \text{and} \quad
        S = \begin{pmatrix}
            \bbI_{n} & 0  \\ 0 & -\bbI_{n}
        \end{pmatrix}.
        \]
    \end{lemma}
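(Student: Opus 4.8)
The plan is to follow the proof of Lemma~\ref{lem:normalForm_DIII} essentially word for word; the only structural difference is that here $\eps_T = \eps_C = -1$, so by Remark~\ref{rmk:TC=sigmaCT} the operators $T$ and $C$ \emph{commute}, $TC = CT$, instead of anticommuting as in $\rDIII$, and consequently each of $T$ and $C$ also commutes with $S = TC$.

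First I would fix $P_0 \in \rCII(0,n,N)$. From the $T$-symmetry at $k=0$, $T^{-1}P_0 T = P_0$, so $T$ leaves $\Ran P_0$ invariant and restricts there to an anti-unitary squaring to $-\bbI$. Applying Lemma~\ref{lem:formTodd} to $T|_{\Ran P_0}$ gives that $n = \dim \Ran P_0$ is even, $n = 2m$, together with an orthonormal basis $\set{\psi_1, \dots, \psi_{2m}}$ of $\Ran P_0$ with $\psi_{m+j} = T\psi_j$ for $1 \le j \le m$. Setting $\psi_{2m+j} := C\psi_j$ for $1 \le j \le 2m$ and using that $C$ is anti-unitary with $C(\Ran P_0) = \Ran(\bbI_\cH - P_0)$, while $\cH = \Ran P_0 \oplus \Ran(\bbI_\cH - P_0)$, the family $\set{\psi_1, \dots, \psi_{4m}}$ is an orthonormal basis of $\cH$; in particular $N = 2n = 4m$.

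Next I would compute the matrices of $T$, $C$ and $S = TC$ in this basis, exactly as in Lemma~\ref{lem:normalForm_DIII}, using $\psi_{m+j} = T\psi_j$, $\psi_{2m+j} = C\psi_j$, $T^2 = C^2 = -\bbI_\cH$ and, now, $CT = TC$ (the commutation enters precisely when expressing $C\psi_{m+j}$ and $C\psi_{3m+j}$). One finds that, in the two size-$n$ blocks, $T$ is block-diagonal with equal blocks (since the second half of the basis is the $C$-image of the first and $C$ commutes with $T$), $C$ is block-off-diagonal, and $S = TC$ takes, up to an overall sign, the form $\begin{pmatrix} 0 & J_n \\ -J_n & 0 \end{pmatrix}$ encountered at the corresponding step of the proof of Lemma~\ref{lem:normalForm_DIII}. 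Conjugating by the same unitary $U$ as in that proof then sends $S$ to $\begin{pmatrix} \bbI_n & 0 \\ 0 & -\bbI_n \end{pmatrix}$ (after relabelling the two blocks if necessary). Since $T$ and $C$ commute with this now-diagonal $S$, they become block-diagonal, and a short computation identifies each block with $\pm K_n J_n$; a final rescaling of all basis vectors by $\ri$, which fixes $S$ and negates both $T$ and $C$, normalizes the signs to
\[
  T = \begin{pmatrix} -K_n J_n & 0 \\ 0 & -K_n J_n \end{pmatrix}, \quad
  C = \begin{pmatrix} K_n J_n & 0 \\ 0 & -K_n J_n \end{pmatrix}, \quad
  S = \begin{pmatrix} \bbI_n & 0 \\ 0 & -\bbI_n \end{pmatrix},
\]
as claimed.

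Conceptually there is no real obstacle: the proof is a transcription of Lemma~\ref{lem:normalForm_DIII} with ``anticommute'' replaced by ``commute''. The one delicate point is the sign bookkeeping in the two changes of basis: since $T$ and $C$ are anti-linear, conjugation by a unitary $V$ acts on their matrix parts as $V^*(\cdot)\,\overline{V}$, so one must track which blocks acquire a minus sign, and the identities $T^2 = C^2 = -\bbI_\cH$, $S^2 = \bbI_\cH$, $S = TC$ serve as running consistency checks.
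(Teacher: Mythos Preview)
Your proposal is correct and is precisely the approach the paper intends: the paper's proof reads in full ``The proof is similar to the one of Lemma~\ref{lem:normalForm_DIII}. Details are left to the reader.'' You have supplied exactly those details, correctly isolating the single structural change (here $TC = CT$ rather than $TC = -CT$, so both diagonal blocks of $T$ carry the same sign and the off-diagonal $C$ picks up a sign from $C^2 = -\bbI_\cH$), and your remarks on the final sign normalization are appropriate.
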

\begin{proof}
    The proof is similar to the one of Lemma~\ref{lem:normalForm_DIII}. Details are left to the reader.
\end{proof}
    In this basis, the condition $T^{-1} P(k) T = P(-k)$ reads, in terms of $Q$,
    \[
        J_n \overline{Q}(k) J_n = - Q(-k), \quad \text{or equivalently} \quad Q(k)^T J_n Q(-k) = J_n.
    \]
    In particular, in dimension $d = 0$, we have $Q \in \U(2m) \cap \Sp(2m; C) = \Sp(m)$. So
    \[
        \boxed{ \rCII(0, 2m, 4m) \simeq \Sp(m).}
    \]
    
    \begin{theorem}[Class $\rCII$]
        The set $\rCII(d, n, N)$ is non-empty iff $n=2m \in 2\N$ and $N=2n=4m$. 
        \begin{itemize}
            \item The set $\rCII(0,2m,4m)$ is path-connected. 
            \item Define the map $\Index_1^{\rCII} \colon \rCII(1,2m,4m) \to \Z$ by
            \[
                \forall P \in \rCII(1, 2m, 4m), \quad \Index_1^{\rCII}(P) := {\rm Winding}(\TT^1, Q).
            \]
            Then $P_0$ is homotopic to $P_1$ in $\rCII(1,2m,4m)$ iff $\Index_1^{\rCII}(P_0) = \Index_1^{\rCII}(P_1)$.
        \end{itemize}
    \end{theorem}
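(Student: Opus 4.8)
The plan is to reduce both statements to the identification $\rCII(0,2m,4m) \simeq \Sp(m)$ established just above, together with Lemma~\ref{lem:Ch1d}, exactly as for the other chiral classes. The non-emptiness claim has already been obtained for $d=0$; for $d=1$ it follows by restricting a family to $k=0$ (and, conversely, by taking constant families). For $d=0$: since $\Sp(m)$ is a compact connected Lie group, every element is the exponential of an element of its Lie algebra, which is a real vector space. So, given $P_0,P_1 \in \rCII(0,2m,4m)$ with associated unitaries $Q_0,Q_1 \in \Sp(m)$, I would write $Q_j = \re^{A_j}$, set $A_s := (1-s)A_0 + sA_1$ and $Q_s := \re^{A_s} \in \Sp(m)$, and recover a path $P_s$ via~\eqref{eq:Q}, mimicking the proof of Theorem~\ref{thm:AII}.

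For $d=1$, the key preliminary observation is a determinantal identity: taking determinants in $Q(k)^T J_n Q(-k) = J_n$ and using $\det J_n = 1$ gives $\det Q(k)\, \det Q(-k) = 1$, hence $\det Q(-k) = \overline{\det Q(k)}$, and in particular $\det Q(0) = \det Q(\tfrac12) = 1$ (consistently with $Q(0),Q(\tfrac12)\in\Sp(m)$). Choosing a continuous lift $\det Q(k) = \re^{\ri\phi(k)}$ normalized so that $\phi$ is odd, one reads off
\[
    {\rm Winding}(\TT^1,Q) \;=\; 2 \cdot \frac{1}{2\pi}\big(\phi(\tfrac12)-\phi(0)\big),
\]
so the full winding of $\det Q$ equals twice the winding of $\det Q$ over the sub-interval $k \in [0,\tfrac12]$ — an integer, since $\det Q(0) = \det Q(\tfrac12) = 1$. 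In particular $\Index_1^{\rCII}$ is well defined, is a homotopy invariant (being a winding number), and takes only even values.

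It then remains to apply Lemma~\ref{lem:Ch1d}: its first two conditions hold automatically because $\rCII(0,2m,4m)$ is connected, so only the winding condition on $\partial\Omega_0$ is at stake. Given $P_0,P_1 \in \rCII(1,2m,4m)$, choose any interpolations $Q_s(0), Q_s(\tfrac12)$ in $\Sp(m)$ (available by the $d=0$ case) and form the loop $Q$ on $\partial\Omega_0$. Along the vertical edges $\{0\}\times[0,1]$ and $\{\tfrac12\}\times[0,1]$ one has $\det Q \equiv 1$, so they contribute nothing, and following $\arg\det Q$ once around $\partial\Omega_0$ then gives
\[
    {\rm Winding}(\partial\Omega_0,Q) \;=\; \tfrac12\big(\Index_1^{\rCII}(P_0) - \Index_1^{\rCII}(P_1)\big),
\]
which is independent of the chosen interpolations. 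Hence, if the two indices differ this winding is nonzero for every admissible choice and Lemma~\ref{lem:Ch1d} rules out a homotopy, while if they agree it vanishes for any choice and Lemma~\ref{lem:Ch1d} directly produces the homotopy — and, in contrast with classes $\rCI$ and $\rDIII$, there is no residual winding to ``cure'' here, precisely because the $\Sp(m)$-valued interpolations do not affect the determinant winding.

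The one delicate step I anticipate is the bookkeeping in the last display: orienting $\partial\Omega_0$ consistently, propagating a single continuous lift of $\arg\det Q$ all the way around, checking that the two vertical edges drop out, and recognizing that what survives is the difference of the sub-interval windings of $Q_0$ and $Q_1$ — which, by the determinantal identity above, is one half the difference of their full windings. Everything else is a straightforward appeal to Lemma~\ref{lem:Ch1d} and to the connectedness of $\Sp(m)$.
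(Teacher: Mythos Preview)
Your proof is correct and follows essentially the same approach as the paper's: both rely on the connectedness of $\Sp(m)$ for $d=0$, invoke Lemma~\ref{lem:Ch1d} for $d=1$, and use that $\det \equiv 1$ on $\Sp(m)$ so that the vertical-edge interpolations $Q_s(0),\,Q_s(\tfrac12)$ contribute nothing to ${\rm Winding}(\partial\Omega_0,Q)$. Your additional observation that $\Index_1^{\rCII}$ takes only even values (via the odd lift $\phi$ and $\det Q(0)=\det Q(\tfrac12)=1$) is correct and is not made explicit in the paper.
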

    \begin{proof}
        We already proved in Theorem~\ref{thm:AII} that $\Sp(m)$ is connected, which yields the first part. 
        
        For the $d = 1$ case, we first note that if $Q \in \Sp(m)$, we have $Q^T J_n Q = J_n$. Taking Pfaffians, we get $\det(Q) = 1$. As in the proof of Theorem~\ref{th:BDI}, we deduce that any path $Q_s(0)$ connecting $Q_0(0)$ and $Q_1(0)$ in $\Sp(m)$ has a determinant constant equal to $1$, hence does not contribute to the winding. The proof is then similar to the one of Theorem~\ref{th:BDI}.
    \end{proof}

    
    \appendix

    \section{Matrix factorizations} 
    \label{sec:LA}
    
    In this appendix, we show how to factorize certain classes of matrices we encountered in the main body of the paper. The first result has been discovered many times, and is known as the Autonne--Tagaki factorization~\cite{autonne1915matrices}. The proof we present is found in \cite[Cor.~4.4.4]{Horn} for the complex case, and in \cite{FassbenderBig} for the symplectic case. For the sake of the reader, we give a unified proof which employs also tools from \cite{FassbenderSmall} in the symplectic case.
    
    Recall that we denote by $\cS_{n}^\R(\C)$ the set of $n \times n$ (complex) matrices, symmetric in the sense $A = A^T$, and by $\cA_n^\R(\C)$ the anti-symmetric ones, satisfying $A = -A^T$.
    
    \begin{theorem}[Autonne--Tagaki factorization] \label{th:AutonneTagaki}
        Let $A \in \cS_n^\R(\C)$, and let $\Lambda$ be the diagonal matrix composed of the (non-negative) singular values of $A$. Then there is a unitary $U \in \U(n)$ such that $A = U \Lambda U^T$.
        
        If $n=2m$ and $A$ is also symplectic, i.e.\ $A^T J_n A = J_n$, then $U$ and $\Lambda$ can be chosen to be symplectic as well.
    \end{theorem}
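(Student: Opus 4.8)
The plan is to prove both assertions simultaneously by induction on the matrix size, peeling off the largest singular value (resp.\ a reciprocal pair of singular values) at each step. The starting observation is an elementary reformulation: if $U=(u_1\mid\cdots\mid u_n)\in\U(n)$ has columns $u_j$, then $A=U\Lambda U^T$ with $\Lambda={\rm diag}(\sigma_1,\dots,\sigma_n)$ holds if and only if $Au_j=\sigma_j\overline{u_j}$ for every $j$ (here $\overline u:=K_n u$); this follows by writing $A=\sum_j(Au_j)u_j^*$ and comparing entries. So it suffices to build an orthonormal basis $(u_j)$ of $\C^n$ with $Au_j=\sigma_j\overline{u_j}$, $\sigma_j\ge 0$, and the entire content is the existence of a single unit vector $u_1$ with $Au_1=\sigma_1\overline{u_1}$, where $\sigma_1:=\|A\|_{\rm op}$ is the top singular value.

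To produce $u_1$: if $A=0$ this is vacuous; otherwise let $v$ be a unit eigenvector of the positive Hermitian matrix $A^*A=\overline A\,A$ for its largest eigenvalue $\sigma_1^2$, and set $w:=\overline{Av}$, so $\|w\|=\sigma_1$. Using $A^T=A$, hence $\overline A=A^*$, a one-line computation yields $\overline{Aw}=\sigma_1^2 v$, and therefore $u:=\sigma_1 v+w$ satisfies $\overline{Au}=\sigma_1 u$, i.e.\ $Au=\sigma_1\overline u$. If $u\ne 0$ we normalise it by the real number $\|u\|$; the only way $u=0$ can occur is $Av=-\sigma_1\overline v$, in which case $u_1:=\ri v$ works. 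Now pick any $W\in\U(n)$ with first column $u_1$. From $Au_1=\sigma_1\overline{u_1}$, orthonormality of the columns of $W$, and symmetry of $W^TAW$, one checks that its first row and column are $\sigma_1 e_1$, so $W^TAW=\sigma_1\oplus A'$ with $A'\in\cS_{n-1}^\R(\C)$; since $W$ and $W^T$ are unitary, $A'$ has singular values $\sigma_2,\dots,\sigma_n$. The induction hypothesis gives $A'=U'\Lambda'U'^T$, hence $A=U\Lambda U^T$ with $U:=W\,{\rm diag}(1,U')\in\U(n)$ and $\Lambda:={\rm diag}(\sigma_1,\Lambda')$. This proves the first assertion.

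For the symplectic refinement, assume $n=2m$ and $A^TJ_{2m}A=J_{2m}$; then $A$ is invertible (so $\sigma_1>0$), and, by a standard fact, its singular values come in reciprocal pairs. Construct $u_1$ as above and put $u_{m+1}:=\cJ u_1$, where $\cJ:=J_{2m}K_{2m}$ is the anti-unitary with $\cJ^2=-\bbI$ characterised by $\Sp(m)=\{U\in\U(2m):U\cJ=\cJ U\}$. Then $\langle u_1,u_{m+1}\rangle=0$ automatically (the computation of Lemma~\ref{lem:formTodd}), and combining $AJ_{2m}A=J_{2m}$ with $Au_1=\sigma_1\overline{u_1}$ gives $Au_{m+1}=\sigma_1^{-1}\overline{u_{m+1}}$. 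Extend $\{u_1,u_{m+1}\}$ to an orthonormal basis $u_1,\dots,u_m,u_{m+1}=\cJ u_1,\dots,u_{2m}=\cJ u_m$ exactly as in the proof of Lemma~\ref{lem:formTodd}; the matrix $W$ with these columns then lies in $\Sp(m)$. As before, rows and columns $1$ and $m+1$ of $W^TAW$ are $\sigma_1 e_1$ and $\sigma_1^{-1}e_{m+1}$, so $W^TAW$ equals ${\rm diag}(\sigma_1,\sigma_1^{-1})$ on the coordinates $\{1,m+1\}$ plus a symmetric block $A'$ on the coordinates $\{2,\dots,m,m+2,\dots,2m\}$. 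Since $W^TAW$ is symplectic (using that $\Sp(2m;\C)$ is closed under transpose) and the $2\times2$ block on $\{1,m+1\}$ is symplectic for $J_2$, the block $A'$ is symplectic for the induced form $J_{2(m-1)}$; its singular values are those of $A$ with the pair $\{\sigma_1,\sigma_1^{-1}\}$ removed. Induction on $m$ gives $A'=U'\Lambda'U'^T$ with $U'\in\Sp(m-1)$, and reassembling yields $A=U\Lambda U^T$ with $U\in\Sp(m)$ and $\Lambda={\rm diag}(\sigma_1,\dots,\sigma_m,\sigma_1^{-1},\dots,\sigma_m^{-1})$ symplectic.

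None of the individual steps is deep. The points that need care are the explicit construction of $u_1$, including the harmless degenerate case $\sigma_1 v+w=0$; and, in the symplectic step, ordering the basis so that $J_{2m}$ splits compatibly into $J_2$ on $\{1,m+1\}$ and $J_{2(m-1)}$ on the remaining coordinates --- this is exactly what ensures the deflated block $A'$ genuinely belongs to $\Sp(2(m-1);\C)$. One also relies on two routine facts about complex symplectic matrices, namely invariance of $\Sp(2m;\C)$ under transposition and reciprocity of its singular values; I expect the bookkeeping in the symplectic deflation to be the main (mild) obstacle.
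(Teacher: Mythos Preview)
Your argument is correct in outline and takes a genuinely different route from the paper's proof. The paper diagonalises $H=A^*A$ once as $H=W\Lambda^2W^*$, sets $L:=W^TAW$, observes that $L$ is symmetric and normal with $L^*L=LL^*=\Lambda^2$, and extracts the factorisation from the polar decomposition $L=V\Lambda$ by taking a diagonal square root of the diagonal unitary $V$; the symplectic refinement is then obtained by invoking that each of these steps (spectral decomposition of $H$, polar decomposition of $L$) can be carried out symplectically, citing~\cite{FassbenderSmall}. Your approach instead peels off one singular value at a time (a reciprocal pair in the symplectic case) by an explicit eigenvector construction followed by deflation. The payoff is that your symplectic argument is entirely self-contained---you never need the symplectic spectral theorem or symplectic polar decomposition as black boxes---whereas the paper's proof is slicker in the complex case but outsources the symplectic refinement.

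Two minor bookkeeping slips to watch. First, the equivalence ``$A=U\Lambda U^T\iff Au_j=\sigma_j\overline{u_j}$'' is off by a conjugate: from $A=U\Lambda U^T$ one gets $A\overline{u_j}=\sigma_j u_j$, so your constructed basis actually produces $A=\overline U\,\Lambda\,\overline U^{\,T}$, and accordingly the unitary assembled at the end of the induction should be $\overline W\,{\rm diag}(1,U')$ rather than $W\,{\rm diag}(1,U')$. Second, with the paper's convention $J_{2m}=\begin{pmatrix}0&\bbI_m\\-\bbI_m&0\end{pmatrix}$, the matrix $W$ whose columns are $(u_1,\dots,u_m,\cJ u_1,\dots,\cJ u_m)$ satisfies $W^TJ_{2m}W=-J_{2m}$ rather than $+J_{2m}$; replacing $u_{m+j}$ by $-\cJ u_j$ restores $W\in\Sp(m)$ and leaves the key relation $Au_{m+1}=\sigma_1^{-1}\overline{u_{m+1}}$ intact. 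Neither point affects the structure of the argument.
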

    
    The above factorization is \emph{not} a spectral decomposition, which involves \emph{similarities} of the form $U^* A U$ rather than \emph{congruences} of the form $U^T A U$. 
    
    \begin{proof}
        By definition, singular values of $A$ are the (non-negative) square roots of the eigenvalues of the positive Hermitian operator $H := A^* A$. The operator $H$ is hermitian, hence diagonalizable, of the form $H = W \Lambda^2 W^*$ for some $W \in \U(n)$. Define
        \[ L := W^T A W \]
        and observe that $L^T = L$. Using the unitarity of $W$ and the symmetry of $A$, we have
        \begin{align*}
            L^* L & = W^* A^* \overline{W} W^T A W = W^* A^* A W = \Lambda^2, \\
            L L^* & = W^T A W W^* A^* \overline{W}  = W^T A \overline{A} \overline{W} = \overline{W^* A^* A W} = \overline{\Lambda^2} = \Lambda^2,
        \end{align*}
        as $\Lambda^2$ is real-valued. So $L L^* = L^* L$, {\em i.e.} the operator $L$ is normal, thus admits a polar decomposition $L = V P$ with $P := (L^* L)^{1/2} = \Lambda$ and $V \in \U(n)$ which \emph{commute} among each other. As $\Lambda$ is diagonal and $V$ commutes with it, we may choose $V$ to be diagonal as well. In particular, it is symmetric, $V = V^T$. If $V = {\rm diag}(\re^{\ri \phi_1}, \ldots, \re^{\ri \phi_n})$, denote by $V^{1/2} := {\rm diag}(\re^{\ri \phi_1/2}, \ldots, \re^{\ri \phi_n/2})$. This gives
        \[ W^T A W = L = V \Lambda = V^{1/2} \Lambda V^{1/2} \quad \text{hence} \quad A = U \Lambda U^T \quad \text{with} \quad U := \overline{W} V^{1/2} . \]
        
        In the symplectic case, we can prove that each matrix appearing previously can be chosen symplectic as well. For the matrices $W$ and $\Lambda^2$, this follows from~\cite[Prop.~3]{FassbenderSmall}. We directly check that $L$ is symplectic if $A$ is, and its polar decomposition $L = VP$ can be chosen with $V$ and $P$ symplectic~\cite[Prop.~4]{FassbenderSmall}.
    \end{proof}

    We deduce the following useful corollary. The fact that $\Sp(m) \cap \O(2m) \simeq \U(m)$ is proved below in Proposition~\ref{prop:SpO=U}.
    
    \begin{corollary} \label{cor:AutonneTagaki} ~\\
        $\bullet$ A matrix $A$ is in $\U(n) \cap \cS_n^\R(\C)$ iff it is of the form 
        \[
            A = V^T V \quad \text{with} \quad V \in \U(n).
        \]
        In addition, $A = V_0^T V_0 = V_1^T V_1$ with $V_0, V_1 \in \U(n)$ iff $V_0 V_1^* \in \O(n)$. In particular, $\U(n) \cap \cS_n^\R(\C) \simeq \U(n) / \O(n)$.
        
        \smallskip
        
        \noindent $\bullet$ A matrix $A$ is in $\O(n) \cap \cS_n^\R(\C)$ iff there is $0 \le j \le n$ so that $A$ is of the form
        \[
            A = V^T D_j V \quad \text{with} \quad V \in \SO(n), \quad \text{and} \quad D_j := {\rm diag} ( \underbrace{1, \cdots, 1}_{j}, \underbrace{-1, \cdots, -1}_{n-j} ).
        \]
        The set $\O(n) \cap \cS_n^\R(\C)$ has $n+1$ connected components, labeled by the signature.
        \smallskip
        
        \noindent $\bullet$ Assume $n = 2m$. A matrix $A$  is in $\Sp(m) \cap \cS_{2m}^\R(\C)$ iff it is of the form 
         \[
        A = V^T V \quad \text{with} \quad V \in \Sp(m).
        \]
       In addition, $A = V_0^T V_0 = V_1^T V_1$ with $V_0, V_1 \in \Sp(m)$ iff $V_0 V_1^* \in \Sp(m) \cap \O(2m) \simeq \U(m)$. In particular, $\Sp(m) \cap \cS_{2m}^\R(\C) \simeq \Sp(m) / \U(m)$.
    \end{corollary}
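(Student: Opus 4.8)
The plan is to read off all three items directly from the Autonne--Tagaki factorization of Theorem~\ref{th:AutonneTagaki}, exploiting the elementary fact that a unitary (resp.\ compact-symplectic, resp.\ orthogonal) matrix has all singular values equal to $1$, so that the diagonal factor in $A = U\Lambda U^T$ collapses to the identity. For the first bullet, if $A \in \U(n)\cap\cS_n^\R(\C)$ then $A^*A = \bbI_n$ forces $\Lambda = \bbI_n$ in Theorem~\ref{th:AutonneTagaki}, hence $A = UU^T$ with $U \in \U(n)$; setting $V := U^T \in \U(n)$ gives $A = V^TV$. Conversely, for $A = V^TV$ with $V$ unitary, $A^T = A$ is immediate and a short computation with conjugates and transposes shows $A^*A = \bbI_n$, so $A \in \U(n)\cap\cS_n^\R(\C)$. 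For the fibre description, $V_0^TV_0 = V_1^TV_1$ is equivalent to $R^TR = \bbI_n$ for $R := V_0V_1^*$; since $R$ is already unitary this says $\overline R = R$, i.e.\ $R \in \O(n)$. Thus $V \mapsto V^TV$ has fibres equal to the left cosets $\O(n)\,V$, descends to a continuous bijection $\O(n)\backslash\U(n) \to \U(n)\cap\cS_n^\R(\C)$ (a homeomorphism by compactness), and $\O(n)\backslash\U(n) \cong \U(n)/\O(n)$ via inversion.

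The symplectic bullet is obtained by running the same argument with the symplectic refinement of Theorem~\ref{th:AutonneTagaki}: for $A \in \Sp(m)\cap\cS_{2m}^\R(\C)$ one gets $A = UU^T$ with $U$ unitary and symplectic. Here I need the small remark that $\Sp(2m;\C)$ is stable under transposition --- from $U^TJU = J$ one deduces $UJU^T = J$ --- so that $V := U^T$ lies in $\Sp(m) = \Sp(2m;\C)\cap\U(2m)$ and $A = V^TV$; the converse check uses $A^TJA = V^T(VJV^T)V = V^TJV = J$ (together with symmetry and unitarity as before). The fibre of $V \mapsto V^TV$ is then a left coset of $R := V_0V_1^* \in \Sp(m)\cap\O(2m)$, which is homeomorphic to $\U(m)$ by Proposition~\ref{prop:SpO=U}, giving $\Sp(m)\cap\cS_{2m}^\R(\C) \cong \Sp(m)/(\Sp(m)\cap\O(2m)) \cong \Sp(m)/\U(m)$.

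For the orthogonal bullet, $A \in \O(n)\cap\cS_n^\R(\C)$ is a real symmetric involution ($A^2 = A^TA = \bbI_n$), so the real spectral theorem (plus a permutation to sort the eigenvalues) yields $A = V^TD_jV$ with $V \in \O(n)$ and $D_j$ the diagonal sign matrix with $j$ entries equal to $+1$, where $j$ is the multiplicity of the eigenvalue $+1$. One upgrades $V$ to $\SO(n)$ by left-multiplying it with ${\rm diag}(1,\dots,1,-1)$, which commutes with the diagonal matrix $D_j$ and hence leaves $V^TD_jV$ unchanged; the converse is immediate. Since $\Tr(A) = 2j - n$ is a continuous $\Z$-valued invariant separating the strata $\{j = \text{const}\}$, while $\SO(n)$ is path-connected so that one can connect the $V$'s inside a fixed stratum and transport the factorization, there are exactly $n+1$ connected components, indexed by $j$ (equivalently by the signature).

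The main obstacle is not conceptual but bookkeeping: one must verify that the factors delivered by Theorem~\ref{th:AutonneTagaki} really land in the \emph{compact} groups $\U(n)$ and $\Sp(m)$ rather than in a general linear group or in $\Sp(2m;\C)$ (handled by the singular-values-equal-one remark together with transposition-stability of $\Sp(2m;\C)$), and one must keep straight the interplay of $\overline{(\cdot)}$, $(\cdot)^T$ and $(\cdot)^*$ --- in particular the observation that a unitary $R$ with $R^TR = \bbI$ is exactly an orthogonal matrix, which is precisely what turns the fibres of $V \mapsto V^TV$ into cosets of $\O(n)$, resp.\ of $\Sp(m)\cap\O(2m)$.
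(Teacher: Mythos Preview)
Your proof is correct and follows essentially the same route as the paper's: Autonne--Tagaki with $\Lambda=\bbI$ for the unitary and symplectic bullets, and the real spectral theorem for the orthogonal bullet, with the fibre identified via the observation that a unitary $R$ satisfying $R^TR=\bbI$ is real. You supply a few details the paper leaves implicit (the transposition-stability of $\Sp(2m;\C)$ needed so that $V=U^T$ stays in $\Sp(m)$, and the explicit upgrade of $V$ from $\O(n)$ to $\SO(n)$ via ${\rm diag}(1,\dots,1,-1)$), but the strategy is the same.
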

    
    \begin{proof}
        If $A \in \U(n) \cap \cS_n^\R(\C)$, its singular values are all equal to $1$, so the Autonne--Tagaki factorization of $A$ is of the form $A= U U^T = V^T V$ with $V = U^T$. \\
        If $A = V_0^T V_0 = V_1^T V_1$ with $V_0, V_1 \in \U(n)$, then $Z := V_0V_1^*$ is unitary, and satisfies $Z = (Z^*)^T = \overline{Z}$, that is $Z$ is real-valued. So $Z \in \O(n)$ as wanted.\\
        The proof in the symplectic case is similar. \\
        If $A \in \O(n) \cap \cS_n^\R(\C)$, then the usual diagonalization of $A$ shows that $A$ is of the form $A = V^T D V$ with $V \in \SO(n)$, and $D$ the diagonal matrix with the eigenvalues of $A$, counting multiplicities, and ranked in decreasing order. Since the eigenvalues of $A \in \O(n)$ are $\pm 1$, we have $D = D_j$ with $j = \dim \, \Ker(A - 1)$. If $A_0$ and $A_1$ have different signatures, they are in different connected components of $\O(n) \cap \cS_N^\R(\C)$, and if they have the same signature, they can be connected by connecting the corresponding $V$'s. This concludes the proof.
    \end{proof}
    
    
    The anti-symmetric analogue of the Autonne--Tagaki factorization is known as Hua's decomposition~\cite[Thm.~7]{Hua}. We set $J_2 = \begin{pmatrix}
        0 & 1 \\ - 1 & 0
    \end{pmatrix}$.
    
    \begin{theorem}[Hua's decomposition] \label{th:Hua}
        Let $A \in \cA_n^{\R}(\C)$ and assume that $A$ is non-degenerate. Then $n = 2m$ and all singular values of $A$ have even multiplicities. In addition, if $D$ is the $m \times m$ diagonal matrix, composed of ``half'' the singular values of $A$, then there is $U \in \U(n)$ such that $A = U \left( D \otimes J_2 \right) U^T$. \\
        If $A$ is real-valued, then $U$ is orthogonal.
    \end{theorem}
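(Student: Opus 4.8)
The strategy is to transcribe the proof of the Autonne--Takagi factorization (Theorem~\ref{th:AutonneTagaki}), replacing only its final ``diagonal square-root'' step by an appeal to the normal form for quaternionic structures established in Lemma~\ref{lem:formTodd}.

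\emph{Reduction.} Since $A$ is non-degenerate, $H := A^* A$ is positive definite; diagonalize $H = W \Lambda^2 W^*$ with $W \in \U(n)$ and $\Lambda$ the diagonal matrix of the strictly positive singular values of $A$, ordered so that equal values are adjacent. Put $L := W^T A W$. Exactly as in the proof of Theorem~\ref{th:AutonneTagaki} — the sign in $A^T=-A$ is immaterial, since $\Lambda^2$ is real-valued — one obtains $L^*L = L L^* = \Lambda^2$, so that $L$ is normal, while $A^T = -A$ gives $L^T = -L$. As $A$ is invertible so is $L$, hence its polar decomposition reads $L = V\Lambda$ with $V \in \U(n)$ commuting with $P := (L^*L)^{1/2} = \Lambda$. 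Writing $\Lambda = \bigoplus_b \lambda_b \bbI_{n_b}$ with the $\lambda_b$ pairwise distinct, commutation forces $V = \bigoplus_b V_b$ to be block-diagonal, and comparing $L^T = -L$ with $L = V\Lambda$ shows each $V_b$ to be an antisymmetric unitary, $V_b^T = -V_b$.

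\emph{Block analysis.} Fix a block and drop subscripts: $V \in \U(n_b)$ with $V^T = -V$. Then $V^* = \overline{V^T} = -\overline V$, so $\overline V = -V^{-1}$ and $V\overline V = -\bbI$; equivalently $V K_{n_b}$ is anti-unitary on $\C^{n_b}$ with $(VK_{n_b})^2 = V\overline V = -\bbI$. By Lemma~\ref{lem:formTodd}, $n_b = 2m_b$ is even, and there is $Y \in \U(n_b)$ in whose associated orthonormal basis $VK_{n_b}$ has matrix $J_{n_b}K_{n_b}$; unwinding the change of basis for anti-linear maps (using $(\overline Y)^{-1} = Y^T$) this says precisely $V = Y J_{n_b} Y^T$. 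Composing $Y$ with the permutation pairing index $i$ with $m_b+i$ turns $J_{n_b}$ into $\bbI_{m_b}\otimes J_2$, so $V_b = Y_b(\bbI_{m_b}\otimes J_2)Y_b^T$ for some $Y_b \in \U(n_b)$. In particular every singular value $\lambda_b$ of $A$ has even multiplicity $n_b$, and $n = \sum_b n_b = 2m$ with $m = \sum_b m_b$.

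\emph{Assembly.} Let $D := \bigoplus_b \lambda_b \bbI_{m_b}$, the $m\times m$ non-negative diagonal matrix carrying ``half'' the singular values (consistently, $(D\otimes J_2)^*(D\otimes J_2) = D^2\otimes\bbI_2$). From $L_b = \lambda_b V_b = Y_b(\lambda_b\bbI_{m_b}\otimes J_2)Y_b^T$ and $\widetilde Y := \bigoplus_b Y_b \in \U(n)$ one gets $L = \widetilde Y(D\otimes J_2)\widetilde Y^T$, whence, using $A = \overline W\, L\, \overline W^T$,
\[ A = U\,(D\otimes J_2)\,U^T, \qquad U := \overline W\,\widetilde Y \in \U(n). \]
If $A$ is real, one repeats the argument over $\R$: $H = A^T A$ is real, $W \in \O(n)$, $L = W^T A W$ is a real antisymmetric normal matrix, and each $V_b \in \O(n_b)$ is antisymmetric with $V_b^2 = -V_b^TV_b = -\bbI$; the real canonical form of the normal matrix $V_b$ (classification of orthogonal complex structures) furnishes $Y_b \in \O(n_b)$ with $V_b = Y_b(\bbI_{m_b}\otimes J_2)Y_b^T$, so that $U = \overline W\,\widetilde Y = W\,\widetilde Y \in \O(n)$.

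\emph{Main difficulty.} The reduction of $A$ to the block-diagonal antisymmetric unitary $V$ is routine and parallels Theorem~\ref{th:AutonneTagaki} line by line. The only substantive point is to recognize that an antisymmetric unitary block carries a quaternionic structure $VK_{n_b}$, and to invoke Lemma~\ref{lem:formTodd} with the correct conjugation bookkeeping so as to land exactly on the congruence $V = Y J_{n_b} Y^T$ (and then on $\bbI_{m_b}\otimes J_2$); getting these conventions right — together with the analogous elementary real normal form for the real case — is where care is needed.
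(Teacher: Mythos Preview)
Your proof is correct, and the reduction step (forming $L = W^T A W$, showing it is normal, taking its polar decomposition) coincides with the paper's. The divergence is in how you finish. The paper first proves the even-multiplicity claim separately, via the Pfaffian identity $\det(A^*A - \lambda\bbI_n) = \Pf(A)^2\,\Pf(A^* - \lambda A^{-1})^2$, and only then sets $\Lambda := D\otimes J_2$ from the outset; it defines $V := L\Lambda^{-1}$, checks $V\Lambda = \Lambda V^T$, and extracts a unitary square root $V^{1/2}$ by diagonalizing $V$ and invoking functional calculus to propagate the intertwining relation, yielding $L = V^{1/2}\Lambda(V^{1/2})^T$. You instead keep $\Lambda$ scalar-diagonal, observe that the unitary factor $V$ is block-diagonal with each block $V_b$ an antisymmetric unitary, and then recognize $V_bK_{n_b}$ as an odd time-reversal operator so that Lemma~\ref{lem:formTodd} both forces $n_b$ even and supplies the congruence $V_b = Y_bJ_{n_b}Y_b^T$.

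What each approach buys: your route is more economical in that the even-multiplicity statement and the factorization come out of the same step (no separate Pfaffian argument), and it recycles a lemma already in the paper rather than a functional-calculus square root; it also makes the real case a genuine argument (real normal form of an orthogonal complex structure) rather than the one-line remark the paper gives. The paper's route is perhaps slicker once $n=2m$ is known, since no block decomposition is needed, and the intertwining relation $V\Lambda = \Lambda V^T$ is handled uniformly by similarity.
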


    Recall that if $n$ is odd, then $A = -A^T$ is never non-degenerate, since $\det(A) = \det(A^T) = (-1)^n \det(A) = - \det(A)$, hence $\det(A) = 0$.
    
    \begin{proof}
        Consider the positive Hermitian matrix $H := A^* A$. Observe that, in view of the non-degeneracy of $A$,
        \begin{align*}
           \det \left( A^*A - \lambda \bbI_n \right) = \det\left( A^* - \lambda A^{-1} \right) \det(A) = \Pf(A)^2 \, \left[\Pf\left( A^* - \lambda A^{-1} \right) \right]^2,
        \end{align*}
        where we used that $A^* - \lambda A^{-1} \in \cA_n^\R(\C)$. Therefore, the characteristic polynomial of $H$ is a perfect square. This implies that its roots ({\em i.e.} the singular values of $A$) are of even multiplicity, and in particular their number, which is $n$, must be even.
        
        Consider now $\Lambda := D \otimes J_2$ as in the statement. If $D = {\rm diag}(\lambda_1, \ldots, \lambda_m)$, then
        \[ \Lambda^* \Lambda = \Lambda^T \Lambda = {\rm diag}(\lambda_1^2, \lambda_1^2, \ldots, \lambda_m^2, \lambda_m^2). \]
        The spectral decomposition of $H$ implies the existence of $W \in \U(m)$ such that $H = W \Lambda^* \Lambda W^*$. As in the proof of Theorem~\ref{th:AutonneTagaki}, set
        \[ L := W^T A W. \]
        Using that $A^T = - A$, we have $L^T = - L$. We also have $L^* L = \Lambda^* \Lambda$, which can be recast as
        \[ (L \Lambda^{-1})^{-1} = \Lambda L^{-1} = (\Lambda^*)^{-1} L^* = (L \Lambda^{-1})^*, \]
        that is, the matrix $L \Lambda^{-1} =: V$ is unitary. The skew-symmetry of $L$ and of $\Lambda$ also gives
        \[ V \Lambda = L = - L^T = - \Lambda^T V^T = \Lambda V^T. \]
        Diagonalize now the unitary matrix $V$:  
        \[ V = \Gamma \Delta \Gamma^{*} \quad \text{with} \quad \Gamma \in \U(n), \quad \Delta := {\rm diag}\left(\re^{\ri \phi_1}, \ldots, \re^{\ri \phi_n}\right). \]
        Set also
        \[ \Delta^{1/2} := {\rm diag}\left(\re^{\ri \phi_1/2}, \ldots, \re^{\ri \phi_n/2}\right), \quad V^{1/2} := \Gamma \Delta^{1/2} \Gamma^{*}. \]
        Clearly, $(V^{1/2})^2 = V$ and, by functional calculus, $V^{1/2} \Lambda = \Lambda (V^{1/2})^T$. We obtain
        \[ 
            W^T A W = L = V \Lambda = V^{1/2} \Lambda (V^{1/2})^T
        \]
        that is 
        \[ 
            A = U \left(  D \otimes J_2 \right) U^T, \quad \text{with} \quad U := (W^T)^* V^{1/2}. 
        \]
        This concludes the proof for the complex case. The real case follows immediately, upon noticing that a real-valued unitary matrix is automatically orthogonal.
    \end{proof}

    \begin{corollary} \label{cor:O(d)capA(d)} ~\\
    $\bullet $ A matrix $A$ is in $\U(2m) \cap \cA_{2m}^\R(\C)$ iff it is of the form 
    \[
        A = V^T J_{2m} V \quad \text{with} \quad V \in \U(2m).
    \]
     In addition, $A = V_0^T J_{2m} V_0 = V_1^T J_{2m} V_1$ with $V_0, V_1 \in \U(2m)$ iff $V_0V_1^* \in \Sp(m)$. In particular, $\U(2m) \cap \cA_{2m}^\R(\C) \simeq \U(2m) / \Sp(m)$. \\
     
    \noindent $\bullet$ A matrix $A$ is in $\O(2m) \cap \cA_{2m}^\R(\C)$ iff it is of the form
     \begin{equation} \label{eq:WAW=J}
        A = W^T J_{2m} W \quad \text{with} \quad W \in \O(2m).
    \end{equation} 
    and, setting $\Lambda_A := {\rm diag}(1, 1, \cdots , 1, \Pf(A)) \otimes J_2$, iff it is also of the form
    \begin{equation} \label{eq:VAV=Pf(A)}
        A = V^T \Lambda_A V \quad \text{with} \quad V \in \SO(2m).
    \end{equation}
    If $A = W_0^T J_{2m} W_0 = W_1^T J_{2m} W_1$ with $W_0, W_1 \in \O(2m)$, then $W_0 W_1^*  \in \Sp(2m;\R) \cap \O(2m) \simeq \U(m)$. In particular, $\O(2m) \cap \cA_{2m}^\R(\C) \simeq \O(2m) / \U(m)$.
    \end{corollary}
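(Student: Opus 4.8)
The plan is to read off both items directly from Hua's decomposition (Theorem~\ref{th:Hua}), the identity $\Pf(U B U^T) = \det(U)\,\Pf(B)$, and an elementary bookkeeping of the fibres of the congruence map $V \mapsto V^T J_{2m} V$.

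\emph{First bullet.} One implication is a direct computation: if $A = V^T J_{2m} V$ with $V \in \U(2m)$, then $A^T = V^T J_{2m}^T V = -A$, while, using $\overline V\, V^T = \bbI$ and $J_{2m}^2 = -\bbI$, one finds $A^* A = -V^* J_{2m}\, \overline V\, V^T J_{2m} V = -V^* J_{2m}^2 V = \bbI$, so $A \in \U(2m) \cap \cA_{2m}^\R(\C)$. Conversely, such an $A$ is invertible, hence non-degenerate, and all its singular values equal $1$; Theorem~\ref{th:Hua} then gives $A = U(\bbI_m \otimes J_2) U^T$ with $U \in \U(2m)$. Since $\bbI_m \otimes J_2 = P^T J_{2m} P$ for a suitable permutation matrix $P$ (which is real orthogonal, hence unitary), we get $A = V^T J_{2m} V$ with $V := P\,U^T \in \U(2m)$. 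For the fibres: if $V_0^T J_{2m} V_0 = V_1^T J_{2m} V_1$, write $V_0 = Z V_1$ with $Z := V_0 V_1^* \in \U(2m)$; then $V_1^T\bigl(Z^T J_{2m} Z - J_{2m}\bigr) V_1 = 0$ forces $Z^T J_{2m} Z = J_{2m}$, so $Z \in \Sp(2m;\C) \cap \U(2m) = \Sp(m)$, and the converse inclusion is immediate. Thus the fibres of $V \mapsto V^T J_{2m} V$ are the left $\Sp(m)$-cosets, which gives $\U(2m) \cap \cA_{2m}^\R(\C) \simeq \U(2m)/\Sp(m)$.

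\emph{Second bullet.} The equivalence with the form $A = W^T J_{2m} W$, $W \in \O(2m)$, is proved exactly as above, using the \emph{real} version of Theorem~\ref{th:Hua} (so $U$, and hence $V = P\, U^T$, is orthogonal) together with a direct check for the reverse implication. To reach the normal form $\Lambda_A$, recall that $\det(A) = \Pf(A)^2 \ge 0$ while $A \in \O(2m)$ forces $\det(A) = \pm 1$, hence $\det(A) = 1$ and $\Pf(A) \in \{\pm 1\}$. From $A = U(\bbI_m \otimes J_2) U^T$, the identity $\Pf(UBU^T) = \det(U)\,\Pf(B)$, and $\Pf(\bbI_m \otimes J_2) = 1$, one obtains the key relation $\det(U) = \Pf(A)$. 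If $\Pf(A) = 1$, then $U \in \SO(2m)$ and $\Lambda_A = \bbI_m \otimes J_2$, so $A = V^T \Lambda_A V$ with $V := U^T \in \SO(2m)$. If $\Pf(A) = -1$, set $R := {\rm diag}(1,\ldots,1,-1)$; a short computation shows $R(\bbI_m \otimes J_2) R = \Lambda_A$ (the reflection flips the sign of the last $J_2$-block) while $\det(UR) = 1$, whence $A = U(R\Lambda_A R)U^T = (UR)\,\Lambda_A\,(UR)^T = V^T \Lambda_A V$ with $V := R\,U^T \in \SO(2m)$. Conversely, $\Lambda_A$ is real, skew-symmetric and orthogonal, and congruence by $V \in \SO(2m)$ preserves these three properties, while $\Pf(V^T \Lambda_A V) = \det(V)\,\Pf(\Lambda_A)$ equals the last diagonal entry of the defining ${\rm diag}$-factor, which justifies writing this entry as $\Pf(A)$. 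Finally, if $W_0^T J_{2m} W_0 = W_1^T J_{2m} W_1$ with $W_0, W_1 \in \O(2m)$, then $Z := W_0 W_1^* = W_0 W_1^T$ is real, orthogonal, and satisfies $Z^T J_{2m} Z = J_{2m}$, i.e.\ $Z \in \Sp(2m;\R) \cap \O(2m)$; since orthogonal matrices are both real and unitary, $\Sp(2m;\R)\cap\O(2m) = \Sp(2m;\C)\cap\O(2m) = \Sp(m)\cap\O(2m) \simeq \U(m)$ by Proposition~\ref{prop:SpO=U}, and the same coset argument yields $\O(2m)\cap\cA_{2m}^\R(\C) \simeq \O(2m)/\U(m)$.

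\emph{Main obstacle.} Once Hua's decomposition is in hand nothing here is deep, but the one point that requires genuine care is the Pfaffian bookkeeping in the orthogonal case: establishing $\det(U) = \Pf(A)$, and then using the reflection $R$ to move $U$ into $\SO(2m)$ at the price of converting $\bbI_m \otimes J_2$ into $\Lambda_A$. One must also keep straight the harmless identification $\Sp(2m;\R)\cap\O(2m) = \Sp(m)\cap\O(2m)$ so that Proposition~\ref{prop:SpO=U} applies verbatim.
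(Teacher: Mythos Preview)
Your proof is correct and follows essentially the same approach as the paper's: apply Hua's decomposition with all singular values equal to $1$, pass from $\bbI_m\otimes J_2$ to $J_{2m}$ by a permutation, identify the fibres via the symplectic condition $Z^T J_{2m} Z = J_{2m}$, and in the orthogonal case use $\Pf(A)=\det(U)$ together with the reflection $R=\mathrm{diag}(1,\ldots,1,-1)$ to land in $\SO(2m)$ at the cost of replacing $\bbI_m\otimes J_2$ by $\Lambda_A$. Your write-up is in fact slightly more explicit than the paper's (you spell out the converse implications and the identity $R(\bbI_m\otimes J_2)R=\Lambda_A$), but the argument is the same.
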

    \begin{proof}
        Let $A \in \U(2m) \cap \cA_{2m}^\R(\C)$. First, since $A \in \U(2m)$, it is invertible, and its singular values are all equal to $1$. The previous result gives $U \in \U(2m)$ so that
        \begin{equation} \label{eq:UAU=J}
            A = U \left( \bbI_m \otimes J_2 \right) U^T.
        \end{equation}
        Upon reshuffling the columns of $\bbI_m \otimes J_2$, we can transform $\bbI_m \otimes J_2$ into $J_{2m}$. We deduce that there is $V \in \U(2m)$ so that $A = V^T J_{2m} V$. If $A = V_0^T J_{2m} V_0 = V_1^T J_{2m} V_1$, with $V_0, V_1 \in \U(2m)$, then $Z := V_0 V_1^*$ is in $\U(2m)$, and satisfies $Z^T J_{2m} Z = J_{2m}$, that is, $Z$ is symplectic.
        
        The proof for~\eqref{eq:WAW=J} is similar. It remains to prove~\eqref{eq:VAV=Pf(A)}. Let $A \in \O(2m) \cap \cA_{2m}^\R(\C)$. Using~\eqref{eq:UAU=J} in the real case, we see that there is $U \in \O(2m)$ so that $A = U^T (\bbI_n \otimes J_2) U$ with $U \in \O(2m)$. We have $\det(U) \in \{ \pm 1\}$, and
        \[ 
            \Pf(A) = \Pf (U^T (\bbI_n \otimes J_2) U) = \det(U) \Pf((\bbI_n \otimes J_2)) = \det(U).
         \]
        If $\det(U) = 1$, we simply take $V := U$, otherwise we take
        \[
        V =  {\rm diag}(1, 1, \cdots, 1, -1) U. 
        \]
    \end{proof}
    
    We end this Appendix with the following result.
    
    \begin{proposition} \label{prop:SpO=U}
        The group $\Sp(2m;\R) \cap \O(2m) = \Sp(2m;\C) \cap \O(2m) \subset \Sp(m)$ is isomorphic to $\U(m)$.
    \end{proposition}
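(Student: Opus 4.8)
The plan is to realize all three groups as the image of $\U(m)$ under the standard embedding of complex $m\times m$ matrices into real $2m\times 2m$ matrices, and to check that this embedding restricts to a group isomorphism. First I would show that the three sets in the statement coincide and are contained in $\Sp(m)$: a real orthogonal matrix is unitary, so $\Sp(2m;\C)\cap\O(2m)\subseteq\Sp(2m;\C)\cap\U(2m)=\Sp(m)$; for a \emph{real} matrix $A$ the relation $A^TJ_{2m}A=J_{2m}$ does not depend on whether $A$ is viewed inside $\cM_{2m}(\R)$ or $\cM_{2m}(\C)$, so $\Sp(2m;\R)\cap\O(2m)=\Sp(2m;\C)\cap\O(2m)$; and when $A\in\O(2m)$ one has $A^T=A^{-1}$, so $A^TJ_{2m}A=J_{2m}$ is equivalent to $AJ_{2m}=J_{2m}A$. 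Hence the common set equals $\set{A\in\O(2m):AJ_{2m}=J_{2m}A}$.

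Next I would parametrize the real $2m\times 2m$ matrices commuting with $J_{2m}$: writing $A$ in $m\times m$ blocks and imposing $AJ_{2m}=J_{2m}A$ forces $A=\begin{pmatrix}P&Q\\-Q&P\end{pmatrix}$ for some $P,Q\in\cM_m(\R)$. Define $\phi:\cM_m(\C)\to\cM_{2m}(\R)$ by $\phi(P+\ri Q):=\begin{pmatrix}P&Q\\-Q&P\end{pmatrix}$. A direct block computation shows that $\phi$ is an injective unital $\R$-algebra homomorphism and that $\phi(Z)^T=\phi(Z^*)$ for every $Z$, where $Z^*=\overline{Z}^T$. Consequently, for $Z\in\cM_m(\C)$ the matrix $\phi(Z)$ is orthogonal iff $\phi(Z^*Z)=\phi(Z)^T\phi(Z)=\bbI_{2m}=\phi(\bbI_m)$, iff $Z^*Z=\bbI_m$ (by injectivity of $\phi$), iff $Z\in\U(m)$. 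Combining with the previous step, $A\in\Sp(2m;\C)\cap\O(2m)$ iff $A=\phi(Z)$ with $Z\in\U(m)$, so $\phi$ restricts to a bijective group homomorphism $\U(m)\xrightarrow{\ \sim\ }\Sp(2m;\C)\cap\O(2m)$, which is the asserted isomorphism.

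There is no genuine obstacle here: the only point requiring care is the bookkeeping of sign conventions in the definition of $\phi$ and the verification that matrix transposition corresponds to the Hermitian adjoint on the complex side, both of which are routine block-matrix manipulations. This is the familiar ``two-out-of-three'' relation $\U(m)\simeq\O(2m)\cap\Sp(2m;\R)$ among classical groups, of which we have merely produced an explicit realization.
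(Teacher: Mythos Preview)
Your proof is correct and follows essentially the same route as the paper: reduce the symplectic condition under orthogonality to commutation with $J_{2m}$, read off the block form $\begin{pmatrix}P&Q\\-Q&P\end{pmatrix}$, and identify this with the unitarity of $P+\ri Q$. Your version is slightly more thorough in that you explicitly justify the equalities $\Sp(2m;\R)\cap\O(2m)=\Sp(2m;\C)\cap\O(2m)\subset\Sp(m)$ and package the correspondence as an algebra homomorphism $\phi$, whereas the paper jumps directly to the block computation; but the substance is identical.
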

    \begin{proof}
        If $U \in \Sp(2m;\R) \cap \O(2m)$, then $J_{2m} = U^T J_{2m} U = U^{-1} J_{2m} U$ hence $U J_{2m} = J_{2m} U$. Decomposing $U$ in block form, this condition is equivalent to
        \[ 
        U = \begin{pmatrix} A & B \\ - B & A \end{pmatrix} = \overline{U} \quad \text{with} \quad A^T A + B^T B = \bbI_m, \quad A^T B - B^T A = 0. 
        \]
        These conditions are equivalent to the fact that the $m \times m$ matrix $V := A + \ri B$ is unitary. Indeed
        \[ (A+\ri B)^* (A+\ri B) = (A^T - \ri B^T)(A + \ri B) = (A^T A + B^T B) + \ri (A^T B - B^T A) = \bbI_m. \]
        This concludes the proof.
    \end{proof}

    \bibliographystyle{my-alpha}
    \bibliography{biblio}
    
\end{document}